\newtheorem{theorem}{Theorem}[section]
\newtheorem{lemma}[theorem]{Lemma}
\newtheorem{proposition}[theorem]{Proposition}
\newtheorem{corollary}[theorem]{Corollary} 
\theoremstyle{definition}
\newtheorem{remark}[theorem]{Remark} 
\numberwithin{equation}{section}
\newcommand{\BLUE}[1]{{\color{blue} \bf #1}}
\def\FF{\mathscr{F}}
\def\psiht{\widehat{\psi}^{\hbar}}
\def\psih{\psi^{\hbar}}
\def\phih{\phi^{\hbar}}
\def\upsilonh{\Upsilon^{\hbar}}
\def\SA{\Szero}
\def\HE{\theta}
\def\vfi{\varphi}
\def\ac{\beta}
\def\z*{\bar z}
\def\S{\mathcal S}
\def\L{\mathsf L}
\def\uno{\mathsf 1}
\def\H{\mathsf H}
\def\C{\mathcal C}
\def\dom{\text{\rm dom}}
\def\ran{\text{\rm ran}}
\def\FF{\mathscr F}
\def\RE{\mathbb R}
\def\CO{{\mathbb C}}
\def\o{$\bar{\text{\rm o}}$}
\def\dd{\displaystyle}
\def\ph*{\phi_\star}
\def\be{\begin{equation}}
\def\ee{\end{equation}}
\def\min{{\rm min}}
\def\max{{\rm max}}
\def\-{{\rm in}}
\def\+{{\rm ex}}
\newcommand{\sgn}{\operatorname{sgn}}
\def\ve{\lambda}
\def\Im{\operatorname{Im}}
\def\Re{\operatorname{Re}}
\def\sigmap{\breve{\sigma}}
\def\sigmaq{\sigma}
\def\sigmazero{\sigma_0}
\def\Szero{A}
\title[The semi-classical limit with delta potentials]{The semi-classical limit with delta potentials}
\author{Claudio Cacciapuoti}
\address{DiSAT, Sezione di Matematica, Universit\`a dell'Insubria, via Valleggio 11, I-22100
Como, Italy}
\email{claudio.cacciapuoti@uninsubria.it}
\author{Davide Fermi}
\address{Dipartimento di Matematica, Universit\`a di Milano, Via Cesare Saldini 50, I-20133 Milano, Italy 
and Istituto Nazionale di Fisica Nucleare, Sezione di Milano, I-20133 Milano, Italy}
\email{davide.fermi@unimi.it}
\author{Andrea Posilicano}
\address{DiSAT, Sezione di Matematica, Universit\`a dell'Insubria, via Valleggio 11, I-22100
Como, Italy}
\email{andrea.posilicano@uninsubria.it}
\thanks{The authors acknowledge the support of the National Group of Mathematical Physics (GNFM-INdAM)}
\begin{document}

\begin{abstract} We consider the semi-classical limit of the quantum evolution of  Gaussian coherent states whenever the Hamiltonian $\mathsf H$ is given, as sum of quadratic forms, by $\mathsf H= -\frac{\hbar^{2}}{2m}\,\frac{d^{2}\,}{dx^{2}}\,\dot{+}\,\alpha\delta_{0}$, with $\alpha\in\mathbb R$ and $\delta_{0}$ the Dirac delta-distribution at $x=0$. We show that the quantum evolution can be approximated, uniformly for any time away from the collision time and with an error of order $\hbar^{3/2-\lambda}$, $0\!<\!\lambda\!<\!3/2$, by the quasi-classical evolution generated by a self-adjoint extension of the restriction to $\mathcal C^{\infty}_{c}({\mathscr M}_{0})$, ${\mathscr M}_{0}:=\{(q,p)\!\in\!\mathbb R^{2}\,|\,q\!\not=\!0\}$, of ($-i$ times) the generator of the free classical dynamics; such a self-adjoint extension does not correspond to the classical dynamics describing the complete reflection due to the infinite barrier.  Similar approximation results are also provided for the wave and scattering operators.
\end{abstract}

\maketitle

\begin{footnotesize}
\emph{Keywords: Semiclassical dynamics; delta-interactions; coherent states; scattering theory.} 

\emph{MSC 2010:
81Q20; %  	Semiclassical techniques, including WKB and Maslov methods
% 35B25;%  	Singular perturbations
81Q10; %  	Selfadjoint operator theory in quantum theory, including spectral analysis
47A40.%  	Scattering theory [See also 34L25, 35P25, 37K15, 58J50, 81Uxx]
}  
\end{footnotesize}
 
\vspace{1cm}

\section{Introduction}
The semi-classical limit of Quantum Mechanics for Gaussian coherent states is a well established subject and, in the case of regular potentials, many comprehensive results are available (see, e.g., \cite{CR, Hag1, MF, %Riv, 
Rob} and references therein); for example, by \cite[Th. 1.1]{Hag1}, one has the following (here we consider the 1-D case and choose the parameter there denoted by $\alpha$ equal to $1/2$): \par 
%% OLD: let $V\in C^{2}(\RE)$, $-c_{1}\le V(x)\le c_{2}\,e^{Mx^{2}}$ for some positive constants $c_{1}$, $c_{2}$, $M$, with $V''$ uniformly Lipschitz on compact subsets and let $(\sigmaq,\sigmap,q,p)\in\CO^{2}\times\RE^{2}$; define 
let $m$ be the mass of a quantum particle, $\hbar $ the reduced Planck constant and $V$ a potential such that $V\in C^{2}(\RE)$, $- c_{1} \le V(x)\le c_{2}\,e^{Mx^{2}}$ for some positive constants $c_{1},c_{2},M$, with $V''$ uniformly Lipschitz on compact subsets and let $(\sigmaq,\sigmap,q,p)\in\CO^{2}\times\RE^{2}$, with $\sigmaq,\sigmap \neq 0$ such that $\Re(\sigmap\,\sigmaq^{-1}) = |\sigmaq|^{-2} > 0$; define
$$
A_{t}:=\int_{0}^{t}ds\left(\frac1{2m}\,p_s^{2}-V(q_s)\right)\,,\qquad \H\psi:=-\frac{\hbar^{2}}{2m}\,\psi''+V\psi\,,
$$
and
$$
\psi^{\hbar}(x)\equiv\psi^{\hbar}(\sigmaq_t,\sigmap_t,q_t,p_t;x) :=\frac{1}{(2\pi\hbar)^{1/4}\sqrt{\sigmaq_t}}\ \exp\left({-\,\frac{\sigmap_t}{4\hbar \sigmaq_t}\,(x-q_t)^{2}+\frac{i}{\hbar}\,p_t(x-q_t)}\right)\,, 
$$
where $\sigmaq_t,\sigmap_t,q_t$, and $p_t$ solve the Cauchy problem
$$
\begin{cases}
\dot \sigmaq_t=\frac{i}{2m}\,\sigmap_t\,\\
\dot \sigmap_t=2iV''(q_t)\sigmaq_t\,,\\
\dot p_t=-V'(q_t)\,,\\
\dot q_t=\frac1{m}\,p_t\,,\\
\sigmaq_0=\sigmaq\,,\ \sigmap_0=\sigmap\,,\ q_0=q\,,\ p_0=p\,,
\end{cases}
$$
($\,\dot{\,}$ and $'$ denotes time and space derivatives respectively). Then for any $T>0$ and each positive $\lambda<1/2$, there exist positive constants $C_{\circ}$ and $h_{\circ}$ such that $\hbar<h_{\circ}$ implies
\be\label{Hage}
\big\|e^{-i\frac{t}{\hbar}\H}\,\psi^{\hbar}(\sigmaq,\sigmap,q,p)-e^{\frac{i}{\hbar}A_{t}}\,\psi^{\hbar}(\sigmaq_t,\sigmap_t,q_t,p_t)\big\|_{L^{2}(\RE)}\le C_{\circ}\hbar^{1/2 - \lambda}\,,\qquad |t|\le T\,.
\ee
Notice that, for the free Hamiltonian $\H_{0}$ (i.e., $V=0$), by Fourier transform one gets the exact correspondence 
\be\label{free}
e^{-i\frac{t}{\hbar}\H_{0}}\,\psi^{\hbar}(\sigmaq,\sigmap,q,p)=e^{\frac{i}{\hbar}\Szero_{t}}\,\psi^{\hbar}(\sigmaq_t,\sigmap,q_t,p)\,,
\ee
with
\begin{equation}\label{freeAsigmaq}
\Szero_{t}=\frac{p^{2}t}{2m}\,,\quad 
\sigmaq_t = \sigmaq+\frac{i\sigmap t}{2m}\,,\quad q_t = q+\frac{pt}{m}\,,
\end{equation}
i.e., \eqref{Hage} holds with $C_{\circ}=0$.\par
In the present paper we  provide  a result similar to the one stated in Eq. \eqref{Hage} whenever $V$ is a distributional potential describing a delta-interaction. Hence, here we consider the case $\H=\H_{\alpha}$, where 
$\H_{\alpha}$, $\alpha\in\RE\cup\{\infty\}$, is defined by means of the bounded from below, closed quadratic form 
$$
\dom(Q_{\alpha})=H^{1}(\RE)\,,\qquad
Q_{\alpha}(\psi):=\|\psi'\|^{2}_{L^{2}(\RE)}+\alpha\,|\psi(0)|^{2}\,,
$$
$$
\dom(Q_{\infty})=H_{0}^{1}(\RE_{-})\oplus\H^{1}_{0}(\RE_{+})\,,\qquad
Q_{\infty}(\psi):=\|\psi'\|^{2}_{L^{2}(\RE)}
$$
(here $H^{1}(\RE):=\{\psi\!\in\! L^{2}(\RE)\,|\,\psi'\!\in\! L^{2}(\RE)\}$ and $H_{0}^{1}(\RE_{\pm}):=\{\psi\!\in\! L^{2}(\RE_{\pm})\,|\,\psi'\!\in\! L^{2}(\RE_{\pm})\,,\ \psi(0_{\pm})\!=\!0\}$ are the usual Sobolev spaces of order one). 
Such a family of self-adjoint operators describes a set of self-adjoint extensions of the symmetric operator $\H^{\circ}_{0}$ given by the restriction of the free Hamiltonian $\H_{0}$ (corresponding to the choice $\alpha=0$) to the set $\C^{\infty}_{c}(\RE\backslash\{0\})$ (see, e.g.,  \cite[Ch. I.3]{AGHH}, \cite[Ch. 8]{Teta} and Section \ref{ss:Halpha} below for more details).

Evidently, since our potential lacks any regularity, we cannot mimic the proofs provided in \cite{Hag1} and related papers; however we take inspiration from relations \eqref{Hage} and \eqref{free}.
Some considerations about the semiclassical limit with a delta potential, based on the $\hbar\to 0$ limit of the continuation to imaginary time of the heat kernel of $H_{\alpha}$, have appeared in \cite[Sec. 3.4]{GS}; our approach here is quite different. 

To simplify the exposition, we restrict the attention to coherent states $\psi^{\hbar}(\sigma,\breve\sigma,q,p)$ with $q p \neq 0$, thus excluding cases with either $q=0$ (which is the support of the distributional potential) or $p=0$ (which gives, as regards the classical dynamics, no evolution).
\par
From now on we fix $\sigmazero>0$, $\sigmap=\sigmazero^{-1}$ and define, for any $\sigma\in\CO$, $\xi\equiv(q,p)\in\RE^{2}$, 
\begin{equation}\label{initial}
\psi^{\hbar}_{\sigma,\xi}:\RE\to\CO\,,\quad
\psi^{\hbar}_{\sigma,\xi}(x):=\frac{1}{(2\pi\hbar)^{1/4}\sqrt{\sigma}}\ \exp\left({-\frac{1}{4\hbar \sigmazero\sigma}\,(x-q)^{2}+\frac{i}{\hbar}\,p(x-q)}\right)
\end{equation}
and, for any $\sigma\in\CO$, $x\in\RE$, 
\begin{equation}\label{phidef}
\phi^{\hbar}_{\sigma,x}:\RE^{2}\to\CO\,,\quad
\phi^{\hbar}_{\sigma,x}(\xi):=\psi^{\hbar}_{\sigma,\xi}(x)\,.
\end{equation}
Using such notations, Eq. \eqref{free} can be re-written as
\begin{equation}\label{free2}
\big(e^{-i\frac{t}{\hbar}\H_{0}}\,\psi^{\hbar}_{\sigmazero,\xi}\big)(x)=e^{\frac{i}{\hbar}\Szero_t}\,\big(e^{itL_{0}}
\phi^{\hbar}_{\sigma_{t},x}\big)(\xi)\,, 
\end{equation}
where 
$$
\sigma_{t}:=\sigmazero+\frac{it}{2m\sigmazero}
$$
and $e^{itL_{0}}$ is the realization in $L^{\infty}(\RE^2)$ of the strongly continuous (in $L^{2}(\RE^{2})$) group of evolution generated by the self-adjoint operator 
$$
L_{0}:=-\,i\,X_{0} \cdot \nabla\,,\quad X_{0}(q,p):=\left(\frac{p}{m},0\right)\,,
$$ 
i.e.,  $e^{itL_{0}}f(q,p) = f(q+\frac{p}{m}\,t,p)$. Now a suggestion is clear:
since $\H_{\alpha}$ is a self-adjoint extension of $\H^{\circ}_{0}=\H_{0}\!\!\upharpoonright\!\C^{\infty}_{c}(\RE\backslash\{0\})$, one could try to approximate $e^{-i\frac{t}{\hbar}\H_{\alpha}}\,\psi^{\hbar}_{\sigmazero,\xi}$ by replacing $L_{0}$ with $\tilde L$, a self-adjoint extension   of $L^{\!\circ}_{0}:=L_{0}\!\!\upharpoonright\! \C^{\infty}_{c}({\mathscr M}_{0})$, ${\mathscr M}_{0}:=\RE^{2} \,\backslash\, \{(0,p) \,|\, p\!\in\!\RE\}$, and transforming $\phi^{\hbar}_{\sigma_{t},x}$ using the realization in $L^{\infty}(\RE^2)$, if any, of $e^{it\tilde L}$. \par 
In the following at first we provide (see Section \ref{s:2}) a family $L_{\beta}$, $\beta\!\in\!\RE\cup\{\infty\}$, of self-adjoint extensions of $L^{\!\circ}_{0}$ such that, for any $f\in L^{2}(\RE^{2})$, 
\begin{equation}\label{Lbeta-group}
\big(e^{itL_{\beta}}f\big)(q,p)=\big(e^{itL_{0}}f\big)(q,p)-\frac{\HE (-tqp)\,\HE \!\left(\frac{|pt|}{m}-|q|\right)}{1-\sgn(t)\,\frac{2i|p|}{m\beta}}\,\big(e^{itL_{0}}f_{ev}\big)(q,p)\,;
\end{equation}
here $\HE$ denotes the Heaviside function (namely, $\HE(\xi) = 1$ for $\xi > 0$ and $\HE(\xi) = 0$ for $\xi < 0$) and $f_{ev}(\xi):=f(\xi)+f(-\xi)$ (see Proposition \ref{exp-cl}). This also shows that $e^{itL_{\beta}}$ is a group of evolution in $L^{\infty}(\RE^2)$. Notice that the case $\beta=\infty$ corresponds to complete reflection due to the infinite barrier at the origin, while $\beta\in\RE\backslash\{0\}$ allows transmission ($\beta=0$ gives the free generator $L_{0}$). Therefore the case $\beta\in\RE\backslash\{0\}$ introduces ``extra'' classical paths going beyond the singularity; this resembles, in some way, the geometric theory of diffraction by Keller (see \cite{K1} and \cite{K2}). \par  In Section \ref{ss:3.1} we prove the following 
\begin{theorem}\label{t:1} Let $\beta=2\alpha/\hbar$. Then, there exists a constant $C>0$ such that, for any $\lambda_1,\lambda_2>0$, for any $t\in\RE$ and for any $\xi\equiv(q,p)\in\RE^{2}$ with $qp\not=0$, there holds
\begin{equation}\label{t1}
\begin{aligned}
&\left\|e^{-i\frac{t}{\hbar}\H_{\alpha}}\,\psi^{\hbar}_{\sigmazero,\xi}-
e^{\frac{i}{\hbar}\Szero_{t}} \big(e^{itL_{\beta}} \phi^{\hbar}_{\sigma_{t},(\cdot)}\big)(\xi)\right\|_{L^{2}(\RE)}\\
& \le C \Bigg[\!\left(\! {\hbar \over (m|\alpha|\sigma_0)^{2/3} }\!\right)^{\!\!\frac32-\ve_1}\!\! +  \,e^{-{1 \over 2} \Big(\! {(m|\alpha|\sigma_0)^{2/3} \over \hbar }\!\Big)^{\!\!2\ve_1}} \!\!
 + e^{-{\sigma_0 p^2 \over \hbar }}\! \bigg(\!\left({\hbar \over (m|\alpha|\sigma_0)^{2/3}}\!\right)^{\!\!{3 \over 2}-{\frac32 \lambda_2 }} \!+ e^{-{1 \over 2}\left(\!{(m|\alpha|\sigma_0)^{2/3} \over \hbar }\right)^{\!2\lambda_2}} \bigg)  \\
& \qquad\; +  e^{- {q^2 \over 8\hbar\sigma_0^2}} + e^{- {m |\alpha|\,|q| \over 8 \hbar^2}} + e^{- \frac{(q+pt/m)^2}{4\hbar|\sigma_{t}|^2}}\Bigg]\,. 
% & \,C\left( \frac{|p|^{1-2\lambda} }{m|\alpha|\sigmazero^{2\lambda}}\ \hbar^{1+\lambda}+ \left(1 +  \frac{\hbar^{3/4}}{\sqrt{m |\alpha|\sigmazero}}\right) e^{- {q^2 \over 8\hbar \sigmazero^2}} + e^{- \frac12 \left(\! {\sigmazero^2\, p^2 \over \hbar}\! \right)^{\!\!1-2\lambda}}\!+e^{- {\sigmazero^2\, p^2 \over \hbar}} + e^{- \frac{m |\alpha q|}{8\hbar^{2}}}+ e^{- \frac{(q+pt/m)^2}{4\hbar|\sigma_{t}|^2}}\right)
\end{aligned}
\end{equation}
\end{theorem}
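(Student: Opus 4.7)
The plan is to exploit the explicit representation of $e^{-it\H_\alpha/\hbar}$ as the free unitary group plus a correction. By Krein's formula (or equivalently the well-known integral-kernel representation of the $\delta$-interaction propagator, cf.\ \cite{AGHH}),
\begin{equation*}
e^{-i\frac{t}{\hbar}\H_\alpha} \;=\; e^{-i\frac{t}{\hbar}\H_0} \,+\, R^{\hbar}_{\alpha,t}\,,
\end{equation*}
where $R^{\hbar}_{\alpha,t}$ is an integral operator whose kernel factors through $K_0(t;|x|+|y|+s,0)$ weighted by $e^{-m|\alpha|s/\hbar^2}\,ds$. Applying this to $\psi^{\hbar}_{\sigmazero,\xi}$, the free piece equals $e^{\frac{i}{\hbar}\Szero_t}(e^{itL_0}\phi^{\hbar}_{\sigma_t,\cdot})(\xi)$ by \eqref{free2}. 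The task reduces to showing that $R^{\hbar}_{\alpha,t}\psi^{\hbar}_{\sigmazero,\xi}$ coincides, up to errors of the claimed sizes, with the added term of \eqref{Lbeta-group} applied to $\phi^{\hbar}_{\sigma_t,\cdot}$.

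Since both $\psi^{\hbar}_{\sigmazero,\xi}$ and $K_0$ are Gaussian, the $y$-convolution inside $R^{\hbar}_{\alpha,t}\psi^{\hbar}_{\sigmazero,\xi}$ is closed-form, collapsing the whole correction to a one-dimensional integral in $s\in\RE_+$ of a Gaussian in $s$ times $e^{-m|\alpha|s/\hbar^2}$. One can then either evaluate this integral in closed form via complementary error functions, or extract its leading contribution by a saddle-point/Laplace-type argument: the balance between the quadratic $s$-scale $\sqrt{\hbar/m}$ coming from $K_0$ and the linear weight $m|\alpha|s/\hbar^2$ produces the characteristic scale $(\hbar^2/m|\alpha|)$ and, once matched against $\sigmazero$, the dimensionless parameter $\hbar/(m|\alpha|\sigmazero)^{2/3}$ controlling the polynomial/exponential remainders. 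The resummation of this stationary contribution yields the rational factor $(1-\sgn(t)\,i\hbar|p|/(m\alpha))^{-1}=(1-\sgn(t)\,2i|p|/(m\beta))^{-1}$ of \eqref{Lbeta-group}, while the saddle's location inside $\RE_+$ versus outside reproduces the Heaviside factors $\HE(-tqp)\,\HE(|pt|/m-|q|)$ naturally: the saddle lies in $\RE_+$ precisely when the free classical trajectory crosses the origin before time $t$.

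The remaining exponentials on the right-hand side of \eqref{t1} arise from standard Gaussian-tail estimates: $e^{-\sigma_0 p^2/\hbar}$ from the momentum-space concentration (needed to truncate the $s$-integrand), $e^{-q^2/(8\hbar\sigmazero^2)}$ and $e^{-(q+pt/m)^2/(4\hbar|\sigma_t|^2)}$ from the position-space concentration of $\psi^\hbar_{\sigmazero,\xi}$ and of its free evolute at $x=0$ respectively, and $e^{-m|\alpha||q|/(8\hbar^2)}$ from the exponential decay of the $\delta$-correction kernel in the folded variable $|x|+|y|+s$. The two free parameters $\lambda_1,\lambda_2$ enter as the exponents in the truncation of the $s$-integral at a scale $\sim(\hbar/(m|\alpha|\sigmazero)^{2/3})^{1-2\lambda_i}$ chosen to balance polynomial and exponential remainders, the second occurring only in the momentum-suppressed sector (hence the factor $e^{-\sigma_0 p^2/\hbar}$ in front).

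The main obstacle is the matching in the second paragraph: the exact $s$-integral must be estimated sharply enough to reproduce the full rational factor in \eqref{Lbeta-group}, not merely its formal $\hbar$-expansion. I expect the cleanest route to be to write the integral in closed form via $\operatorname{erfc}$ evaluated on a complex argument whose real part is controlled by the classical crossing condition, then to split $\operatorname{erfc}$ into its asymptotic series (producing the resummed coefficient $(1-\sgn(t)\,2i|p|/(m\beta))^{-1}$ up to a remainder of the required order) plus a uniformly small tail, with the $\lambda_i$-dependent exponents coming from the trade-off between the two in that expansion. Once this matching is in place, the $L^2_x$ norm is computed directly: the difference is Gaussian in $x$ with variance of order $\hbar$, so only $L^\infty_x$ control of the coefficient combined with an $L^2_x$ norm of the coherent-state profile is needed, and the stated bound follows.
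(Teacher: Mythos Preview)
Your approach via the Gaveau--Schulman propagator formula is genuinely different from the paper's and is in principle viable, but the proposal has real gaps.

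\textbf{What the paper does instead.} The paper never touches the integral kernel of $e^{-it\H_\alpha/\hbar}$. It uses the generalized eigenfunction expansion of $\H_\alpha$ (Proposition~\ref{propSemi}): the quantum evolution of $\psi^\hbar$ is written as the free evolution plus terms $F^\hbar_{\pm,t}$ carrying the reflection coefficients $R_\pm(k)=-\big(1\pm i\hbar^2|k|/(m\alpha)\big)^{-1}$, plus three explicit remainders $E^\hbar_{1,t},E^\hbar_{2,t},E^\hbar_{\alpha,t}$. The key step is then a single \emph{freezing} estimate (Lemma~\ref{LemmaEpbis}): since $\widehat\psi^\hbar$ is Gaussian centred at $k=p/\hbar$, one replaces $R_\pm(k)$ by $R_\pm(p/\hbar)$ at cost $(\hbar/(m|\alpha|\sigma_0)^{2/3})^{3/2-\lambda_1}$ plus an exponential tail. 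This is exactly the rational coefficient in \eqref{Lbeta-group} with $\beta=2\alpha/\hbar$, so no resummation or saddle analysis is needed---the matching is algebraic. The remaining exponential errors come from Lemmas~\ref{LemmaProj}--\ref{LemmaE2bis} (bound state, wrong-half-line mass, wrong-sign momentum), and a final comparison with $e^{itL_\beta}\phi^\hbar_{\sigma_t,\cdot}$ via Lemma~\ref{Lemmapsi0} produces the $e^{-q_t^2/(4\hbar|\sigma_t|^2)}$ term.

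\textbf{Gaps in your route.} First, the formula $R^\hbar_{\alpha,t}\psi=\text{(free kernel folded in }|x|+|y|+s)\times e^{-m\alpha s/\hbar^2}$ is only valid as written for $\alpha>0$; for $\alpha<0$ the $s$-integral diverges and one must add the bound-state contribution separately and deform the contour. You do not address this, and your attribution of the $e^{-m|\alpha||q|/(8\hbar^2)}$ term to ``kernel decay'' is wrong---in the paper it comes from the bound-state projection (Lemma~\ref{LemmaProj}). Second, the crucial matching step---extracting from the $\operatorname{erfc}$ of a complex argument both the exact coefficient $(1-\sgn(t)\,2i|p|/(m\beta))^{-1}$ \emph{and} a remainder of order $(\hbar/(m|\alpha|\sigma_0)^{2/3})^{3/2-\lambda}$ uniformly in $t,\xi$---is asserted but not done; this is the whole difficulty, and ``I expect the cleanest route'' is not a proof. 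Third, the correction kernel depends on $|x|$, not $x$, so your final $L^2_x$ estimate cannot be reduced to an $L^\infty$ coefficient times a Gaussian norm; you need an analogue of Lemma~\ref{Lemmapsi0} to pass from $\psi^\hbar(\pm\sgn(q_t)|x|)$ to $\psi^\hbar(x)+\psi^\hbar(-x)$, which is where the collision-time term $e^{-q_t^2/(4\hbar|\sigma_t|^2)}$ actually originates.
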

Thus, whenever $t$ is not too close to the collision time $t_{coll}(\xi):=-\frac{mq}{p}$, $\xi\equiv(q,p)$  (look at  the last term in the above estimate), our approximation is better than the one given in \eqref{Hage} for regular potentials. In different terms, one has the following result (see Subsection \ref{proof-coroll}):
\begin{corollary}\label{c:1}
Let $\beta=2\alpha/\hbar$. Then, for any $0<\lambda<3/2 $ there exist constants $h_*,C_*,c_{0}>0$ such that 
$$
\underline h:= \max\left\{{\hbar\, \sigmazero^2 \over q^2 }\,,\, {\hbar \over (m |\alpha| \sigma_0)^{2/3} }\right\} < h_*
$$
implies 
$$\left\|e^{-i\frac{t}{\hbar}\H_{\alpha}}\,\psi^{\hbar}_{\sigmazero,\xi}-
e^{\frac{i}{\hbar}\Szero_{t}} \big(e^{itL_{\beta}} \phi^{\hbar}_{\sigma_{t},(\cdot)}\big)(\xi)\right\|_{L^{2}(\RE)}\le C_{*}\,\underline h^{\frac32- \lambda}
$$ 
for any $t\in\RE$, $\xi\equiv(q,p)\in\RE^{2}$ with $qp\not=0$, such that
\begin{equation}\label{ttcoll}
\big|t-t_{coll}(\xi)\big|\ge c_{0}\,|t_{coll}(\xi)|\,\sqrt{\Big(\frac32-\lambda\Big)\,\underline h\,|\!\ln \underline h|}\;.
\end{equation}
\end{corollary}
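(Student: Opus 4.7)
The plan is to take the explicit bound in Theorem \ref{t:1}, specialize the free parameters $\varepsilon_1,\lambda_2$ optimally, and dominate each of the seven terms on its right-hand side by a constant multiple of $\underline h^{3/2-\lambda}$. I would begin by setting $\varepsilon_1 := \lambda$ and $\lambda_2 := 2\lambda/3$, so that both polynomial contributions $(\hbar/(m|\alpha|\sigma_0)^{2/3})^{3/2-\varepsilon_1}$ and $(\hbar/(m|\alpha|\sigma_0)^{2/3})^{3/2-3\lambda_2/2}$ simplify to $(\hbar/(m|\alpha|\sigma_0)^{2/3})^{3/2-\lambda}$, which by the very definition of $\underline h$ is bounded by $\underline h^{3/2-\lambda}$.

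I would then verify that each exponentially decaying term is dominated by $\underline h^{3/2-\lambda}$ once $\underline h$ is small enough. Exploiting $(m|\alpha|\sigma_0)^{2/3}/\hbar \ge 1/\underline h$, the two ``$(m|\alpha|\sigma_0)^{2/3}/\hbar$-exponentials'' become $\exp(-\tfrac12\underline h^{-2\lambda})$ and $\exp(-\tfrac12\underline h^{-4\lambda/3})$; from $\underline h \ge \hbar\sigma_0^2/q^2$ one deduces $\exp(-q^2/(8\hbar\sigma_0^2)) \le \exp(-1/(8\underline h))$; and combining both lower bounds on $\underline h$ yields $m|\alpha|\,|q|/\hbar^{2} \ge 1/\underline h^{2}$, hence $\exp(-m|\alpha|\,|q|/(8\hbar^{2}))\le \exp(-1/(8\underline h^{2}))$. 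All four decay faster than any polynomial in $\underline h$ and are absorbed into $C_*\underline h^{3/2-\lambda}$ as soon as $\underline h<h_*$ with $h_*$ small enough.

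The substantive step concerns the only $t$-dependent term $\exp(-(q+pt/m)^2/(4\hbar|\sigma_t|^2))$. Writing $q+pt/m=(p/m)(t-t_{coll})$, hypothesis \eqref{ttcoll} gives $(q+pt/m)^{2} \ge c_0^{2}\,q^{2}\,(3/2-\lambda)\,\underline h\,|\!\ln\underline h|$, while $|\sigma_t|^2 = \sigma_0^2+t^2/(4m^2\sigma_0^2)$ has to be controlled from above. I would split into the near-collision regime $|t-t_{coll}|\le|t_{coll}|$, where $|\sigma_t|^{2}\le \sigma_0^{2} + q^{2}/(p^{2}\sigma_0^{2})$ and the lower bound forced by \eqref{ttcoll} is essential, and the far regime $|t-t_{coll}|>|t_{coll}|$, where the estimates $(q+pt/m)^{2}\ge q^{2}$ and $|t|\le 2|t-t_{coll}|$ suffice. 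In the first regime, after inserting $\underline h\ge \hbar\sigma_0^{2}/q^{2}$, the exponent is at least $c_0^{2}(3/2-\lambda)|\!\ln\underline h|/(4 + q^{2}/(p^{2}\sigma_0^{4}))$, which exceeds $(3/2-\lambda)|\!\ln\underline h|$ as soon as $c_0^{2} \ge 4 + q^{2}/(p^{2}\sigma_0^{4})$; in the second regime, a direct computation gives an exponent of order $\min(p^{2}\sigma_0^{2},\, q^{2}/\sigma_0^{2})/\hbar$, which vastly surpasses $|\!\ln\underline h|$ for $\hbar$ sufficiently small.

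I expect the main obstacle to be this last term. The quadratic growth of $|\sigma_t|^{2}$ with $|t|$ means that the naive lower bound on $(q+pt/m)^{2}/(4\hbar|\sigma_t|^{2})$ deteriorates as $t$ drifts away from $t_{coll}$; the regime split and the choice of $c_0$ (allowed to depend on the fixed data $q,p,\sigma_0,\alpha$) have to be coordinated carefully so that a single threshold $h_*$ covers all $t\in\RE$ subject to \eqref{ttcoll}. Once this balance is in place, summing the seven bounds yields $C_*\underline h^{3/2-\lambda}$ and concludes the proof.
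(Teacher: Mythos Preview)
Your handling of the six time-independent terms is correct and essentially identical to the paper's: with $\varepsilon_1=\lambda$ and $\lambda_2=2\lambda/3$ (written in the paper as $\lambda_1=\tfrac32\lambda_2=\lambda$) each polynomial and exponential contribution is dominated by $C\,\underline h^{3/2-\lambda}$ once $\underline h$ is small, exactly via the two inequalities $\hbar\sigma_0^2/q^2\le\underline h$ and $\hbar/(m|\alpha|\sigma_0)^{2/3}\le\underline h$.

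For the time-dependent term the paper proceeds differently and more cleanly than your near/far regime split. It sets $y:=1-t/t_{coll}$, so that $q_t=qy$ and $|\sigma_t|^2=\sigma_0^2+\tfrac{q^2}{4p^2\sigma_0^2}(1-y)^2$; the requirement $q_t^2/(4\hbar|\sigma_t|^2)\ge(3/2-\lambda)|\ln\underline h|$ then becomes the single quadratic inequality $y^2\ge a+b(1-y)^2$ with $a=\tfrac{4\hbar\sigma_0^2}{q^2}(3/2-\lambda)|\ln\underline h|$ and $b=\tfrac{\hbar}{p^2\sigma_0^2}(3/2-\lambda)|\ln\underline h|$. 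For $a,b\in(0,1)$ this is solved by $|y|\ge (b+\sqrt{a+b-ab})/(1-b)$, and bounding $a,b$ by multiples of $\underline h\,|\ln\underline h|$ yields a universal $c_0>\sqrt5$ without any case distinction.

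There is a genuine gap in your argument: you declare $c_0$ ``allowed to depend on the fixed data $q,p,\sigma_0,\alpha$'', but the Corollary asserts constants $h_*,C_*,c_0$ chosen after $\lambda$ and valid for \emph{all} $\xi=(q,p)$ with $qp\neq0$. Your condition $c_0^2\ge 4+q^2/(p^2\sigma_0^4)$ blows up as $|p|\to0$, so it does not deliver the stated uniformity; likewise your far-regime exponent ``of order $\min(p^2\sigma_0^2,\,q^2/\sigma_0^2)/\hbar$'' is not controlled by $\underline h$. The paper's substitution avoids the regime split altogether; note, however, that its bound $b\le(3/2-\lambda)\underline h\,|\ln\underline h|$ is the assertion $\hbar/(p^2\sigma_0^2)\le\underline h$, which is not part of the definition of $\underline h$ either, so this step merits attention in both approaches.
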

Notice that, if   one approximates the quantum dynamics  with the classical dynamics corresponding 
to the infinite barrier case (i.e., using the operator $L_{\infty}$), which would seem the most natural choice, the estimates cannot be better (whenever $\alpha\not=\infty$) than $O\big({\hbar |p| \over m |\alpha|}\big)$ (see Remark \ref{r:dirichlet}):
$$\left\|e^{-i\frac{t}{\hbar}\H_{\alpha}}\,\psi^{\hbar}_{\sigmazero,\xi}-
e^{\frac{i}{\hbar}\Szero_{t}} \big(e^{itL_{\infty}} \phi^{\hbar}_{\sigma_{t},(\cdot)}\big)(\xi)\right\|_{L^{2}(\RE)}\ge C_{*}\,{\hbar |p| \over m |\alpha|}\;.
$$ 
   
Moreover, the constraint  $t\not=t_{coll}$ does not affect the semi-classical approximation for large times. Indeed, see Theorem \ref{t:2} below, we can handle the approximations of the wave operators: denoting with $\Omega^{\pm}_{\alpha}$ the wave  operators defined, as usual, by the limits in 
$L^{2}(\RE^{2})$
$$
\Omega^{\pm}_{\alpha}f:=\lim_{t\to\pm\infty}e^{i\frac{t}{\hbar}\H_{\alpha}}e^{-i\frac{t}{\hbar}\H_{0}}f
$$
and by $W^{\pm}_{\beta}$ the corresponding classical objects (see Subsection \ref{waveop})
\begin{equation*}%\label{wo-cl}
W^{\pm}_{\beta}f:=\lim_{t\to\pm\infty}e^{i tL_{0}}e^{ - i  t L_{\beta}}f
\end{equation*}
(here the limits hold both pointwise in $\RE^{2}$ and, if $f = \psih_{\sigma,\xi}$ is a coherent state of the form \eqref{initial}, in $L^{2}(\RE,dx)$), one has the following (see Subsection \ref{proof2} for the proof) 
\begin{theorem}\label{t:2} Let $\beta=2\alpha/\hbar$. Then, there exists a constant $C>0$ such that for any $\lambda>0$ and for any $\xi\equiv(q,p)\in\RE^{2}$ with $qp\not=0$, 
\begin{equation}\label{t2_1}
\left\|\Omega^{\pm}_{\alpha}\psi^{\hbar}_{\sigmazero,\xi}- \big(W^{\pm}_{\beta}\phi^{\hbar}_{\sigmazero,(\cdot)}\big)(\xi)\right\|_{\L^{2}(\RE)}\le C\left(\!
\left(\! {\hbar \over (m|\alpha|\sigma_0)^{2/3} }\!\right)^{\!\!\frac32-\ve}\!  + \,e^{-{1 \over 2} \Big(\! {(m|\alpha|\sigma_0)^{2/3} \over \hbar }\!\Big)^{\!\!2\ve}}\!+ e^{-  {\sigmazero^2\, p^2 \over \hbar}} + e^{- \frac{q^2}{4\hbar\sigmazero^2}}\right).
\end{equation}
Analogously, for the corresponding scattering operators $S_{\alpha}:=(\Omega_{\alpha}^{+})^{*}\Omega^{-}_{\alpha}$ and $S^{cl}_{\beta}:=(W_{\beta}^{+})^{*}W^{-}_{\beta} $ one has
\begin{equation}\label{t2_2}
\begin{aligned}
&\left\|S_{\alpha}\psi^{\hbar}_{\sigmazero,\xi}- \big(S^{cl}_{\beta}\phi^{\hbar}_{\sigmazero,(\cdot)}\big)(\xi)\right\|_{\L^{2}(\RE)} \\ 
& \le C\left(\!\left(\! {\hbar \over (m|\alpha|\sigma_0)^{2/3} }\!\right)^{\!\!\frac32-\ve}\!  + \,e^{-{1 \over 2} \Big(\! {(m|\alpha|\sigma_0)^{2/3} \over \hbar }\!\Big)^{\!\!2\ve}}\!+ e^{-  {\sigmazero^2\, p^2 \over \hbar}} +  e^{- {q^2 \over 8\hbar\sigma_0^2}} + e^{- {m |\alpha|\,|q| \over 8 \hbar^2}}\right). 
\end{aligned}
\end{equation}
\end{theorem}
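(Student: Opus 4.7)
The plan is to reduce Theorem \ref{t:2} to a large-$|t|$ limit of Theorem \ref{t:1}. By the defining limits of $\Omega^\pm_\alpha$ and $W^\pm_\beta$, together with unitarity of $e^{it\H_\alpha/\hbar}$,
\begin{equation*}
\big\|\Omega^\pm_\alpha\psih_{\sigmazero,\xi} - (W^\pm_\beta\phih_{\sigmazero,\cdot})(\xi)\big\|_{L^2(\RE)} = \lim_{t\to\pm\infty}\Big\|e^{-i\frac{t}{\hbar}\H_0}\psih_{\sigmazero,\xi} - e^{-i\frac{t}{\hbar}\H_\alpha}\big[(e^{itL_0}e^{-itL_\beta}\phih_{\sigmazero,\cdot})(\xi)\big]\Big\|_{L^2(\RE)}.
\end{equation*}
The first step I would take is a closed-form calculation of the classical object $(e^{itL_0}e^{-itL_\beta}\phih_{\sigmazero,\cdot})(\xi)$ using \eqref{Lbeta-group}: the free shifts induced by $e^{-itL_\beta}$ and $e^{itL_0}$ compensate on the transmitted piece, yielding
\begin{equation*}
(e^{itL_0}e^{-itL_\beta}\phih_{\sigmazero,\cdot})(\xi) = \psih_{\sigmazero,\xi} - \gamma(t,\xi)\,[\psih_{\sigmazero,\xi} + \psih_{\sigmazero,-\xi}],
\end{equation*}
where $\gamma(t,\xi)$ has explicit Heaviside factors that stabilize, as $|t|\to\infty$, to $\HE(\mp qp)/(1\pm 2i|p|/(m\beta))$ for $t\to\pm\infty$.

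Next, by linearity I would apply $e^{-it\H_\alpha/\hbar}$ and invoke Theorem \ref{t:1} separately on each of $\psih_{\sigmazero,\xi}$ and $\psih_{\sigmazero,-\xi}$, noting that $A_t = p^2 t/(2m)$ is invariant under $\xi\to-\xi$. The classical contributions coming from $(e^{itL_\beta}\phih_{\sigma_t,\cdot})(\pm\xi)$ combine, in the limit $|t|\to\infty$, to exactly cancel $e^{-i\frac{t}{\hbar}\H_0}\psih_{\sigmazero,\xi}$ as given by \eqref{free2}. The residual is controlled by the right-hand side of \eqref{t1}: the only genuinely $t$-dependent term $\exp(-(q+pt/m)^2/(4\hbar|\sigma_t|^2))$ converges, using $|\sigma_t|^2\sim t^2/(4m^2\sigmazero^2)$ and $(q+pt/m)^2\sim p^2 t^2/m^2$, to $\exp(-\sigmazero^2 p^2/\hbar)$, producing the third term in \eqref{t2_1}; the remaining contributions are already $t$-independent. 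Letting $|t|\to\infty$ gives \eqref{t2_1}.

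For the scattering operator I would use the telescopic decomposition
\begin{equation*}
S_\alpha\psih_{\sigmazero,\xi} - (S^{cl}_\beta\phih_{\sigmazero,\cdot})(\xi) = (\Omega^+_\alpha)^*\!\big[\Omega^-_\alpha\psih_{\sigmazero,\xi} - (W^-_\beta\phih_{\sigmazero,\cdot})(\xi)\big] + \big\{(\Omega^+_\alpha)^*(W^-_\beta\phih_{\sigmazero,\cdot})(\xi) - ((W^+_\beta)^*W^-_\beta\phih_{\sigmazero,\cdot})(\xi)\big\}.
\end{equation*}
The first summand is bounded by \eqref{t2_1} and $\|(\Omega^+_\alpha)^*\|\le 1$. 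For the second, the closed form $(W^-_\beta\phih_{\sigmazero,\cdot})(\xi) = a\,\psih_{\sigmazero,\xi} + b\,\psih_{\sigmazero,-\xi}$ obtained above reduces the task, by linearity, to the dual wave-operator estimate $\|(\Omega^+_\alpha)^*\psih_{\sigmazero,\xi'} - ((W^+_\beta)^*\phih_{\sigmazero,\cdot})(\xi')\|_{L^2(\RE)}$ for $\xi' = \pm\xi$, established by repeating the argument above with the adjoint representations $(\Omega^+_\alpha)^* = \lim_{t\to+\infty}e^{it\H_0/\hbar}e^{-it\H_\alpha/\hbar}$ and $(W^+_\beta)^* = \lim_{t\to+\infty}e^{itL_\beta}e^{-itL_0}$. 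The extra exponential $\exp(-m|\alpha||q|/(8\hbar^2))$ appearing in \eqref{t2_2} enters through the cross term involving $\psih_{\sigmazero,-\xi}$, since the corresponding bound in \eqref{t1} evaluated at $-\xi$ yields precisely $\exp(-m|\alpha||{-q}|/(8\hbar^2)) = \exp(-m|\alpha||q|/(8\hbar^2))$.

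The hard part will be the bookkeeping of the Heaviside factors in \eqref{Lbeta-group} across the four scattering regimes (signs of $qp$ and of $t$), ensuring that the asymptotic stabilization of $\gamma(t,\xi)$ correctly matches the classical wave operators on both ends; and verifying in detail, by a careful comparison of phases $A_t$ and widths $\sigma_t$ on the coherent states centered at $\pm\xi$, the precise cancellation of the classical parts in the limit $|t|\to\infty$.
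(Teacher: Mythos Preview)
Your route is genuinely different from the paper's. The paper does \emph{not} deduce \eqref{t2_1} from Theorem~\ref{t:1}; instead it uses the stationary representation $\Omega^\pm_\alpha=\FF_\pm^*\FF$ to write down an explicit decomposition of $\Omega^\pm_\alpha\psih$ (Proposition~3.8), then bounds each piece by Lemmata~\ref{LemmaEpbis}, \ref{Lemmapsi0}, \ref{LemmaE3} and compares directly with the closed form of $(W^\pm_\beta\phih_{\sigmazero,(\cdot)})(\xi)$. Your large-$|t|$ reduction is more conceptual and avoids the explicit eigenfunction machinery; the algebraic cancellation you describe does work (for $|t|$ past the collision time the Heaviside factors freeze and one checks $\delta+\gamma(1-2\delta)=0$ exactly). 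The price is a slightly weaker bound: passing through Theorem~\ref{t:1} forces the bound-state contribution $e^{-m|\alpha||q|/(8\hbar^2)}+e^{-q^2/(8\hbar\sigmazero^2)}$ into your estimate for \eqref{t2_1}, whereas the paper's direct argument avoids this (since $\Omega^\pm_\alpha$ already lands in $\ran(P_{ac})$, no $E^\hbar_{\alpha,t}$ term appears in Proposition~3.8). So you would prove a version of \eqref{t2_1} with the right-hand side of \eqref{t2_2}, not the sharper form stated.

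For \eqref{t2_2} there is a real gap. The identity $(\Omega^+_\alpha)^*=\lim_{t\to+\infty}e^{it\H_0/\hbar}e^{-it\H_\alpha/\hbar}$ you invoke is \emph{false} on $L^2(\RE)$ when $\alpha<0$: the strong limit on the right exists only on $\ran(P_{ac})$, because on the bound-state component $e^{-it\H_\alpha/\hbar}P_\alpha\psi=e^{-it\lambda_\alpha/\hbar}P_\alpha\psi$ merely oscillates and $e^{it\H_0/\hbar}$ applied to it does not converge. Since $(W^-_\beta\phih_{\sigmazero,(\cdot)})(\xi)$ is a combination of coherent states with nonzero $P_\alpha$-component, you cannot apply your adjoint representation to it directly. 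The fix is to insert $P_{ac}+P_\alpha=\mathbf{1}$ and control the $P_\alpha$ pieces by Lemma~\ref{LemmaProj}; this is precisely the decomposition the paper carries out. Once you do that, the origin of the extra term $e^{-m|\alpha||q|/(8\hbar^2)}$ in \eqref{t2_2} is the $P_\alpha$ projection, not (as you suggest) the cross term at $-\xi$ via Theorem~\ref{t:1}.
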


\begin{corollary}\label{c:2}
Let $\beta=2\alpha/\hbar$. Then, for any $0<\lambda<3/2 $ there exist  constants  $h_*,C_*>0$ such that 
$$
\underline{\underline h}:= \max\left\{{\hbar\, \sigmazero^2 \over q^2 }, {\hbar \over (m |\alpha| \sigma_0)^{2/3} }, { \hbar \over \sigmazero^2\, p^2 }\right\} < h_*
$$
implies 
\[ 
\left\|\Omega^{\pm}_{\alpha}\psi^{\hbar}_{\sigmazero,\xi}- \big(W^{\pm}_{\beta}\phi^{\hbar}_{\sigmazero,(\cdot)}\big)(\xi)\right\|_{\L^{2}(\RE)}\le C_*\, \underline{\underline h}^{\frac32-\lambda}\;,
\]
\[
\left\|S_{\alpha}\psi^{\hbar}_{\sigmazero,\xi}- \big(S^{cl}_{\beta}\phi^{\hbar}_{\sigmazero,(\cdot)}\big)(\xi)\right\|_{\L^{2}(\RE)}  
\le  C_*\, \underline{\underline h}^{\frac32-\lambda}\;.
\]
\end{corollary}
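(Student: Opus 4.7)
\emph{Proof proposal for Corollary \ref{c:2}.} The plan is to apply the two estimates of Theorem \ref{t:2} with the very parameter $\lambda\in(0,3/2)$ appearing in the corollary, and then to bound each of the terms on the right-hand sides of \eqref{t2_1} and \eqref{t2_2} by a constant multiple of $\underline{\underline h}^{3/2-\lambda}$. The three quantities entering the definition of $\underline{\underline h}$, namely $\hbar/(m|\alpha|\sigma_0)^{2/3}$, $\hbar\sigma_0^2/q^2$ and $\hbar/(\sigma_0^2 p^2)$, are by construction all $\leq \underline{\underline h}$, and we shall assume $h_*\leq 1$ so that $\underline{\underline h}<h_*$ forces $\underline{\underline h}\leq 1$; this ensures monotonicity of $u\mapsto u^{s}$ for any $s>0$ and allows for the free comparison of exponents.

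For the polynomial summand in both \eqref{t2_1} and \eqref{t2_2}, the restriction $\lambda<3/2$ yields a positive exponent $3/2-\lambda$, hence
$$
\left(\frac{\hbar}{(m|\alpha|\sigma_0)^{2/3}}\right)^{\!3/2-\lambda}\,\leq\, \underline{\underline h}^{3/2-\lambda},
$$
which is already of the desired form. For the exponential contributions $e^{-\sigma_0^2 p^2/\hbar}$, $e^{-q^2/(4\hbar\sigma_0^2)}$ and $e^{-\frac{1}{2}((m|\alpha|\sigma_0)^{2/3}/\hbar)^{2\lambda}}$, we observe that each exponent is, up to a multiplicative constant, a positive power of the reciprocal of one of the three quantities entering the definition of $\underline{\underline h}$; hence each is bounded above by $e^{-c/\underline{\underline h}^{a}}$ for suitable constants $c,a>0$ depending only on $\lambda$. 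Since $e^{-c/u^{a}}=o(u^{3/2-\lambda})$ as $u\to 0^+$, by choosing $h_*$ small enough we can ensure that each of these three contributions is $\leq \underline{\underline h}^{3/2-\lambda}$ whenever $\underline{\underline h}<h_*$.

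The only mildly delicate ingredient, which appears only in the scattering estimate \eqref{t2_2}, is the extra term $e^{-m|\alpha||q|/(8\hbar^2)}$: its exponent mixes the ``interaction'' and ``position'' scales and must be controlled by combining \emph{two} of the inequalities implicit in the definition of $\underline{\underline h}$. Starting from
$$
(m|\alpha|\sigma_0)^{2/3}\geq \frac{\hbar}{\underline{\underline h}}\qquad\text{and}\qquad q^2\geq \frac{\hbar\sigma_0^2}{\underline{\underline h}}\,,
$$
we obtain $m|\alpha|\geq \hbar^{3/2}/(\sigma_0\,\underline{\underline h}^{3/2})$ and $|q|\geq \sigma_0\sqrt{\hbar/\underline{\underline h}}$, and the product of these two lower bounds yields $m|\alpha|\,|q|\geq \hbar^2/\underline{\underline h}^{2}$, whence
$$
e^{-m|\alpha||q|/(8\hbar^2)}\,\leq\, e^{-1/(8\underline{\underline h}^{2})}\,,
$$
which again decays faster than any power of $\underline{\underline h}$ and is therefore $\leq \underline{\underline h}^{3/2-\lambda}$ for $h_*$ small enough.

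Summing the resulting bounds and readjusting the overall multiplicative constant produces the two estimates of the corollary. No genuine obstacle is expected: the argument is essentially a careful repackaging of the raw estimates of Theorem \ref{t:2} into a single uniform power law, the only subtle point being the product-type lower bound on $m|\alpha|\,|q|/\hbar^2$ used to absorb the extra term arising in the scattering estimate.
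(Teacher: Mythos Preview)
Your proposal is correct and follows essentially the same approach as the paper, which simply states that the proof of Corollary~\ref{c:2} is analogous to that of Corollary~\ref{c:1} and omits it. In particular, your explicit handling of the term $e^{-m|\alpha||q|/(8\hbar^2)}$ via the product bound $m|\alpha|\,|q|\geq \hbar^{2}/\underline{\underline h}^{2}$ is exactly the computation implicit in the paper's derivation of the $e^{-1/(8\underline h^{2})}$ term in the proof of Corollary~\ref{c:1}.
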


%For any complex number $z \in \CO \backslash [0,+\infty)$ we define its square root with the branch cut such that $\Im \sqrt{z} > 0$.

\section{Singular perturbations of the free classical dynamics\label{s:2}}

Let $h_{0}(q,p)= p^2/(2m)$ be the Hamiltonian function of a classical free particle in $\RE$ and let $X_{0}(q,p)=(p/m,0)$ be the related Hamiltonian vector field. In this connection, consider the differential operator 
$$
L:\S'(\RE^{2})\to \S'(\RE^{2})\,,\qquad Lf:= - \,i\,X_{0}\cdot\nabla f
$$
in the space of tempered distributions $\S'(\RE^{2})$, namely the dual of the Schwartz space $\S(\RE^{2})$.
Let ${\mathscr M}_{0}$ denote the manifold ${\mathscr M}_{0}:=\RE^{2}\,\backslash\,\{(0,p) \,|\, p\!\in\!\RE\}$; since the flow of $X_{0} \!\upharpoonright\! {\mathscr M}_{0}$ is clearly not complete, by Povzner's theorem (see \cite{Povz} and \cite[Th. 2.6.15]{AM}), $L\!\upharpoonright\!\C^{\infty}_{c}({\mathscr M}_{0})$ is not essentially self-adjoint in $L^{2}({\mathscr M}_{0},dq\,dp)\equiv L^{2}(\RE^{2},dq\,dp) \equiv L^{2}(\RE^{2})$. In the sequel we proceed to characterize (some of) its self-adjoint extensions.

We denote by 
\begin{gather*}
L_{0}:\dom(L_{0})\subseteq L^{2}(\RE^{2})\to L^{2}(\RE^{2})\,,\qquad (L_{0}f)(q,p)  = -\,i\,{p \over m}\,\frac{\partial f }{\partial q}(q,p)\,, \\
\dom( L_{0}) := \big\{ f \in L^2(\RE^2) \;\big|\; Lf\in L^2(\RE^2) \big\} \,, 
\end{gather*}
the maximal realization of $L_{0}$ in $L^{2}(\RE^{2})$. By means of the partial Fourier transform
$$
\tilde f(k,p):=\frac1{\sqrt {2\pi}}\int_{\RE}\! dq\; e^{-ikq}\,f(q,p)\,,
$$
$L_{0}$ is unitarily equivalent to the multiplication operator 
\begin{gather*}
\tilde{L}_{0}:\dom(\tilde{L}_{0})\subseteq L^{2}(\RE^{2})\to L^{2}(\RE^{2})\,,\qquad (\tilde{L}_{0} \tilde f)(k,p)  = {k\,p \over m}\, \tilde f(k,p)\,, \\
\dom(\tilde{L}_{0}):= \big\{\tilde f\in L^{2}(\RE^{2}) \;\big|\; (k,p)\mapsto k\, p\, \tilde f(k,p) \in L^{2}(\RE^{2})\big\}\,;
\end{gather*}
thus, $L_{0}$ is self-adjoint and its spectrum is purely absolutely continuous, $\sigma(L_{0})=\sigma_{ac}(L_{0})=\RE$ (see \cite[Ch. X, Ex. 1.19]{Kato}).

Let us now define the linear map
\begin{equation}\label{tau}
\gamma : \S(\RE^2) \to \S(\RE)\,, \quad (\gamma f)(p) := {1 \over \sqrt{2\pi}}\int_{\RE}\! dk\; \tilde f(k,p)\,.
\end{equation}

\begin{lemma} The map $\gamma$ extends to a bounded operator $\gamma :\dom(L_{0}) \to  L^2(\RE, |p|\,dp)$, where $\dom(L_{0}) \subset L^2(\RE)$ is endowed with the graph norm. Moreover, $\ker(\gamma)$ is dense in $L^{2}(\RE^{2})$ and $(\gamma f)(p) = f(0,p)$ for any $f \in \S(\RE^{2})$.
\end{lemma}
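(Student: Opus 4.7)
\smallskip

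The plan is to dispatch the three assertions in the natural order: first the pointwise identity, then the quantitative bound (which is the real content), then density of the kernel.

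For the identity $(\gamma f)(p) = f(0,p)$ on $\S(\RE^{2})$, I would just invoke Fourier inversion in the $q$ variable: since $\tilde f(\cdot,p) \in \S(\RE)$ for each fixed $p$, one has $f(q,p) = (2\pi)^{-1/2}\int\! dk\, e^{ikq}\,\tilde f(k,p)$, and evaluating at $q=0$ gives exactly the definition \eqref{tau} of $\gamma f$.

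The main step is the boundedness estimate. Working first with $f\in\S(\RE^{2})$, where $\gamma f$ is classically defined, I would insert $1 = (1+(kp/m)^{2})^{-1/2}(1+(kp/m)^{2})^{1/2}$ inside the $k$-integral defining $\gamma f$ and apply Cauchy--Schwarz to get
\begin{equation*}
|(\gamma f)(p)|^{2} \;\le\; \frac{1}{2\pi}\left(\int_{\RE}\!\frac{dk}{1+(kp/m)^{2}}\right)\!\int_{\RE}\! (1+(kp/m)^{2})\,|\tilde f(k,p)|^{2}\,dk \;=\; \frac{m}{2|p|}\!\int_{\RE}(1+(kp/m)^{2})\,|\tilde f(k,p)|^{2}\,dk.
\end{equation*}
The factor $m\pi/|p|$ coming from the elementary integral cancels the $|p|$ in the target measure, so integrating in $p$ yields
\begin{equation*}
\|\gamma f\|^{2}_{L^{2}(\RE,|p|dp)} \;\le\; \tfrac{m}{2}\bigl(\|f\|^{2}_{L^{2}(\RE^{2})} + \|L_{0}f\|^{2}_{L^{2}(\RE^{2})}\bigr),
\end{equation*}
since on the Fourier side $L_{0}$ acts as multiplication by $kp/m$. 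Because $\S(\RE^{2})$ is invariant under the partial Fourier transform and under multiplication by $kp/m$, it is a core for $L_{0}$, hence $\gamma$ extends by continuity to all of $\dom(L_{0})$ equipped with the graph norm. The main obstacle here is realizing that the naive weight $1+k^{2}$ does \emph{not} work (it produces an $\int |p||\tilde f|^{2}\,dk\,dp$ that the graph norm does not control); the correct weight $1+(kp/m)^{2}$ is dictated by the symbol of $L_{0}$ and is precisely what makes the $|p|$-weight on the range side natural.

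Finally, for the density of $\ker(\gamma)$, I would note that $\C^{\infty}_{c}({\mathscr M}_{0}) \subset \S(\RE^{2})$ and any such function vanishes in a neighborhood of $\{q=0\}$, so by the identity established in the first step we have $\gamma f = f(0,\cdot) = 0$ for $f\in\C^{\infty}_{c}({\mathscr M}_{0})$. Since ${\mathscr M}_{0}$ is the complement of a Lebesgue-null set in $\RE^{2}$, $\C^{\infty}_{c}({\mathscr M}_{0})$ is dense in $L^{2}(\RE^{2})$, which yields the claim.
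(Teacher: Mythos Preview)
Your proof is correct and follows essentially the same approach as the paper: the same Fourier inversion for the pointwise identity, the same Cauchy--Schwarz trick with the weight $1+(kp/m)^{2}$ for the boundedness estimate, and the same reduction of density to exhibiting a dense subspace of functions vanishing on $\{q=0\}$ (the paper uses $\{f\in\S(\RE^{2}) : f(0,\cdot)=0\}$ rather than $\C^{\infty}_{c}({\mathscr M}_{0})$, but this is a cosmetic difference). Your bookkeeping of the mass parameter $m$ is in fact slightly cleaner than the paper's, which silently sets $m=1$ in the final constant.
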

\begin{proof} At first let us notice that, by the definition of $\dom(L_{0})$, we have $f \!\in\! \dom(L_{0})$ if and only if  $\tilde f \!\in\! L^2(\RE^2, (1+|k\,p|^2)dk\,dp)$. Hence $\S(\RE^{2})$ is a dense subset of $\dom(L_{0})$ (w.r.t. the graph norm). By inverse Fourier transform, for any $f\in \S(\RE^{2})$ we have
$$
f(q,p)=\frac1{\sqrt{2\pi}}\int_{\RE}\! dk\; e^{ikq}\, \tilde f (k,p)
$$ 
and so $f(0,p)= (\gamma f)(p)$ for any $f \in \S(\RE^{2})$. Therefore, since  
$$
K:= \big\{f\in \S(\RE^{2})\;\big|\; f(0,p)=0\big\}\subseteq\ker(\gamma)\,,
$$
and $\overline K=L^{2}(\RE^{2})$, we get $\overline{\ker(\gamma)}=L^{2}(\RE^{2})$. 

Finally, for any $f\in \S(\RE^{2})$ there holds the following chain of inequalities:
\begin{align*}
\|\gamma f\|^{2}_{L^2(\RE, |p|\,dp)}
= &\;\int_{\RE}\! dp\; |p|\;  |\gamma f(p)|^2 
= {1 \over 2\pi} \int_{\RE}\!  dp\; |p|\, \left|  \int_{\RE}\!  dk\;  \tilde f (k,p)\right|^2 \\ 
\leq &\; \frac1{2\pi}\int_{\RE}\! dp\; |p|\, \left( \int_{\RE}\! dk\; (1+|kp|^2)\, |\tilde f (k,p)|^2 \right)\! \left( \int_{\RE} \frac{dk'}{1+|k'p|^2} \right) \\
= &\; \frac1{2\pi}\left( \int_{\RE}\! \frac{dy }{1+y^2} \right) \int_{\RE}\! dp\, \int_{\RE}\!  dk\; (1+|kp|^2)\, | \tilde f (k,p)|^2\\
= &\; {1 \over 2} \left(\|f\|^{2}_{L^2(\RE^{2})}+\|L_{0}f\|^{2}_{L^2(\RE^{2})}\right) .
\end{align*}
This proves the boundedness of $\gamma :\dom(L_{0}) \to  L^2(\RE, |p|\,dp)$ with respect to the graph norm in $\dom(L_{0})$, thus concluding the proof.
\end{proof}
Since $L_{0}$ is invariant with respect to translations in the $q$-variable, posing $R^{0}_z := (L_{0} -z)^{-1}$ for $z\in\CO\backslash\RE$ one gets
\begin{equation}\label{res}
(R^{0}_z f)(q,p) = \int_{\RE}\!  dq'\, g_z(q - q',p)\, f(q',p)
\end{equation}
for some kernel $g_{z}(q,p)$. By partial Fourier transform, one has
%$$
%\frac{\tilde f(k,p)}{kp+z}=\sqrt{2\pi}\,\tilde g_{z}(k,p)\tilde f(k,p)\,,
%$$
%i.e.,
\begin{equation}\label{tg}
\tilde g_{z}(k,p) = {1 \over \sqrt{2\pi}}\; \frac{1}{{kp \over m} - z}
\end{equation}
and so
$$
g_z(q,p) = \frac1{2\pi}\int_{\RE}\!dk\;  \frac{e^{ikq}}{{kp \over m} - z}\,.
$$
Computing the above integral one obtains
$$
g_z(q,p) = \left\{\!\begin{array}{lr}
\dd{0} 									& \dd{q\, p < 0}		\\ 
\dd{\frac{i\,m}{p}\,e^{i m z q/p}}	& \dd{q > 0,\, p > 0}	\\
\dd{-\,\frac{i\,m}{p}\,e^{i m z q/p}}		& \dd{q < 0,\, p < 0}
\end{array}\right. \qquad \mbox{for $\Im z > 0$}\,,
$$
$$
g_z(q,p) = \left\{\!\begin{array}{lr}
\dd{0} 									& \dd{q\, p > 0}		\\ 
\dd{\frac{i\,m}{p}\,e^{i m z q/p}}	& \dd{q > 0,\, p < 0}	\\
\dd{-\,\frac{i\,m}{p}\,e^{i m z q/p}}		& \dd{q < 0,\, p > 0}
\end{array}\right. \qquad \mbox{for $\Im z < 0$}\,,
$$
i.e., more compactly,
$$
g_z(q,p) = \HE(q\,p \Im z)\;\sgn(\Im z)\;\frac{i\,m}{|p|}\,{e^{i m z q/p}}
$$
(recall that  $\HE$ indicates the Heaviside step function).

For any  $z\in\CO\backslash\RE$, we define the bounded linear map
$$
G_{z}:L^{2}(\RE,|p|^{-1}dp)\to L^{2}(\RE^{2})\,,\qquad G_{z}:=(\gamma\, R^0_{\bar z})^{*}\,.
$$
Noting the identity $\overline{g_{\bar z}(-q,p)} = g_z(q,p) $, by Eq. \eqref{res} we get
$$
(G_{z}u)(q,p) = g_{z}(q,p)\,u(p)\,.
$$
Hence, by the first resolvent identity we infer (see \cite[Lem. 2.1]{P01}, paying attention to the different sign convention for the resolvent used therein)
$$
(z-w)G^{*}_{\bar w}G_{z} = \gamma(G_{z}-G_{w}) \quad \mbox{for any $z,w \in \CO \backslash \RE$, $z \neq w$}\,.
$$
By \eqref{tau} and \eqref{tg} we have
\begin{align*}
\big(\gamma (g_{z}-g_{w})\big)(p) 
=\; & \frac{1}{2\pi}\int_{\RE}\! dk\left(\frac1{{kp \over m} - z}-\frac1{{kp \over m} - w}\right)\\
=\; & \frac{m}{2\pi\,|p|}\int_{\RE}\! dh\left({1 \over \sgn(p)\,h - z} - {1 \over \sgn(p)\,h - w}\right) \\
=\; & \frac{m}{2\pi\,|p|}\; (z - w)\int_{\RE}\! dh\; {1 \over (h - \sgn(p)\,z)(h - \sgn(p)\,w)} \\
=\; & \frac{i\,m}{2\,|p|}\,\big(\sgn(\Im z)-\sgn(\Im w)\big) \;.
\end{align*}
Therefore, for any given $\ac \in (\RE \backslash \{0\}) \cup \{\infty\}$ and for all $z \in \CO \backslash \RE$, the linear map
$$
M^{\ac}_{z} : \dom(M^{\ac}_{z})\subset L^{2}(\RE,|p|^{-1}dp) \to L^{2}(\RE,|p|dp)\,, \quad
(M^{\ac}_{z}u)(p) := m^{\ac}_{z}(p)\,u(p)\,, 
$$
$$ 
\dom(M^{\ac}_{z}):=L^{2}(\RE,|p|^{-1}dp) \cap L^{2}(\RE,|p|dp)\,,\quad m^{\ac}_{z}(p):= \frac1\ac - \sgn(\Im z)\,\frac{i\,m}{2\,|p|}
$$
satisfies the identities (see \cite[Eq.s (5) and (7)]{P01}, recalling again that we are using a different sign convention for the resolvent)
$$
(M_{z}^{\ac})^{*}=M_{\bar z}^{\ac}\,,\qquad  M_{z}^{\ac}-M_{w}^{\ac}=(w-z)G^{*}_{\bar w}G_{z}
$$
(here $L^{2}(\RE,|p|^{-1}dp)$ and $L^{2}(\RE,|p|dp)$ are considered as a dual couple with respect to the duality induced by the scalar product in $L^{2}(\RE)$). Incidentally, let us remark that the above map $M^{\ac}_{z}$ can be equivalently characterized as
\begin{equation}\label{MtauG}
M^{\ac}_{z} = {1 \over \ac} - \widehat{\gamma}\, G_{z}\,,
\end{equation}
where $\widehat{\gamma}$ is the extension of the trace map \eqref{tau} defined as
$$
(\widehat{\gamma}\, f)(p) := {1 \over 2}\, \big(f(0_{+},p)\, + f(0_{-},p) \big) \,.
$$
Since $m^{\ac}_{z}(p)\not=0$ for any $p\in\RE$, 
\begin{equation}\label{LaM}
\Lambda^{\ac}_{z}:=(M_{z}^{\ac})^{-1}:L^{2}(\RE,|p|dp)\to L^{2}(\RE,|p|^{-1}dp) \,,\quad
(\Lambda^{\ac}_{z}u)(p):=\frac{u(p)}{m^{\ac}_{z}(p)}
\end{equation}
is a well-defined, bounded linear map for any $z\in\CO\backslash\RE$.
\par
In addition, let us consider the projector on even functions (here either $\rho(p)=|p|$ or $\rho(p)=|p|^{-1}$)
\be\label{proj}
\Pi : L^2(\RE,\rho dp) \to L^2(\RE,\rho dp)\,, \qquad (\Pi f)(p) := {1 \over 2}\,\big( f(p) + f(-p) \big)
\ee
and notice that, since $(\Pi (f \!\cdot\! g))(p) = f(p)\,(\Pi g)(p)$ for any even $f$, we have in particular
$$ 
\Pi\, M^{\ac}_{z} - M^{\ac}_{z}\, \Pi = 0\,, \qquad \Pi \,\Lambda^{\ac}_{z} - \Lambda^{\ac}_{z}\, \Pi = 0\,.
$$
Then, by \cite[Th. 2.1]{P01} here employed with $\tau:=\Pi\gamma$, we obtain the following
\begin{theorem}\label{t:resolvent}
For any $\ac\in (\RE \backslash \{0\}) \cup \{\infty\}$, the linear bounded operator
$$
R^{\ac}_{z} := R_{z}^{0} + G_{z}\, \Pi\, \Lambda^{\ac}_{z}\, \Pi\, G^{*}_{\bar z}\quad \mbox{with $z\in\CO\backslash\RE$}\,,
$$ 
is the resolvent of a self-adjoint extension $L_{\ac}$ of the densely defined, closed symmetric operator $L_{0}\!\upharpoonright\!\ker(\gamma)$. Such an extension acts on its domain
$$
\dom(L_{\ac}):= \big\{f\in L^{2}(\RE^{2}) \;\big|\; f=f_{z}+G_{z}\,\Lambda^{\ac}_{z}\,\Pi\,\gamma f_{z}\,,\; f_{z}\in\dom(L_{0})\big\}
$$
by
\begin{equation}\label{Lac}
(L_{\ac}-z)f=(L_{0}-z)f_{z}\,.
\end{equation}
%Moreover, $\sigma(L_{\ac}) = \sigma_{ac}(L_{\ac}) = \RE$.
\end{theorem}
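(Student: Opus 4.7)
The plan is to apply the abstract self-adjoint extension theorem \cite[Th.~2.1]{P01} to the densely defined closed symmetric operator $L_0\!\upharpoonright\!\ker(\gamma)$, taking $\tau:=\Pi\,\gamma$ as the auxiliary trace map and the projected multiplication operator $\Pi M^{\ac}_z\Pi$ as the boundary parameter. Essentially every ingredient required by the abstract framework has already been produced in the paragraphs preceding the theorem, so the task reduces to checking the hypotheses of \cite[Th.~2.1]{P01} and transcribing its conclusions.

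First I would assemble the data. The lemma above shows $\gamma$ to be bounded from $\dom(L_0)$ (endowed with the graph norm) into $L^2(\RE,|p|\,dp)$ with dense kernel in $L^2(\RE^2)$; composing with the bounded self-adjoint projection $\Pi$ yields a bounded $\tau:\dom(L_0)\to \Pi\,L^2(\RE,|p|\,dp)$ with $\ker(\tau)\supseteq \ker(\gamma)$, whence $\overline{\ker(\tau)}=L^2(\RE^2)$. The corresponding Poisson-type map is $\widetilde G_z:=(\tau R^{0}_{\bar z})^{*}=G_z\,\Pi$, once the $L^2(\RE,|p|\,dp)/L^2(\RE,|p|^{-1}dp)$ duality (induced by the $L^2(\RE)$ scalar product) is fixed. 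The identities $(M^{\ac}_z)^{*}=M^{\ac}_{\bar z}$ and $M^{\ac}_z-M^{\ac}_w=(w-z)\,G^{*}_{\bar w}G_z$ were derived explicitly above, and since $m^{\ac}_z(p)$ is even in $p$, both $M^{\ac}_z$ and $\Lambda^{\ac}_z$ commute with $\Pi$. Consequently the projected operator $\Pi M^{\ac}_z\Pi$, acting on the closed subspace $\Pi\,L^2(\RE,|p|\,dp)$, inherits the symmetry $(\Pi M^{\ac}_z\Pi)^{*}=\Pi M^{\ac}_{\bar z}\Pi$, the Krein-type identity $\Pi M^{\ac}_z\Pi-\Pi M^{\ac}_w\Pi=(w-z)\,\widetilde G^{*}_{\bar w}\widetilde G_z$ and the bounded inverse $\Pi\Lambda^{\ac}_z\Pi$ supplied by \eqref{LaM}.

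Plugging these data into \cite[Th.~2.1]{P01} would immediately yield a self-adjoint extension whose resolvent reads $R^{0}_z+\widetilde G_z(\Pi M^{\ac}_z\Pi)^{-1}\widetilde G^{*}_{\bar z}$, which collapses to the announced $R^{\ac}_z=R^{0}_z+G_z\,\Pi\,\Lambda^{\ac}_z\,\Pi\,G^{*}_{\bar z}$ via $\widetilde G_z=G_z\Pi$, $\Pi^{2}=\Pi$ and the $\Pi$--$\Lambda^{\ac}_z$ commutation. The same abstract machinery provides the domain decomposition $f=f_z+G_z\Lambda^{\ac}_z\Pi\gamma f_z$ with $f_z\in\dom(L_0)$ and the action \eqref{Lac}. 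The main obstacle is not conceptual but notational: the dual pairing between the weighted spaces $L^2(\RE,|p|\,dp)$ and $L^2(\RE,|p|^{-1}dp)$ must be handled consistently so that adjoints, products and inverses land in the correct space, and one must be careful that $\tau f_z$ actually lies in the domain of $\Lambda^{\ac}_z$ (i.e.\ in $L^2(\RE,|p|\,dp)$) for every $f_z\in\dom(L_0)$, which is guaranteed by the bound on $\gamma$ proven in the lemma. Once that bookkeeping is in place the whole construction is a direct translation of the abstract theorem, and nothing specific to delta-interactions enters beyond the explicit form of $m^{\ac}_z(p)$.
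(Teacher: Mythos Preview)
Your proposal is correct and follows exactly the approach the paper takes: the paper's entire proof consists of the single sentence ``by \cite[Th.~2.1]{P01} here employed with $\tau:=\Pi\gamma$'', and you have spelled out precisely that application, together with the verification of the hypotheses (boundedness and dense kernel of $\tau$, the commutation $\Pi\Lambda^{\ac}_z=\Lambda^{\ac}_z\Pi$, and the Kre\u\i n-type identity for $M^{\ac}_z$) which the paper leaves implicit in the preceding paragraphs. Your reduction of $G_z\Pi\,(\Pi M^{\ac}_z\Pi)^{-1}\Pi\gamma$ to $G_z\Lambda^{\ac}_z\Pi\gamma$ via $\Pi^2=\Pi$ and the commutation is also the way the stated domain description arises.
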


\begin{remark}\label{remBC}
Let us consider the function
$$ 
\phi := \Lambda^{\ac}_{z}\,\Pi\,\gamma\, f_z \in L^2(\RE,|p|^{-1}dp)\,,
$$
and notice that by construction we have $\phi \in \ran (\Pi)$, i.e., $\Pi \phi = \phi$. Then, recalling the previous relations \eqref{MtauG} - \eqref{LaM}, by a standard computation (see, e.g., \cite{P01}) it can be  inferred that any $f \in \dom(L_{\beta})$ fulfils the boundary condition
\begin{equation}\label{BCphi}
\Pi\,\widehat{\gamma}\,f = {1 \over \ac}\,\phi\,.
\end{equation}
Moreover, on account of the basic identity $(L - z) G_z \phi = \phi\,\delta_{\Sigma_{0}}$, 
where $\phi\delta_{\Sigma_{0}}$ is the tempered distribution supported on $\Sigma_{0}:=\{(q,p)\!\in\!\RE^{2}\,|\, q=0\}$ defined by 
$$
\big(\phi\,\delta_{\Sigma_{0}}\big)(\varphi):=\int_{\Sigma_{0}}\!\!dp\;\phi(p)\,\varphi(0,p)\,,\quad\; \mbox{for any $\varphi\in\S(\RE^{2})$}\,,
$$
from the above relations \eqref{Lac} - \eqref{BCphi} we readily infer that
\begin{equation*}%\label{LacL0}
L_{\ac} f = L f - \phi\,\delta_{\Sigma_{0}} \,,
\end{equation*}
or, equivalently,
\[
%\begin{equation}\label{LacDel}
L_{\ac} f = L f - \ac\, \big(\Pi\,\widehat{\gamma}\, f\big)\, \delta_{\Sigma_{0}}\,.
%\end{equation}
\]
\end{remark}

\subsection{The classical perturbed dynamics}

The action of the unitary group $e^{- i t L_{\ac}}$ ($t \in \RE$) describing the dynamics induced by $L_{\ac}$ can be explicitly characterized by means of elementary functional calculus starting from the corresponding resolvent operator $R^{\ac}_{z}$ introduced in Theorem \ref{t:resolvent}.

In the sequel, we indicate with $e^{- i t L_{0}}$ ($t \in \RE$) the free unitary group
\begin{equation}\label{ExpFree}
\big(e^{- i t L_{0}} f\big)(q,p) = f\Big(q - {p t \over m} , p\Big)\,.
\end{equation}

\begin{proposition}\label{exp-cl} Let $\ac \in (\RE\backslash\{0\}) \cup \{\infty\}$ and $f \in L^2(\RE^2)$. Then, for all $t \in \RE$ there holds
\begin{equation}\label{ftal1}
\big(e^{- i t L_{\ac}} f\big)(q,p) = \big(e^{- i t L_{0}} f\big)(q,p) \,-\, \frac{\HE(t\,q\,p)\; \HE\big({|p\,t| \over m} - |q|\big)}{1 + \sgn(t)\,{2i\,|p| \over m\,\ac}}\, \big(e^{- i t L_{0}} f_{ev}\big)(q,p)\;,
\end{equation}
where $f_{ev}(q,p):=f(q,p)+f(-q,-p)$.
\end{proposition}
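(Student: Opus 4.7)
The plan is to verify \eqref{ftal1} by matching the Laplace transform of its proposed right-hand side against the resolvent formula of Theorem~\ref{t:resolvent}. Let $U(t) f$ denote the right-hand side of \eqref{ftal1}; strong continuity of $t \mapsto U(t) f$ in $L^{2}(\RE^2)$ is immediate from the explicit expression. By the standard identity $-i\,R_{z}(A) = \int_{0}^{\infty}\! e^{izt}\, e^{-itA}\, dt$, valid for $\Im z > 0$ and any self-adjoint $A$, together with Stone's uniqueness theorem, it suffices to show
\[
-i\,R^{\ac}_{z} f(q,p) \,=\, \int_{0}^{\infty}\! e^{izt}\, \big(U(t) f\big)(q,p)\, dt \qquad (\Im z > 0)
\]
on a dense subset of $f \in L^{2}(\RE^2)$, together with the mirror identity over $t \in (-\infty, 0)$ for $\Im z < 0$.

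For the free component of $U(t) f$, the one-sided Laplace transform of $(e^{-itL_{0}} f)(q,p) = f(q - p\,t/m, p)$ is computed via the change of variable $q' = q - p\,t/m$ and directly reproduces the convolution against the kernel $g_{z}(q,p)$ of $R^{0}_{z}$ from \eqref{res}; the support condition $\HE(q\,p\,\Im z)$ of $g_{z}$ appears naturally from the convergence of the oscillatory integral at $t = +\infty$. This yields $-i\,R^{0}_{z} f$.

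For the correction, on $t > 0$ the indicator $\HE(t\,q\,p)\,\HE(|p\,t|/m - |q|)$ restricts $(q,p)$ to $q\,p > 0$ and enforces $t \ge m|q|/|p|$. Substituting $s = t - m|q|/|p|$, and using the identity $|q|/|p| = q/p$ on $\{q\,p > 0\}$, rewrites the Laplace transform of the correction as
\[
\frac{\HE(q\,p)\, e^{imzq/p}}{1 + 2i|p|/(m\ac)}\, \int_{0}^{\infty}\! e^{izs}\, f_{ev}(-p\,s/m, p)\, ds\,.
\]
The factor $\HE(q\,p)\,e^{imzq/p}$ (up to a constant in $m,|p|$) reconstructs the kernel $g_{z}(q,p)$ on the $\Im z > 0$ branch; the prefactor $(1 + 2i|p|/(m\ac))^{-1}$ is a rescaling of $(m^{\ac}_{z}(p))^{-1} = \Lambda^{\ac}_{z}(p)$ for $\Im z > 0$; and the residual $s$-integral, after splitting $f_{ev}(q,p) = f(q,p) + f(-q,-p)$ and performing the analogous change of variable in each piece, coincides with $(\Pi\, G^{*}_{\bar z} f)(p)$, where the formula for $G^{*}_{\bar z}$ is obtained from $G_{z} u(q,p) = g_{z}(q,p)\, u(p)$ via the duality between $L^{2}(\RE,|p|^{-1}dp)$ and $L^{2}(\RE,|p|\,dp)$. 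Since $m^{\ac}_{z}(p)$ is even in $p$, $\Pi$ commutes with $\Lambda^{\ac}_{z}$ and $\Pi\,\Lambda^{\ac}_{z}\,\Pi = \Lambda^{\ac}_{z}\,\Pi$; reassembling the pieces produces $-i\,G_{z}\,\Pi\,\Lambda^{\ac}_{z}\,\Pi\, G^{*}_{\bar z} f(q,p)$, closing the verification for $\Im z > 0$.

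The case $t < 0$ with $\Im z < 0$ is entirely parallel: the branch of $g_{z}$ now supported on $q\,p < 0$ matches the indicator $\HE(t\,q\,p)$ for $t < 0$, while the sign change of $\sgn(\Im z)$ inside $m^{\ac}_{z}$ matches the factor $1 + \sgn(t)\,2i|p|/(m\ac)$ at $\sgn(t) = -1$. The main technical obstacle I anticipate is the careful bookkeeping of the weights $|p|^{\pm 1}$ introduced by the duality pairing between $L^{2}(\RE,|p|^{-1}dp)$ and $L^{2}(\RE,|p|\,dp)$, and the algebraic identities relating the transmission coefficient $(1 + \sgn(t)\,2i|p|/(m\ac))^{-1}$ to $(m^{\ac}_{z}(p))^{-1}$; once these are in place, both Laplace identities reduce to direct computation.
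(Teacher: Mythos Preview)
Your approach is correct and closely related to the paper's own proof, but it runs in the opposite direction. The paper starts from the explicit resolvent $R^{\ac}_z$ of Theorem~\ref{t:resolvent} and recovers $e^{-itL_{\ac}}$ by \emph{inverting} the Laplace transform,
\[
e^{-itL_{\ac}}f \;=\; \frac{1}{2\pi i}\,\lim_{n\to\infty}\int_{ic-n}^{ic+n}\! dz\;e^{-izt}\,R^{\ac}_z f\,,
\]
and then carries out a fairly delicate Fourier-type computation to identify the result with the stated formula. You instead take the claimed right-hand side $U(t)f$ as given, compute its \emph{forward} Laplace transform, and match it against $-i R^{\ac}_z f$; uniqueness of the Laplace transform on strongly continuous, exponentially bounded $L^2$-valued functions then forces $U(t)f = e^{-itL_{\ac}}f$. (What you call ``Stone's uniqueness theorem'' is really this injectivity statement, or equivalently the fact that a $C_0$-semigroup is determined by its resolvent; Stone's theorem per se is not needed here.)

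Your route is arguably more elementary, since forward Laplace transforms of explicit expressions are simpler to evaluate than the contour inversions in the paper, and the change of variables $s = t - m|q|/|p|$ followed by the identification of $g_z$, $\Lambda^{\ac}_z$, and $\Pi\,G^*_{\bar z}$ goes through exactly as you sketch (the constants $im/|p|$ and $-im/(2|p|)$ cancel correctly against the factor $1+2i|p|/(m\ac)$). The trade-off is that your method is a \emph{verification}: it requires the formula \eqref{ftal1} as input, whereas the paper's inversion genuinely \emph{derives} it. Two minor points you should make explicit in a full write-up: the strong continuity of $t\mapsto U(t)f$ at the jump locus of the Heaviside factors (this follows by dominated convergence, since the support of the correction has vanishing $(q,p)$-measure as $t\to t_0$), and the uniform bound $\|U(t)f\|_{L^2}\le 3\|f\|_{L^2}$ needed for the Laplace integral to converge absolutely.
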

\begin{proof} Firstly, let us remark that on account of the results mentioned in Theorem \ref{t:resolvent} we have
\begin{align*}
& \big(R^{\ac}_{z} f\big)(q,p) = \big(R^{0}_{z} f\big)(q,p) + \big(G_z \Lambda^{\ac}_z \Pi \gamma R^{0}_{z} f\big)(q,p) \\
& = \int_{\RE}\!\! dq' \Bigg[g_z(q\!-\!q',p)\,f(q',p) + {g_z(q,p) \over {1 \over \ac}\!-\! \sgn(\Im z)\,{i m \over 2|p|}}\, 
{1 \over 2}\Big(\gamma g_z(\cdot -\! q',p)f(q',p) + \gamma g_z(\cdot -\! q',-p)f(q',-p) \Big)\Bigg] \\
& = \sgn(\Im z)\,\frac{i\,m}{|p|} \int_{\RE}\!\! dq' \Bigg[\HE\big((q-q')\,p \Im z\big)\;{e^{i m z (q-q')/p}}\, f(q',p) \\
& \hspace{1.5cm} - {\HE(q p \Im z) \over 1 + \sgn(\Im z)\,{2 i|p| \over m \ac}} \Big(
\HE(-q' p \Im z)\, e^{i m z (q-q')/p}\, f(q',p) 
+ \HE(q' p \Im z)\,e^{i m z (q+q')/p}\, f(q',-p) \Big)\Bigg]\,.
\end{align*}
In the following we show how to derive Eq. \eqref{ftal1} for $t > 0$. Analogous arguments can be employed in the case $t < 0$ (see, in particular Eq. \eqref{2.11a} below), but for brevity we omit the details of the related computations.

By standard functional calculus one has
\begin{equation}\label{RLap}
\big(R^{\ac}_{z} f\big) = i \int_{0}^{\infty}\!\! dt\;e^{i z t}\,\big(e^{- i t L_{\ac}} f\big) \qquad \mbox{for\, $\Im z > 0$}\,.
\end{equation}
Inverting the Laplace transform in the above relation yields the following, for any $c > 0$, $t > 0$ and $f\in \dom(L_{\beta})$ (see \cite[Ch. III, Cor. 5.15]{EN}):
$$ 
\big(e^{-it L_{\ac}}f\big)= {1 \over 2\pi i} \lim_{n\to \infty}\int_{i c - n}^{i c + n}\! dz\; e^{-i z t}\,\big(R^{\ac}_{z} f\big) = {e^{c t} \over 2\pi i} \lim_{n\to \infty}\int_{-n}^{n}\! dk\; e^{- i t k}\,\big(R^{\ac}_{k + i c} f\big)\,. 
$$ 
Upon substitution of the previously mentioned expression for $R^{\ac}_{z} f$, keeping in mind that $\Im z = c > 0$ we obtain
\begin{align*}
& \big(e^{-it L_{\ac}}f\big)(q,p) \\
& = \frac{m}{|p|}\,{e^{c t} \over 2\pi} \lim_{n\to \infty}\int_{-n}^{n}\! dk\; e^{- i t k} \int_{\RE}\!\! dq' \Bigg[\HE\big((q-q')\,p\big)\;{e^{i m (k + i c) (q-q')/p}}\, f(q',p) \\
& \hspace{1.5cm} - {\HE(q p) \over 1 + {2 i|p| \over m \ac}}\, \Big(
\HE(-q' p)\, e^{i m (k + i c) (q-q')/p}\, f(q',p) 
+ \HE(q' p)\,e^{i m (k + i c) (q+q')/p}\, f(q',-p) \Big)\Bigg] \\
& = \frac{m}{|p|}\, {e^{c (t - m q/p)} \over 2\pi} \lim_{n\to \infty}\int_{-n}^{n}\! dk\; e^{i (m q/p - t) k} \int_{\RE}\!\! dq' \Bigg[ e^{- i m (k + i c) q'/p}\; \HE\big((q-q')\,p\big)\, f(q',p) \\
& \hspace{1.5cm} 
- {\HE(q p) \over 1 + {2 i|p| \over m \ac}}\, \Big(
e^{- i m (k + i c) q'/p}\; \HE(-q' p)\, f(q',p) 
+ e^{i m (k + i c) q'/p}\; \HE(q' p)\,f(q',-p) \Big)\Bigg]\,.
\end{align*}
The latter identity can be rephrased in terms of unitary Fourier transforms and of their inverses; more precisely, denoting these transforms respectively with $\mathfrak{F}$ and $\mathfrak{F}^{-1}$, we have
\begin{align*}
& \big(e^{-it L_{\ac}}f\big)(q,p) \\
& = \frac{m}{|p|}\, e^{c (t - m q/p)}\,
\mathfrak{F}^{-1} \Big(\mathfrak{F}\big(\HE\big((q-\,\cdot)\,p\big)\, f(\,\cdot\,,p)\big)\big((\ast + i c)m/p\big) \Big)(m q/p - t)
\\
& \hspace{0.5cm} 
- {\HE(q p) \over 1 +{2 i|p| \over m \ac}}\,\frac{m}{|p|}\, e^{c (t - m q/p)} \Bigg[
\mathfrak{F}^{-1} \Big(\mathfrak{F}\big(\HE\big(-(\cdot) p\big)\, f(\,\cdot\,,p)\big)\big((\ast + i c)m/p\big)\Big)(m q/p - t) \\
& \hspace{5.cm} +
\mathfrak{F}^{-1} \Big(\mathfrak{F}\big(\HE\big((\cdot)\,p\big)\, f(\,\cdot\,,-p)\big)\big(- (\ast + i c)m/p\big)\Big)(m q/p - t)\Bigg]\,.
\end{align*}

In view of the basic identity 
$$
\mathfrak{F}^{-1}\Big(\mathfrak{F}h\big(a((\ast) + ic)\big)\Big)(q) = {e^{c q} \over |a|}\,h(q/a) \qquad \mbox{for $a \in \RE \backslash \{0\}$, $c>0$, $q \in \RE$}\,,
$$
which holds true whenever ${e^{c \cdot} \over |a|}\,h(\cdot /a) \in L^2(\RE)$,  by elementary computations we obtain
\begin{align*}
& \big(e^{-it L_{\ac}}f\big)(q,p) \\
& = \HE\Big({p^2 t \over m}\Big)\,f\Big(q - {p t \over m},p\Big) - {\HE(q p)\, \HE\Big({p^2 t \over m} - q p\Big) \over 1 + {2 i|p| \over m \ac}}\, \Bigg[f\Big(q - {pt \over m},p\Big) + f\Big(\!-q + {p t \over m},-\,p\Big) \Bigg]\,.
\end{align*}
Taking into account  Eq. \eqref{ExpFree}, the above relation is equivalent to Eq. \eqref{ftal1} for $t > 0$, since in this case $\HE\big({p^2 t \over m}\big) = 1$ and $\HE(q p)\, \HE\big({p^2 t \over m} - q p\big) = \HE(q p) \HE\big({|p| t \over m} - |q|\big)$. Then Eq. \eqref{ftal1} is extended to hold for any $f\in L^{2}(\RE^{2})$ by density.

For $t < 0$, in place of Eq. \eqref{RLap} one should consider the identity $$\big(R^{\ac}_{z} f\big) = -i \int_{-\infty}^{0}\!\! dt\;e^{i z t}\,\big(e^{- i t L_{\ac}} f\big)\qquad \mbox{for\, $\Im z < 0$}\,. $$ Then, analogous computations  yield 
\begin{equation}\label{2.11a}
\begin{aligned}
& \big(e^{-it L_{\ac}}f\big)(q,p) \\
& = \HE\Big(\!-{p^2 t \over m}\Big)\,f\Big(q - {p t \over m},p\Big) - {\HE(- q p)\, \HE\Big(\!- {p^2 t \over m} + q p\Big) \over 1 - {2 i|p| \over m \ac}}\, \Bigg[f\Big(q - {pt \over m},p\Big) + f\Big(-q + {p t \over m},-p\Big) \Bigg]\,,
\end{aligned}
\end{equation}
which again reproduces the relation written in Eq. \eqref{ftal1}, thus proving the thesis.
\end{proof}
\begin{remark} Formula \eqref{ftal1} shows that $e^{-itL_{\beta}}$ defines a group of evolution in 
 $L^{\infty}(\RE^2)$.
\end{remark}
\begin{remark}\label{RemDir}
Notice that the free operator $e^{- i t L_{0}}$ maps real-valued functions into real-valued functions. The same does not hold true for the perturbed analogue $e^{- i t L_{\ac}}$, unless $\ac = \infty$ (corresponding to the case of Dirichlet boundary conditions, for which there is complete reflection); in this particular case, Eq. \eqref{ftal1} reduces to
\begin{align*}
&\big(e^{-it L_{\infty}}f\big)(q,p) \\
& = \left[1 - \HE(t\,q\,p)\;\HE\!\left({|p\,t| \over m} - |q|\right)\!\right] f\Big(q - {pt \over m},p\Big)
  -\,\HE(t\,q\,p)\; \HE\!\left({|p\,t| \over m} - |q|\right) f\Big(\!-q + {pt \over m},-\,p\Big)\,.
\end{align*}
This also justifies our introduction of the projector $\Pi$ defined in \eqref{proj}: it leads to a family of self-adjoint extensions containing the generator of the dynamics corresponding to complete reflection.
\end{remark}

\subsection{The classical wave operators and scattering operator}\label{waveop}
In this section we find explicit formulae for the classical wave operators defined by 
\be\label{w-class}
W^{\pm}_{\beta}f:=\lim_{t\to\pm\infty}e^{   i tL_{0}}e^{ - i  t L_{\beta}}f
\ee
and for the corresponding classical scattering operator 
\[
S^{cl}_\beta  := (W_\beta^{+})^{*}\, W_\beta^{-}\,.
\]
\begin{remark}
Notice that the classical wave operators for a couple of flows in the phase space $\RE^{2}$, $\varphi^{0}_{t}$ (the free one) and $\varphi_{t}$ (the interacting one),  are defined pointwise by 
$$
w^{\pm}:=\lim_{t\to\pm\infty}\varphi_{-t}\circ\varphi_{t}^{0}
$$
(see, e.g., \cite[Def. 3.4.4]{Th}). These induces the wave operators acting on functions defined by 
$$
W^{\pm}f:=\lim_{t\to\pm\infty}f(\varphi_{-t}\circ\varphi_{t}^{0})=\lim_{t\to\pm\infty}e^{   i tL_{0}}f(\varphi_{-t})\,.
$$
This justifies our definition, taking into account the evolution $e^{ - i  t L_{\beta}}$  which is not induced (unless $\beta=0$ and $\beta=\infty$) by a flow in the phase space.
\end{remark}
\begin{proposition}\label{prop-wave-cl} The limits in \eqref{w-class} exist pointwise for any $\xi\equiv(q,p)\!\in\! \RE^2$ with $qp\not=0$ and in $L^{2}(\RE^2)$ for any $f\in L^{2}(\RE^2)$; moreover, there holds
\be\label{WOcl+}
\big(W_\beta^{\pm} f\big)(q,p)  = f(q, p) - \; \frac{\HE(\mp q\,p)}{1 \pm {2i\,|p| \over m\,\ac}}\  f_{ev}(q, p)\,,  
\ee
where $f_{ev}:=f(q,p)+f(-q,-p)$. Furthermore, the scattering operator is given by 
\begin{equation}\label{Sbeta}
\big(S^{cl}_\beta f\big)(q,p) = f(q, p) - \, \frac{f_{ev}(q, p)}{1 - {2i\,|p| \over m\,\ac}}\;.
 \end{equation}
\end{proposition}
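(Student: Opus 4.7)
The plan is to derive an explicit formula for $(e^{itL_{0}} e^{-itL_{\beta}} f)(q,p)$ from Proposition \ref{exp-cl} and then pass to the limit $t\to\pm\infty$. By \eqref{ExpFree} and the fact that $(e^{itL_{0}})^{-1} = e^{-itL_{0}}$, one has $(e^{itL_{0}} g)(q,p) = g(q+pt/m,p)$. Applying this translation to the right-hand side of \eqref{ftal1}, the shifted arguments $q - pt/m$ inside the free evolutions of $f$ and $f_{ev}$ become simply $q$, yielding
\begin{equation*}
(e^{itL_{0}} e^{-itL_{\beta}} f)(q,p) = f(q,p) - \frac{\HE\!\big(tp(q+pt/m)\big)\,\HE\!\big(|pt|/m - |q+pt/m|\big)}{1 + \sgn(t)\,\frac{2i|p|}{m\beta}}\, f_{ev}(q,p).
\end{equation*}

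The next step is to analyse the two Heaviside factors pointwise under the assumption $qp\neq 0$. For $t\to+\infty$, once $|pt|/m>|q|$ the quantity $q+pt/m$ carries the sign of $p$, so the first Heaviside identically equals $1$; a short case analysis on $\sgn(p)$ reduces $|pt|/m-|q+pt/m|$ to either $-q$ (if $p>0$) or $+q$ (if $p<0$), both of which yield $\HE(-qp)$. The mirror computation for $t\to-\infty$ gives $\HE(qp)$. Combined with $\sgn(t)\to\pm 1$, this establishes \eqref{WOcl+} pointwise on the full-measure set $\{qp\neq 0\}\subset\RE^{2}$. To upgrade to $L^{2}(\RE^{2})$ convergence, I would invoke dominated convergence: the integrand of the squared norm of the difference is bounded uniformly in $t$ by $4|f_{ev}(q,p)|^{2}$, which is integrable for $f\in L^{2}(\RE^{2})$.

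For the scattering operator, I would introduce the parity $J$, $(Jf)(q,p):=f(-q,-p)$, and the multiplication operators $g_{\pm}(q,p) := \HE(\mp qp)\big/\big(1\pm\frac{2i|p|}{m\beta}\big)$, both even in $(q,p)$ so that $Jg_{\pm}=g_{\pm}J$. Then \eqref{WOcl+} reads $W^\pm_{\beta} = I - g_{\pm}(I+J)$, hence $(W^+_{\beta})^{*} = I - \bar g_{+}(I+J)$, and using $(I+J)^{2} = 2(I+J)$ the product $(W^+_{\beta})^{*}W^{-}_{\beta}$ collapses to $I - (\bar g_{+} + g_{-} - 2\bar g_{+} g_{-})(I+J)$. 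The pointwise identities $\HE(qp)+\HE(-qp)=1$ and $\HE(qp)\HE(-qp)=0$, both valid a.e.\ since $qp\neq 0$ a.e., give $\bar g_{+} + g_{-} = 1/\big(1-\frac{2i|p|}{m\beta}\big)$ and $\bar g_{+} g_{-} = 0$, reproducing precisely \eqref{Sbeta}. The main obstacle is the bookkeeping of Heaviside factors through the various sign changes in $t$, $q$, $p$; once that case analysis is in place, the remaining manipulations are routine algebra.
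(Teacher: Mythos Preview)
Your proof is correct and follows essentially the same approach as the paper: both compute $e^{itL_0}e^{-itL_\beta}f$ explicitly from Proposition~\ref{exp-cl}, analyse the Heaviside factors as $t\to\pm\infty$, and then compose $(W^+_\beta)^*$ with $W^-_\beta$ using $\HE(qp)+\HE(-qp)=1$ and $\HE(qp)\HE(-qp)=0$. Your algebraic packaging via the parity $J$ and the multipliers $g_\pm$ is a clean way to organise the scattering-operator computation, but it is the same calculation as in the paper; likewise, your explicit appeal to dominated convergence for the $L^2$ limit just makes precise what the paper leaves implicit.
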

\begin{proof}
Using the results derived previously (see, in particular,  Eq. \eqref{ftal1},   and recall that $e^{ i t L_{0}}\,f(q,p) = f(q + pt/m,p)$ for all $t\in\RE$), it can be readily inferred that, for any $t \in \RE$,
\begin{align*}
& \big(e^{   i t L_{0}}\,e^{ -  i t L_{\ac}}\,f\big)(q,p) 
= f(q, p) - \HE\!\left(\!t\,\Big(q +  \frac{pt}{m}\Big) \,p\!\right)\, \HE\!\left(\!{|p t| \over m} - \Big|q + \frac{pt}{m}\Big|\!\right) \frac{f_{ev}(q, p)}{1 + \sgn(t)\,{2i\,|p| \over m\,\ac}}\,.
\end{align*}
Taking into account the limits 
\[
\lim_{t\to\pm\infty }  \HE\!\left(t\,\Big(q + \frac{pt}{m}\Big) \,p\right) = 1 \,,
\]
\[
\lim_{t\to+ \infty } \HE\!\left(\!{|p t| \over m} - \Big|q +  \frac{pt}{m}\Big|\!\right)  = \lim_{t\to+ \infty } \HE\!\left(\!{|p| t \over m} \left( 1- \Big|\frac{q m}{pt}+1\Big|\!\right)\right) =    \HE\!\left( - \sgn (p) q  \right)=    \HE\!\left( -  q\, p  \right)\,,
\]
and 
\[
\lim_{t\to-  \infty } \HE\!\left(\!{|p t| \over m} - \Big|q +  \frac{pt}{m}\Big|\!\right)  = \lim_{t\to- \infty } \HE\!\left(\! - { |p| t \over m} \left( 1- \Big|\frac{q m}{pt}+1\Big|\!\right)\right) =    \HE\!\left(  \sgn (p) q  \right)=    \HE\!\left(   q\, p  \right)\,,
\]
we easily infer Eq. \eqref{WOcl+}. 

To prove formula \eqref{Sbeta}, we note that  the adjoint of $W_\beta^+$ with respect to the $L^2(\RE^2)$ inner product is
$$
\big((W_\beta^{+})^{*}\, f\big)(q,p) = f(q, p) - \HE(-q\,p)\; \frac{f_{ev}(q, p)}{1 - {2i\,|p| \over m\,\ac}}\,.
$$
Then, by composition we get
\begin{align*}
\big(S_{\ac}^{cl} f\big)(q,p) = \big((W_\beta^{+})^{*} W_\beta^{-}\, f\big)(q,p)=\; & \big(W_\beta^{-} f\big)(q, p) - \HE(-q\,p)\; \frac{(W_\beta^{-} f)_{ev}(q, p)}{1 - {2i\,|p| \over m\,\ac}} \\
=\; & f(q, p) - \HE( q\,p)\; \frac{f_{ev}(q, p)}{1 - {2i\,|p| \over m\,\ac}} - \HE( - q\,p)\; \frac{f_{ev}(q, p) }{1 - {2i\,|p| \over m\,\ac}} \\
=\; & f(q, p) - \, \frac{f_{ev}(q, p)}{1 - {2i\,|p| \over m\,\ac}} \;.
\end{align*}
%%
%%which can be equivalently restated as
%%\begin{equation}
%%\big(S^{\ac} f\big)(q,p) = \left(1 - {1 \over 1 - {2i\,|p| \over m\,\ac}}\right)f(q, p) + \left(-\,\frac{1}{1 - {2i\,|p| \over m\,\ac}}\right) f(- q , - p) \;.
%%\end{equation}
\end{proof}

\begin{remark}\label{r:Wpm-unitary} On account of Eq. \eqref{WOcl+}, it is easy to check that 
\[
W_\beta^{+} W_\beta^{-} f = W_\beta^{-} W_\beta^{+} f\,.
\]
Moreover, from the identity 
\[\frac1{1 + {2i\,|p| \over m\,\ac}} + \frac1{1 - {2i\,|p| \over m\,\ac}}  =\frac{2}{ \big|1 + {2i\,|p| \over m\,\ac}\big|^{2}}
\] 
 and a straightforward calculation it follows that 
\[
W_\beta^{\pm} (W_\beta^{\pm})^{*} f = (W_\beta^{\pm})^{*} W_\beta^{\pm} f = f\,.
\]
Hence, in particular, $ S^{cl}_\beta W_\beta^{+} f =  W_\beta^- f$. 
\end{remark}
\begin{remark} By arguments similar to those used in the proof of Proposition \ref{prop-wave-cl}, one gets that the limits 
$$
\breve W^{\pm}_{\beta}f:=\lim_{t\to\pm\infty}e^{   i tL_{\beta}}e^{ - i  t L_{0}}f
$$ 
%% exist pointwise in $\RE^2$, in $L^{2}(\RE^2)$ for any $f\in L^{2}(\RE^2)$ and
exist in $L^{2}(\RE^2)$ for any $f\in L^{2}(\RE^2)$ and
$$
\big(\breve W^{\pm}_{\beta}f\big)(p,q) = f(q, p) - \; \frac{\HE(\mp q\,p)}{1 \mp {2i\,|p| \over m\,\ac}}\  f_{ev}(q, p)\,.
$$
Therefore, by \cite[Ch. X, Th. 3.5]{Kato}, both $W^{\pm}_{\beta}$ and $\breve W^{\pm}_{\beta}$ are complete, and the absolutely continuous part of $L_{\beta}$ is unitarily equivalent to the absolutely continuous part of $L_{0}$, i.e., to  $L_{0}$ itself; thus $\sigma_{ac}(L_{\beta})=\sigma_{ac}(L_{0})=\RE$ and $L_{\beta}$ is unitarily equivalent to $L_{0}$.
\end{remark}

\section{Semi-classical limit of quantum Hamiltonians with delta potentials}%%\label{sec:Teta}
In this section we prove Theorems \ref{t:1} and \ref{t:2}. We start by recalling the definition and the basic properties  of the self-adjoint Hamiltonian $\H_\alpha$ in $L^2(\RE)$ corresponding to the  formal expression
$$
\H_\alpha := - \,{\hbar^2 \over 2m}\,\frac{d^{2}\ }{dx^{2}} + \alpha\, \delta_0 \,,
$$
where $\alpha$ is a real parameter  with dimensions $(\mbox{mass}) \times (\mbox{length})^3 \times (\mbox{time})^{-2}$, and $\delta_0$ is the Dirac delta-distribution, for more details we refer to \cite[Ch. 8]{Teta}.  
%%%Then, we define the coherent states $ \psih(\sigmaq,\sigmap,q,p) $ and, in Lemma \ref{l:main}, study the asymptotics for ``small'' $\hbar$ of $ e^{- i {t \over \hbar} \H_\alpha}\psih(\sigmaq,\sigmap,q,p)$. Hence, building on Lemma \ref{l:main} we prove Theorem  \ref{t:1} and \ref{t:2}. 

%
%In this section we investigate the semi-classical limit (in a sense to be specified in the following) of the quantum dynamics generated by the Hamiltonian corresponding to the following formal expression, for $\alpha \in \RE$
%({\footnote{The parameter $\alpha$ has dimensions $(\mbox{mass}) \times (\mbox{lenght})^3 \times (\mbox{time})^{-2}$.}}):
%$$
%\H_\alpha := - \,{\hbar^2 \over 2m}\,\Delta + \alpha\, \delta_0 \quad \mbox{on\; $L^2(\RE)$}\,.
%$$

\subsection{Quantum Hamiltonian with a delta potential\label{ss:Halpha}}
The self-adjoint operator  $\H_{\alpha}$ can be easily defined as a sum of quadratic forms. Alternatively, see \cite[Ch. I.3]{AGHH}, \cite[Ch. 8]{Teta}, it can be defined as a self-adjoint extension of the symmetric operator given by the restriction of the free Hamiltonian 
$$
\H_{0}:H^{2}(\RE)\subset L^{2}(\RE)\to L^{2}(\RE)\,,\quad \H_{0}:= - \,{\hbar^2 \over 2m}\,\frac{d^{2}\ }{dx^{2}}
$$
to $\C_{c}^{\infty}(\RE\backslash\{0\})$ (here $H^{2}(\RE)$ denotes the usual Sobolev space of order two, namely $H^{2}(\RE):=\{\psi\!\in\! L^{2}(\RE)\,|\,\psi''\!\in\! L^{2}(\RE)\}$\,):
\[
\dom(\H_\alpha) := \Big\{ \psi \in H^1(\RE) \cap H^2(\RE\backslash \{0\})\,\Big|\, \psi'(0_{+}) - \psi'(0_{-}) = \frac{2m\alpha}{\hbar^2}\,\psi(0)\Big\}\,,
\]
\[
\H_{\alpha} \psi(x) :=  - \frac{\hbar^2}{2m}\, \psi''(x) \qquad \mbox{for $x\neq 0$, $\psi \in \dom (\H_\alpha)$}\,.
\]

Next, let us recall that for $\alpha > 0$ the Hamiltonian $\H_{\alpha}$ has purely absolutely continuous spectrum $\sigma(\H_{\alpha}) \equiv \sigma_{ac}(\H_{\alpha}) = [0,\infty)$; in this case, a complete set of generalized eigenfunctions for $\H_{\alpha}$ is given by
\begin{equation}\label{vfiRdef}
\vfi_{k}^+(x) := {e^{i k x} \over \sqrt{2\pi}} + R_{+}(k)\, {e^{-i|k||x|} \over \sqrt{2\pi}}\,, \quad 
R_{+}(k) = -\,\frac{1}{1+i\,\frac{\hbar^2 |k|}{m\alpha}} \qquad (k \in \RE)\,.
\end{equation}
For later purposes we also define the functions 
\begin{equation}\label{vfiRdef-}
\vfi_{k}^-(x) := {e^{i k x} \over \sqrt{2\pi}} + R_{-}(k)\, {e^{i|k||x|} \over \sqrt{2\pi}}\,, \quad 
R_{-}(k) = -\,\frac{1}{1-i\,\frac{\hbar^2 |k|}{m\alpha}}\qquad
(k \in \RE)\,;
\end{equation}
this is also a complete set of generalized eigenfunctions of $\H_\alpha$ for $\alpha>0$. 

For $\alpha < 0$, the absolutely continuous spectrum of $\H_{\alpha}$ remains $\sigma_{ac}(\H_{\alpha}) = [0,\infty)$; again, this part of the spectrum is related to a set of generalized eigenfunctions like those in Eq. \eqref{vfiRdef} (or, equivalently, like those in Eq. \eqref{vfiRdef-}). In addition, $\H_{\alpha}$ possesses a proper eigenvalue $\lambda_\alpha < 0$; the explicit expressions for this eigenvalue and for the corresponding eigenfunction are, respectively,
$$ 
\lambda_\alpha := -\,{m \alpha^2 \over 2\hbar^2}\,, \qquad 
\vfi_{\alpha}(x) := {\sqrt{m\,|\alpha|} \over \hbar}\;e^{-{m |\alpha| \over \hbar^2}\,|x|}\,.
$$
Let us remark that $\vfi_{\alpha}$ is real-valued, positive and normalized in $L^2(\RE)$.

For any $\alpha \in \RE$, taking into account the above spectral decomposition of $\H_{\alpha}$, let us consider the bounded operators 
$$
\FF_{\pm} : L^2(\RE) \to L^2(\RE)\,, \qquad 
(\FF_{\pm}\, \psi)(k) := \int_{\RE}\! dx\; \overline{\vfi^\pm_k(x)}\; \psi(x) \,.
$$
%In the discussion on the wave operators we shall also refer to the bounded operator 
%$$
%\FF_{-} : L^2(\RE) \to L^2(\RE)\,, \qquad 
%(\FF_{-}\, \psi)(k) := \int_{\RE}\! dx\; \overline{\vfi^-_k(x)}\; \psi(x) \,.
%$$
Moreover we  introduce the orthogonal projectors
\begin{gather}
P_{ac} : L^2(\RE) \to L^2(\RE)\,, \qquad (P_{ac} \psi)(x) := \int_{\RE}\! dk\;\vfi_k^+(x)\; (\FF_{+}\, \psi)(k) \,, \label{defPac} \\
P_{\alpha} : L^2(\RE) \to L^2(\RE)\,, \qquad (P_{\alpha} \psi)(x) := \HE(-\alpha)\, \vfi_{\alpha}(x) \int_{\RE}\! dy\; \vfi_{\alpha}(y)\; \psi(y) \,; \label{defPpp}
\end{gather}
these fulfill
\begin{equation}
P_{ac} + P_{\alpha} = \uno\,. \label{orto}
\end{equation}

Eq.s \eqref{defPac} - \eqref{defPpp} reduce to
\begin{equation}
P_{ac} = \uno\,, \quad P_{{\alpha}} = 0 \qquad \mbox{for $\alpha > 0$}\,. \label{Pacuno}
\end{equation}

Taking into account the previously described spectral decomposition of $\H_{\alpha}$, for any $\alpha \!\in\! \RE$ the time evolution of any state $\psi \!\in\! L^2(\RE)$ induced by the unitary group $e^{- i {t \over \hbar} \H_\alpha}$ ($t \!\in \!\RE$) can be characterized as
%% \[
%% \big(e^{-i \frac{t}{\hbar} \H_\alpha}\, \psi \big)(x) = \int_{\RE}\! dk\; e^{-i {t \over \hbar}\frac{\hbar^2 k^2}{2m}}\; \vfi_{k}(x)\; \big(\Uac\, \psi \big)(k)\qquad (t \in \RE)  \,.
%% \]
\begin{equation}\label{qevol}
\big(e^{-i \frac{t}{\hbar} \H_\alpha}\, \psi \big)(x) = \int_{\RE}\! dk\; e^{-i {t \over \hbar}\frac{\hbar^2 k^2}{2m}}\; \vfi_{k}^+(x)\, (\FF_{+}\, \psi)(k) + e^{-i {t \over \hbar}\, \lambda_\alpha}\, (P_{\alpha}\psi)(x)\,.
\end{equation}

In the discussion above, in the definition of $P_{ac}$ and in Eq. \eqref{qevol}, one could equivalently use the generalized eigenfunctions $\vfi_k^-$ and the bounded operator $\FF_{-}$, respectively in place of $\vfi_k^+$ and $\FF_{+}$.  

\begin{remark}\label{remWpm} Existence and asymptotic completeness of the wave operators $\Omega_\alpha^\pm$ is well-known \cite{AGHH}. In particular, let us stress that $\Omega_\alpha^\pm$ are unitary on the absolutely continuous subspace for $\H_{\alpha}$, namely, on  $\ran (P_{ac})$ with $P_{ac}$  defined according to Eq. \eqref{defPac}; more precisely, there holds \cite[p. 531, Th. 3.2]{Kato}
\[
%\begin{equation}\label{WpmUn}
(\Omega_\alpha^{\pm})^{*} {\Omega}_\alpha^{\pm} = P_{ac}\,.
%\end{equation}
\]
%\BLUE{Perch\'e $\hat{\Omega}_\alpha^{\pm*}$?}\\
We also recall that the wave operators have an explicit expression in terms of the bounded operators $\FF_{\pm}$ and of the Fourier transform \[
\big(\FF\psi\big)(k) \equiv \widehat\psi(k): =  {1 \over \sqrt{2\pi}} \int_{\RE}\!\!dx\;e^{-i k x}\,\psi(x)\,,
\]
%(see, e.g., \cite[Th. 6.12]{Yafaev})
i.e., 
\begin{equation}\label{st-rep}
\Omega_\alpha^\pm = \FF_{\pm}^{*} \FF \,.
%\qquad \text{and} \qquad \Omega_\alpha^- = {\FF_{-}}^{*} \FF 
\end{equation}
%and 
%\[
%({\FF_{\pm}})^*\tilde \psi) (x) =  \int_{\RE}\! dk\;\vfi_k^\pm(x)\; \tilde \psi(k) \,.
%\]
Relation \eqref{st-rep} is well known in the case of perturbations by regular potentials and can also be proved, by essentially the same kind of proof, in the case of a singular perturbation (see \cite[Th. 5.5]{JST}).
\end{remark}

\subsection{Semiclassical evolution of a coherent state in presence of a delta potential}%%\label{ss:semiclassical}

Following the approach of \cite{Hag1} (with respect to the notation employed therein, we set $\alpha=1/2$), we focus our attention on coherent states of the form
\begin{equation}\label{CohSt}
\psih(x) \equiv \psih(\sigmaq,\sigmap,q,p;x) = \frac{1}{(2\pi \hbar)^{1/4} \sqrt{\sigmaq}}\; e^{- \frac{\sigmap}{4\hbar\sigmaq} (x-q)^2 + i \frac{p}{\hbar} (x-q)} \qquad (x \in \RE)\,,
\end{equation}
where $(q,p) \in \RE^2$ and $\sigmaq,\sigmap \in \CO$ are such that
\begin{equation}\label{sxsp1}
\Re \sigmaq > 0\,, \qquad \Re \sigmap > 0\,, \qquad
\Re \!\big[\overline{\sigmaq}\,\sigmap\big] = 1\,.
\end{equation}
The assumption \eqref{sxsp1} grants that conditions (1.1)-(1.4) in \cite{Hag1} are satisfied setting $A=\sigmaq$ and $B=\sigmap$; in fact, we have $\Re \!\big[\sigmap\,\sigmaq^{-1} \big] \!=\! |\sigmaq|^{-2}$ and $\Re \!\big[\sigmaq\, \sigmap^{-1} \big] \!=\! |\sigmap|^{-2} \Re(\sigmaq \overline{\sigmap}) \!=\! |\sigmap|^{-2} \Re(\overline{\sigmaq} \sigmap) \!=\! |\sigmap|^{-2}$.
%% the determination of the square root in $\sqrt \sigmaq$ will be specified later on. 
We remark that  the determination of the argument of complex numbers is such that square roots like $\sqrt{\sigmaq}$ fulfil $\Re \sqrt{\sigmaq} > 0$; this ensures, in particular, that $\sqrt{\overline{\sigmaq}\sigmap}\sqrt{\sigmaq} = |\sigmaq| \sqrt{\sigmap}$ for $\sigmaq$ and $\sigmap$ as in Eq. \eqref{sxsp1}.

Let us point out that the Fourier transform  with respect to $x$ of any state $\psih$ of the form \eqref{CohSt} reads
\begin{align}
\big(\FF\psih\big)(k)\equiv \psiht (\sigmaq,\sigmap,q,p;k) 
%& = {1 \over \sqrt{2\pi}} \int_{\RE}\!\!dx\;e^{-i k x}\,\psih(\sigmaq,\sigmap,q,p;x) \nonumber \\
& = {1 \over \sqrt{\sigmap}} \left({2 \hbar \over \pi}\right)^{\!\!1/4} e^{-{\hbar \sigmaq \over \sigmap}(k- p/\hbar)^2 -  i k q}\,. \label{psih}
\end{align}
%% {\BLUE{Here the branch cut for the square root in $\sqrt{\sigmap}$ is chosen in the following way: consider $\sqrt{\overline{\sigmaq}\sigmap}$ with a branch cut such that the square root is continuous on the set $z = \overline{\sigmaq}\sigmap \in 1 + i\RE$; then the square root in $\sqrt{\sigmap}$ is such that $\sqrt{\overline{\sigmaq}\sigmap}\sqrt{\sigmaq} = |\sigmaq| \sqrt{\sigmap}$.\\
%% Questa parte \`e poco chiara, come scelgo la determinazione della radice in $\sqrt{\sigmap}$?}} 

Moreover, all the states $\psih$ of the form \eqref{CohSt} fulfil the following relations (here $\widehat q$ and $\widehat p$ denote the position and momentum operators in the Schr\"odinger representation):
\begin{gather}%\label{psiL2}
\|\psih\|_{L^2(\RE)} = 1\;;  \nonumber\\
\langle \widehat{q} \rangle_{\psih} := \int_\RE\!\! dx\;x \;|\psih(x)|^2 = q\;, \qquad
\langle \Delta  \widehat{q}\rangle_{\psih} := \sqrt{\langle \widehat{q}^2 \rangle_{\psih}\! - \langle \widehat{q} \rangle_{\psih}^2} = \sqrt{\hbar}\;|\sigmaq|\;; \label{meanq}\\
\langle \widehat{p} \rangle_{\psih} := \int_\RE dk\;(\hbar k)\;|\psiht(k)|^2 = p\;, \qquad
\langle \Delta \widehat{p} \rangle_{\psih} := \sqrt{\langle \widehat{p}^2 \rangle_{\psih}\! - \langle \widehat{p} \rangle_{\psih}^2} = \sqrt{\hbar}\;{|\sigmap| \over 2}\;. \label{meanp}
\end{gather}

In the sequel we analyze the time evolution, generated by the unitary group $ e^{-i \frac{t}{\hbar} \H_\alpha}$ ($t \in \RE$),  of an initial state of the form (see also Eq. \eqref{initial})
\begin{equation}\label{psi0}
 \psih_{\sigmazero,\xi}(x)  = \psih(\sigmazero,\sigmazero^{-1},q,p;x)
 \qquad \big(\xi = (q,p)\big)\,.
\end{equation}
The assigned parameters $(q,p) \!\in\! \RE^2$ and $\sigmazero \!>\! 0$ correspond respectively to the mean position, momentum and (rescaled) standard deviations of the initial state $\psih_{\sigmazero,\xi}$ (cf. Eq.s \eqref{meanq} and \eqref{meanp}). In this connection, let us remark that $\psih_{\sigmazero,\xi}$ saturates the uncertainty relation, i.e., $\langle \Delta \widehat q \rangle_{\psih_{\sigmazero,\xi}} \langle \Delta \widehat p \rangle_{\psih_{\sigmazero,\xi}} \!=  \hbar/ 2$.
%% is sharply peaked in a neighborhood of $(q_0,p_0)$ for small values of $\hbar$.
%%\\{\BLUE{Che vuol dire che ``$\psih_{q_0,p_0}$ is sharply peaked in a neighborhood of $(q_0,p_0)$''?}}
%%%% When discussing the proofs of the forthcoming Lemmata and Theorems, we often restrict the attention to the case where
%%%% $$ q_0 < 0 \qquad \mbox{and} \qquad p_0 > 0\,. $$

To proceed let us recall  that the free evolution of the initial state $\psih$, determined by the unperturbed Hamiltonian operator $\H_0 := \hat{p}^2/(2m)$, is given by Eq.s \eqref{free} - \eqref{freeAsigmaq}. 
%\[
%%\begin{equation}\label{free-evolution1}
%\big(e^{- i {t \over \hbar} \H_0} \psih(\sigmaq,\sigmap,q,p)\big)(x) = e^{{i \over \hbar}\, \SA_t}\, \psih\big(\sigmaq_t,\sigmap,q_t,p;x\big) \qquad (t\in\RE)\,,
%%\end{equation} 
%\]
%where
%\[
%%\begin{equation}\label{free-evolution2}
%\SA_t := {p^2\,t \over 2m}\,, \qquad \sigmaq_t := \sigmaq + i\,{\sigmap t \over 2m}\,, \qquad q_t := q + {p t \over m}\,;
%%\end{equation}
%\]
%recall that we set the branch cut in $\sqrt{\sigmaq_t}$ in the definition of  $\psih\big(\sigmaq_t,\sigmap,q_t,p\big)$ with $\Re \sqrt{\sigmaq_t} > 0$, a choice which guarantees, in particular, the continuity in $t$.

We also note that  the Fourier transform of $e^{- i {t \over \hbar} \H_0} \psih$ is given by
\[
\Big(\!\FF \big(e^{- i {t \over \hbar} \H_0}\, \psih(\sigmaq,\sigmap,q,p)\big)\!\Big)(k)  = e^{- i {t \over \hbar}\,{\hbar^2 k^2 \over 2m}}\,\psiht (\sigmaq,\sigmap,q,p;k)\,. 
\]

Moreover, let us remark that for any initial state $\psih_{\sigmazero,\xi}$ of the form \eqref{psi0} one has 
\[%\label{psifree}
\big(e^{- i {t \over \hbar} \H_0} \psih_{\sigmazero,\xi} \big)(x) = e^{{i \over \hbar}\, \SA_t}\; \psih\big(\sigma_t,\sigmazero^{-1},q_t,p;x\big) \qquad (t\in\RE)\,,  \]
with
\[
\SA_t := {p^2\,t \over 2m}\,, \qquad \sigma_t := \sigmazero + i\,{t \over 2m\sigmazero}\,, \qquad q_t := q + {p t \over m}\,.
\]

Finally, let us repeat that to simplify the exposition in the sequel we restrict the attention to coherent states fulfilling the condition $qp \neq 0$.

In the following proposition we obtain a convenient formula for the action of the unitary group $e^{-i \frac{t}{\hbar} \H_\alpha}$ on a coherent state.
\begin{proposition}\label{propSemi}
For any $\psih$ of the form \eqref{CohSt} with $qp\not=0$, there holds
\begin{align}
\big(e^{-i \frac{t}{\hbar} \H_\alpha} \psih \big)(x) & = \big(e^{-i \frac{t}{\hbar} \H_0} \psih \big)(x) + \HE(q p)\, F^{\hbar}_{+,t}\big(\!-\sgn(q) |x|\big) + \HE(-q p)\, F^{\hbar}_{-,t}\big(\!- \sgn(q) |x|\big) \nonumber \\
& \qquad + E^{\hbar}_{1,t}(x) + E^{\hbar}_{2,t}\big(x\big) + E^{\hbar}_{{\alpha},t}(x) \,, \label{psital}
\end{align}
where we set
\begin{align}
F^{\hbar}_{\pm,t}(x) \,:=\, & \,\frac1{\sqrt{2\pi}} \int_{\RE}\! dk\; e^{-i \frac{\hbar t}{2m}\, k^2} e^{i k x}\, R_{\pm}(k)\; \psiht(k)\,, 
\label{F+t}
%\\
%F^{\hbar}_{-,t}(x) \,:=\, & \,\frac1{\sqrt{2\pi}} \int_{\RE}\! dk\; e^{-i \frac{\hbar t}{2m}\, k^2} e^{i k x}\, R_{-}(k)\; \psiht(k)\,, 
%\label{F-t}
\end{align}
\begin{equation}
E^{\hbar}_{1,t}(x) := \frac1{{2\pi}} \int_{\RE}\!\! dk\; e^{-i \frac{\hbar t}{2m}\, k^2} \Big(\!e^{ikx}\, R_{-}(k) + e^{-i|k||x|}\, |R_{+}(k)|^2\Big)\! \int_{\RE}\!\! dy\, \big(e^{i|k||y|} - e^{i \sgn(q) |k| y}\big)\, \psih (y)\,, \label{E1} \end{equation}
\begin{equation}\label{E2}
E^{\hbar}_{2,t}(x) := \frac{\sgn(q p)}{\sqrt{2\pi}} \int_{0}^{\infty}\!\! dk\; e^{-i \frac{\hbar t}{2m}\, k^2} e^{i \sgn(q p) k |x|}\, \Big[\,R_{-}(k) - R_{+}(k)\Big]\, \psiht\big(\!-\sgn(p) k\big)\,,
\end{equation}
and
\begin{equation}\label{Epp}
E^{\hbar}_{{\alpha},t}(x) := e^{-i {t \over \hbar}\, \lambda_\alpha}\, (P_{\alpha}\psi^{\hbar})(x)\,. 
\end{equation}
\end{proposition}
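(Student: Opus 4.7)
The plan is to begin with the spectral decomposition \eqref{qevol} applied to $\psih$; the discrete-spectrum contribution reduces at once to $E^{\hbar}_{\alpha,t}$ as in \eqref{Epp}. For the absolutely continuous part, use $\overline{R_{+}(k)}=R_{-}(k)$ to write
\[
(\FF_{+}\psih)(k)=\psiht(k)+R_{-}(k)\,J(k)\,,\qquad J(k):=\frac{1}{\sqrt{2\pi}}\int_{\RE}\!dy\;e^{i|k||y|}\,\psih(y)\,,
\]
and expand $\vfi_{k}^{+}(x)\,(\FF_{+}\psih)(k)$ as a sum of four summands. Integrated against $e^{-i\hbar tk^{2}/(2m)}dk$, the summand $(2\pi)^{-1/2}e^{ikx}\psiht(k)$ reconstructs exactly $(e^{-i(t/\hbar)\H_{0}}\psih)(x)$, leaving three contributions carrying either $\psiht$ or $J$.

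The next step is to split $J(k)=J_{0}(k)+J_{1}(k)$ with $J_{0}(k):=\psiht(-\sgn(q)|k|)$, the value obtained by replacing $|y|$ by $\sgn(q)y$ in the exponent (motivated by the concentration of $\psih$ near $y=q\neq 0$), and $J_{1}(k):=(2\pi)^{-1/2}\!\int dy\,[e^{i|k||y|}-e^{i\sgn(q)|k|y}]\,\psih(y)$. Inserting $J_{1}$ into the two $J$-bearing summands assembles exactly the expression $E^{\hbar}_{1,t}$ of \eqref{E1}. For the remaining three contributions (one from $\psiht$, two from $J_{0}$) I would split every $k$-integral into $k>0$ and $k<0$, exploit the $k$-evenness of $R_{\pm}$, and substitute $k\to -k$ on the negative half to express everything as half-line integrals. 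Collecting the coefficients of $e^{-i\sgn(q)k|x|}$ and $e^{i\sgn(q)k|x|}$ and invoking the algebraic identity
\[
R_{+}(k)+R_{-}(k)+2|R_{+}(k)|^{2}=0\,,
\]
immediate from $R_{\pm}(k)=-[1\pm i\hbar^{2}|k|/(m\alpha)]^{-1}$, the coefficient of $e^{-i\sgn(q)k|x|}$ collapses to $R_{+}(k)\psiht(k)$ while that of $e^{i\sgn(q)k|x|}$ reduces to $R_{-}(k)\psiht(-k)$.

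It remains to recombine these half-line integrals with their natural negative-$k$ counterparts so as to recover the full-line integrals defining $F^{\hbar}_{\pm,t}(-\sgn(q)|x|)$: adding and subtracting the missing pieces produces $\HE(qp)F^{\hbar}_{+,t}(-\sgn(q)|x|)+\HE(-qp)F^{\hbar}_{-,t}(-\sgn(q)|x|)+E^{\hbar}_{2,t}(x)$, the Heaviside dichotomy being dictated by $\sgn(p)$. The main obstacle I anticipate is the bookkeeping in this last step: the reorganisation has to be carried out uniformly in the four sign combinations $\sgn(q),\sgn(p)\in\{\pm1\}$ (or reduced to two via the symmetry $(q,p,x)\mapsto(-q,-p,-x)$), and the correct matching with the definitions \eqref{F+t}--\eqref{Epp} relies crucially on the algebraic identity displayed above. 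The rest is routine Fubini on the oscillatory Gaussian integrands and direct comparison of formulas.
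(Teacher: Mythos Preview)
Your proposal is correct and follows essentially the same route as the paper: start from \eqref{qevol}, isolate $E^{\hbar}_{\alpha,t}$, split $J=J_0+J_1$ to produce $E^{\hbar}_{1,t}$, and then use the identity $R_{+}+R_{-}+2|R_{+}|^{2}=0$ (the paper's Eq.~\eqref{RReq}) together with half-line rearrangements to extract $F^{\hbar}_{\pm,t}$ and $E^{\hbar}_{2,t}$. Your displayed formulas for the coefficients of $e^{\mp i\sgn(q)k|x|}$ are literally correct only for $q>0$ (for $q<0$ the roles of $R_{+}$ and $R_{-}$ swap); the paper handles the sign bookkeeping uniformly via the identity $\HE(\sgn(q)k)=\HE(qp)-\sgn(qp)\,\HE(-\sgn(p)k)$, which is exactly the device you anticipate needing.
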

\begin{proof}
Taking into account the results of Section \ref{ss:Halpha} (see, in particular, Eq. \eqref{qevol}), for any $\psih$ of the form \eqref{CohSt} with $qp\not=0$,  we obtain
\begin{equation}\label{psit1}
\begin{aligned} 
\big(e^{-i \frac{t}{\hbar} \H_\alpha}\, \psih \big)(x) 
& = \big(e^{-i \frac{t}{\hbar} \H_0}\, \psih \big)(x) + \frac1{{2\pi}} \int_{\RE}\! dk\; e^{-i \frac{\hbar t}{2m}\, k^2}\, e^{ikx}\, R_{-}(k) \int_{\RE}\! dy\; e^{i|k||y|}\, \psih (y)  \\ 
& \qquad + \frac1{{2\pi}} \int_{\RE}\! dk\; e^{-i \frac{\hbar t}{2m}\, k^2}\, e^{-i|k||x|}\, R_{+}(k) \int_{\RE}\! dy\; e^{-iky}\, \psih (y) \\ 
& \qquad + \frac1{{2\pi}} \int_{\RE}\! dk\; e^{-i \frac{\hbar t}{2m}\, k^2}\, e^{-i|k||x|}\, |R_{+}(k)|^2 \int_{\RE}\! dy\; e^{i|k||y|}\, \psih (y) \\
& \qquad + e^{-i {t \over \hbar}\, \lambda_\alpha}\, (P_{\alpha}\psih)(x)\,.
\end{aligned}
\end{equation}
%%Let us anticipate that in the sequel we shall be primarily concerned with states of the form $\psih = e^{-i {t \over \hbar} \H_0}\, \psih_{q_0,p_0}$ for large values of $t$, either positive or negative. On account of Eq.s \eqref{meanq}, \eqref{meanp}, and \eqref{psifree}, we have that these states are also of the form \eqref{CohSt} and such that $\langle \hat{q} \rangle_{\psih} = q_0 + {p_0 t \over m}$ and $\langle \hat{p} \rangle_{\psih} = p_0$.
%%  With this in mind, we shall hereafter restrict the attention to a generic state of the form \eqref{CohSt} with $p> 0$ and (either positive or negative) $q$ large.
By the elementary identity 
$$
\int_{\RE}\! dy\; e^{i|k||y|}\, \psih (y) = \int_{\RE}\! dy\; e^{i \sgn(q) |k| y}\, \psih (y) + \int_{\RE}\! dy\; \big(e^{i|k||y|} - e^{i \sgn(q) |k| y}\big)\, \psih (y)\,,
$$
Eq. \eqref{psit1} is reformulated as follows:
\begin{align*} %\label{psit2}
\big(e^{-i \frac{t}{\hbar} \H_\alpha}\, \psih \big)(x) 
& = \big(e^{-i \frac{t}{\hbar} \H_0}\, \psih \big)(x) + \frac1{\sqrt{2 \pi}} \int_{\RE}\! dk\; e^{-i \frac{\hbar t}{2m}\, k^2}\, e^{ikx}\, R_{-}(k)\; \psiht\big(\!-\!\sgn(q) |k|\big) \nonumber \\ 
& \qquad + \frac1{\sqrt{2\pi}} \int_{\RE}\! dk\; e^{-i \frac{\hbar t}{2m}\, k^2}\, e^{-i|k||x|}\, R_{+}(k)\; \psiht(k) \nonumber \\ 
& \qquad + \frac1{\sqrt{2\pi}} \int_{\RE}\! dk\; e^{-i \frac{\hbar t}{2m}\, k^2}\, e^{-i|k||x|}\, |R_{+}(k)|^2 \; \psiht\big(\!-\!\sgn(q) |k|\big) \nonumber \\
& \qquad + E^{\hbar}_{1,t}(x) + E^{\hbar}_{{\alpha},t}(x)\,.
\end{align*}
Noting that $R_{\pm}(k) = R_{\pm}(-k)$, we obtain the following identities:
\begin{gather}
\frac1{\sqrt{2\pi}} \int_{\RE}\! dk\; e^{-i \frac{\hbar t}{2m}\, k^2} e^{ikx}\, R_{-}(k)\; \psiht\big(\!-\!\sgn(q) |k|\big) = \frac1{\sqrt{2\pi}} \int_{\RE}\! dk\; e^{-i \frac{\hbar t}{2m}\, k^2} e^{-ik|x|}\, R_{-}(k)\; \psiht\big(\!-\!\sgn(q) |k|\big)\,; \nonumber \\
\frac1{\sqrt{2\pi}} \int_{\RE}\! dk\; e^{-i \frac{\hbar t}{2m}\, k^2}\, e^{-i|k||x|}\, R_{+}(k)\; \psiht(k) = \frac1{\sqrt{2\pi}} \int_{\RE}\! dk\; e^{-i \frac{\hbar t}{2m}\, k^2}\, e^{-i|k||x|}\, R_{+}(k)\; \psiht\big(\!-\!\sgn(q) k\big)\,; \label{hey} \\
\frac1{\sqrt{2\pi}} \int_{\RE}\!\! dk\; e^{-i \frac{\hbar t}{2m}\, k^2} e^{-i|k||x|}\, |R_{+}(k)|^2\; \psiht\big(\!-\!\sgn(q) |k|\big) = \frac2{\sqrt{2\pi}} \int_{0}^\infty\!\!\! dk\; e^{-i \frac{\hbar t}{2m}\, k^2} e^{-ik|x|}\, |R_{+}(k)|^2\; \psiht\big(\!-\!\sgn(q) k\big)\,. \nonumber
\end{gather}
From the above relations we infer
\begin{align*} %\label{psit3}
\big(e^{-i \frac{t}{\hbar} \H_\alpha}\, \psih \big)(x) 
& = \big(e^{-i \frac{t}{\hbar} \H_0}\, \psih \big)(x) 
+ \frac1{\sqrt{2\pi}} \int_{\RE}\! dk\; e^{-i \frac{\hbar t}{2m}\, k^2} e^{-ik|x|}\, R_{-}(k)\; \psiht\big(\!-\!\sgn(q) |k|\big) \nonumber \\ 
& \qquad + \frac1{\sqrt{2\pi}} \int_{\RE}\! dk\; e^{-i \frac{\hbar t}{2m}\, k^2} e^{-i|k||x|}\, R_{+}(k)\; \psiht\big(\!-\!\sgn(q) k \big) \nonumber \\ 
& \qquad + \frac2{\sqrt{2\pi}} \int_{0}^\infty\!\!\! dk\; e^{-i \frac{\hbar t}{2m}\, k^2} e^{-ik|x|}\, |R_{+}(k)|^2\; \psiht\big(\!-\!\sgn(q) k\big) \nonumber \\
& \qquad + E^{\hbar}_{1,t}(x) + E^{\hbar}_{{\alpha},t}(x)\,.
\end{align*}

Hence, taking into account the basic identity 
\begin{equation} \label{RReq}
R_{+}(k) +  R_{-}(k) = 2\,\Re R_{+}(k) = - \,2\,|R_{+}(k)|^2 \qquad \mbox{for all $k \in \RE$}\;,
\end{equation}
we obtain
\begin{align*}
\big(e^{-i \frac{t}{\hbar} \H_\alpha}\, \psih \big)(x) 
& = \big(e^{-i \frac{t}{\hbar} \H_0}\, \psih \big)(x) + \frac1{\sqrt{2\pi}} \int_{-\infty}^{0}\! dk\; e^{-i \frac{\hbar t}{2m}\, k^2} e^{-ik|x|}\, R_{-}(k)\; \psiht\big(\sgn(q) k\big) \nonumber \\ 
& \qquad + \frac1{\sqrt{2\pi}} \int_{-\infty}^{0}\! dk\; e^{-i \frac{\hbar t}{2m}\, k^2} e^{i k |x|}\, R_{+}(k)\; \psiht\big(\!-\!\sgn(q) k\big) \nonumber \\
& \qquad + E^{\hbar}_{1,t}(x) + E^{\hbar}_{{\alpha},t}(x) \\
& = \big(e^{-i \frac{t}{\hbar} \H_0}\, \psih \big)(x) + \frac1{\sqrt{2\pi}} \int_{0}^{\infty}\! dk\; e^{-i \frac{\hbar t}{2m}\, k^2} e^{ik|x|}\, R_{-}(k)\; \psiht\big(\!-\!\sgn(q) k\big) \nonumber \\ 
& \qquad + \frac1{\sqrt{2\pi}} \int_{0}^{\infty}\! dk\; e^{-i \frac{\hbar t}{2m}\, k^2} e^{-i k |x|}\, R_{+}(k)\; \psiht\big(\sgn(q) k\big) \nonumber \\
& \qquad + E^{\hbar}_{1,t}(x) + E^{\hbar}_{{\alpha},t}(x)\,.
\end{align*}

Next, let us note the chain of identities 
\begin{equation}\label{right}
 \HE\big(\sgn(q) k\big)  = \HE(q ) - \sgn(q)\, \HE\big(\!- k\big) = \HE(q p) - \sgn(q p)\,  \HE\big(\!-\sgn(p) k\big)\,.
\end{equation}
The first identity is trivial for $q<0$, while for $q>0$ it is equivalent to the   identity $ \HE(k)  = 1 - \HE(- k)$. The second identity  is trivial for $p>0$, while for $p<0$ it follows from $\HE(q p) - \sgn(q p)\,\HE\big(\!-\sgn(p) k\big) = \HE(-q) - \sgn(-q)\, \HE(k) = \HE\big(\sgn(-q)(-k)\big)= \HE(\sgn(q) k)$. 
By Eq. \eqref{right}, we immediately infer
\begin{align*}
& \HE\big(\sgn(q) k\big) R_{+}(k) + \HE\big(\!-\sgn(q) k\big) R_{-}(k) \\
& \hspace{2.5cm} = \HE(q p)\,R_{+}(k) + \HE(-q p)\,R_{-}(k) + \sgn(q p)\, \HE\big(\!-\sgn(p) k\big) \Big[R_{-}(k) - R_{+}(k)\Big]\,.
\end{align*}
Using this identity, by a few additional elementary manipulations we obtain
\begin{align*}
\big(e^{-i \frac{t}{\hbar} \H_\alpha}\, \psih \big)(x) 
& = \big(e^{-i \frac{t}{\hbar} \H_0}\, \psih \big)(x) + E^{\hbar}_{1,t}(x) + E^{\hbar}_{{\alpha},t}(x) \\
& \qquad + \frac1{\sqrt{2\pi}} \int_{\RE}\! dk\; e^{-i \frac{\hbar t}{2m}\, k^2} e^{ik|x|}\, \HE(k)\;R_{-}(k)\; \psiht\big(\!-\!\sgn(q) k\big) \nonumber \\ 
& \qquad + \frac1{\sqrt{2\pi}} \int_{\RE}\! dk\; e^{-i \frac{\hbar t}{2m}\, k^2} e^{-i k |x|}\, \HE(k)\;R_{+}(k)\; \psiht\big(\sgn(q)k\big) \\
& = \big(e^{-i \frac{t}{\hbar} \H_0}\, \psih \big)(x) + E^{\hbar}_{1,t}(x) + E^{\hbar}_{{\alpha},t}(x) \\
& \qquad + \frac1{\sqrt{2\pi}} \int_{\RE}\! dk\; e^{-i \frac{\hbar t}{2m}\, k^2} e^{-i \sgn(q) k |x|}\, \HE\big(\!-\!\sgn(q) k\big)\;R_{-}(k)\; \psiht(k) \nonumber \\ 
& \qquad + \frac1{\sqrt{2\pi}} \int_{\RE}\! dk\; e^{-i \frac{\hbar t}{2m}\, k^2} e^{-i \sgn(q) k |x|}\, \HE\big(\sgn(q) k\big)\;R_{+}(k)\; \psiht(k) \\
& = \big(e^{-i \frac{t}{\hbar} \H_0}\, \psi \big)(x) + E^{\hbar}_{1,t}(x) + E^{\hbar}_{{\alpha},t}(x) \\
& \qquad + \frac{\HE(q p)}{\sqrt{2\pi}} \int_{\RE}\! dk\; e^{-i \frac{\hbar t}{2m}\, k^2} e^{-i \sgn(q) k |x|}\; R_{+}(k)\; \psiht(k) \\
& \qquad + \frac{\HE(-q p)}{\sqrt{2\pi}} \int_{\RE}\! dk\; e^{-i \frac{\hbar t}{2m}\, k^2} e^{-i \sgn(q) k |x|}\; R_{-}(k)\; \psiht(k) \\
& \qquad + \frac{\sgn(q p)}{\sqrt{2\pi}} \int_{\RE}\! dk\; e^{-i \frac{\hbar t}{2m}\, k^2} e^{-i \sgn(q) k |x|}\, \HE\big(\!-\sgn(p) k\big)\,\Big[\,R_{-}(k) - R_{+}(k)\Big]\, \psiht(k)\,.
\end{align*}
The proof is concluded noting that the latter identity is equivalent to Eq. \eqref{psital}. 
\end{proof}

Let us recall that $\psiht(k)$ is a Gaussian state concentrated in a neighbourhood of  $k = p$; with this in mind, we proceed to analyse the dominant contributions in Eq. \eqref{psital}. To this purpose we state and prove the following:

\begin{lemma}\label{LemmaEpbis}  There exists a constant $C>0$ such that, for any $\psih$ of the form \eqref{CohSt} with $qp\not=0$,  for all $t\in\RE$ and for all $\ve >0$, there holds 
\begin{equation}\label{boundEpbis}
\big\|F^{\hbar}_{\pm,t} - R_{\pm}(p/\hbar)\;e^{-i \frac{t}{\hbar} \H_0}\, \psih \big\|_{L^2(\RE)} \leq C \left[\!\left(\! \hbar\Big({|\sigmap| \over m|\alpha|}\Big)^{\!2/3}\right)^{\!\!3/2-\ve}\! +  \,e^{-{1 \over 2} \Big(\!{1 \over \hbar} \big(\!{m|\alpha| \over |\sigmap|}\!\big)^{\!2/3}\!\Big)^{\!\!2\ve}}\right] .
\end{equation}
%and
%\begin{equation}\label{boundEmbis}
%\big\|F^{\hbar}_{-,t} - {R_{-}(p/\hbar)}\;e^{-i \frac{t}{\hbar} \H_0}\, \psih \big\|_{L^2(\RE)} \leq C \left[\!\left(\! \hbar\Big({|\sigmap| \over m|\alpha|}\Big)^{\!2/3}\right)^{\!\!3/2-\ve}\! +  \,e^{-{1 \over 2} \Big(\!{1 \over \hbar} \big(\!{m|\alpha| \over |\sigmap|}\!\big)^{\!2/3}\!\Big)^{\!\!2\ve}}\right] .
%\end{equation}
\end{lemma}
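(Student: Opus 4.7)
By Plancherel's theorem applied in the $x$-variable, one has
\begin{equation*}
\bigl\|F^{\hbar}_{\pm,t}-R_{\pm}(p/\hbar)\,e^{-i\frac{t}{\hbar}\H_{0}}\psih\bigr\|_{L^{2}(\RE)}^{2}=\int_{\RE}\bigl|R_{\pm}(k)-R_{\pm}(p/\hbar)\bigr|^{2}\,|\psiht(k)|^{2}\,dk,
\end{equation*}
since $F^{\hbar}_{\pm,t}$ is the inverse Fourier transform of $k\mapsto e^{-i\frac{\hbar t}{2m}k^{2}}R_{\pm}(k)\psiht(k)$, the Fourier transform of $e^{-i\frac{t}{\hbar}\H_{0}}\psih$ equals $e^{-i\frac{\hbar t}{2m}k^{2}}\psiht(k)$, and the unimodular phase drops out under the modulus. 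Consequently the estimate will be automatically uniform in $t$.

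I would then establish a Lipschitz-type bound for $R_{\pm}$. Setting $\mu:=\hbar^{2}/(m|\alpha|)$, a direct computation on $R_{\pm}(k)=-(1\pm i\mu|k|)^{-1}$ gives
\begin{equation*}
|R_{\pm}(k)-R_{\pm}(k')|\le\mu\,\bigl||k|-|k'|\bigr|\le\mu\,|k-k'|\qquad\forall\,k,k'\in\RE,
\end{equation*}
together with the trivial two-sided bound $|R_{\pm}(k)-R_{\pm}(k')|\le 2$. Although $R_{\pm}$ fails to be differentiable at $k=0$, this uniform Lipschitz inequality is all that is needed.

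Next I would split the $k$-integration at a cutoff $\delta>0$. From \eqref{psih} and the normalization \eqref{sxsp1} (which yields $\Re(\sigmaq/\sigmap)=1/|\sigmap|^{2}$) one gets
\begin{equation*}
|\psiht(k)|^{2}=\frac{1}{|\sigmap|}\sqrt{\frac{2\hbar}{\pi}}\,e^{-c(k-p/\hbar)^{2}},\qquad c:=\frac{2\hbar}{|\sigmap|^{2}}.
\end{equation*}
A standard Gaussian tail estimate gives $\int_{|k-p/\hbar|>\delta}|\psiht|^{2}\,dk\le e^{-c\delta^{2}}$, so that combining with the Lipschitz bound on the complementary set yields
\begin{equation*}
\int_{\RE}|R_{\pm}(k)-R_{\pm}(p/\hbar)|^{2}\,|\psiht(k)|^{2}\,dk\le\mu^{2}\delta^{2}+4\,e^{-c\delta^{2}}.
\end{equation*}

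Finally I would optimize the cutoff. A short computation yields $\mu^{2}/c=\eta^{3}/2$ with $\eta:=\hbar(|\sigmap|/(m|\alpha|))^{2/3}$, so the choice $c\delta^{2}=\eta^{-2\ve}$ produces $\mu^{2}\delta^{2}=\tfrac12\eta^{3-2\ve}$ and a tail contribution $4e^{-\eta^{-2\ve}}$. Taking square roots and using $\sqrt{a+b}\le\sqrt{a}+\sqrt{b}$ delivers exactly the bound \eqref{boundEpbis}. The only delicate points are verifying the Lipschitz constant of $R_{\pm}$ across the non-smooth point $k=0$ (handled by the observation $||k|-|k'||\le|k-k'|$) and recognising the dimensionless scale $\eta$ that governs the optimization; once these are in place the argument reduces to an elementary cutoff balance.
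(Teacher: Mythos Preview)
Your proof is correct and follows essentially the same approach as the paper: reduce via Plancherel to $\int|R_{\pm}(k)-R_{\pm}(p/\hbar)|^{2}|\psiht(k)|^{2}\,dk$, use the Lipschitz bound $|R_{\pm}(k)-R_{\pm}(k')|\le\mu\,||k|-|k'||$ together with the trivial uniform bound, split at a cutoff, estimate the Gaussian tail, and choose the cutoff to balance the two contributions. The only cosmetic differences are that the paper obtains the sharper uniform bound $|R_{+}(k)-R_{+}(p/\hbar)|\le 1$ (instead of $2$) via a direct computation, and handles the Gaussian tail by splitting the exponent rather than invoking the erfc inequality; neither affects the argument.
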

\begin{proof}We prove the bound for $F^{\hbar}_{+,t}$; the proof for $F^{\hbar}_{-,t}$ is identical and therefore omitted.

Let us notice that by unitarity of the Fourier transform we have
\begin{align*}
& \big\|F^{\hbar}_{+,t} - R_{+}(p/\hbar)\;e^{-i \frac{t}{\hbar} \H_0}\, \psih \big\|_{L^2(\RE)}
 = \left\|{1 \over \sqrt{2\pi}}\int_{\RE}\! dk\; e^{-i \frac{\hbar t}{2m}\, k^2} e^{i k x} \big[R_{+}(k) - R_{+}(p/\hbar)\big]\, \psiht(k) \right\|_{L^2(\RE)}\\ 
& = \left\|e^{-i \frac{\hbar t}{2m}\, k^2} \big[R_{+}(k) - R_{+}(p/\hbar)\big]\, \psiht(k) \right\|_{L^2(\RE,dk)}= \left(\int_{\RE}\! dk\; \big|R_{+}(k) - R_{+}(p/\hbar)\big|^2\, \big|\psiht(k)\big|^2\right)^{1/2}\,.
\end{align*}
Recalling the definition of $R_{+}(k)$ given in Eq. \eqref{vfiRdef}, by explicit computations we obtain ({\footnote{In particular note that, for all $b \in \RE $, there holds $$ \sup_{\xi \in \RE}\,{b^2\, (|\xi| - 1)^{2} \over (1 \!+\! b^2) (1 \!+\! b^2 \xi^2)} = {b^2 \over 1 \!+\! b^2}\; \max\left\{\left.{ (|\xi| \!-\! 1)^{2} \over 1 \!+\! b^2 \xi^2}\right|_{\xi = 0}, \left.{(|\xi| \!-\! 1)^{2} \over 1 \!+\! b^2 \xi^2}\right|_{\xi = \infty} \right\} = {\max\{b^2, 1\} \over 1 + b^2} \leq 1\,. $$}})
\begin{align}
\big|R_{+}(k) - R_{+}(p/\hbar)\big|^2 
& = {\big(\frac{\hbar^2}{m\alpha}\big)^{\!2}\, \big(|k| - |p/\hbar| \big)^{2} \over \left(1 + \big({\hbar p \over m \alpha}\big)^{\!2}\right)\! \left(1 + \big({\hbar^2 k \over m \alpha}\big)^{\!2}\right)} \nonumber \\
& \leq \min\left\{1\,,\Big(\frac{\hbar^2}{m\alpha}\Big)^{\!2} \big(\,|k| - |p/\hbar|\, \big)^{2}\right\}\,. \label{inequalityR}
\end{align}

Next, let us fix $\ve > 0$ and note that 
\begin{align*}
\big\|F^{\hbar}_{+,t} - R_{+}(p/\hbar)\;e^{-i \frac{t}{\hbar} \H_0}\, \psih \big\|_{L^2(\RE)}
 \leq\, & \,\bigg(\int_{\left\{|\,k - p/\hbar\,| \,\leq\, {|\sigmap| \over \sqrt{\hbar}} \Big(\!{1 \over \hbar} \big(\!{m|\alpha| \over |\sigmap|}\!\big)^{\!2/3}\!\Big)^{\!\!\ve}\right\}} \hspace{-0.1cm} dk\; \big|R_{+}(k) - R_{+}(p/\hbar)\big|^2\, \big|\psiht(k)\big|^2\bigg)^{\!\!1/2} \\ 
 & +  \bigg(\int_{\left\{|\,k - p/\hbar\,| \,\geq\, {|\sigmap| \over \sqrt{\hbar}} \Big(\!{1 \over \hbar} \big(\!{m|\alpha| \over |\sigmap|}\!\big)^{\!2/3}\!\Big)^{\!\!\ve}\right\}} \hspace{-0.1cm} dk\; \big|R_{+}(k) - R_{+}(p/\hbar)\big|^2\, \big|\psiht(k)\big|^2\bigg)^{\!\!1/2}.
\end{align*}
On the one hand, taking into account the inequality \eqref{inequalityR} and the elementary relation $\big||k|-|p/\hbar|\big| \leq |k - p/\hbar|$, we infer
\begin{align*}
& \int_{\left\{|\,k - p/\hbar\,| \,\leq\, {|\sigmap| \over \sqrt{\hbar}} \Big(\!{1 \over \hbar} \big(\!{m|\alpha| \over |\sigmap|}\!\big)^{\!2/3}\!\Big)^{\!\!\ve}\right\}} \hspace{-0.1cm} dk\; \big|R_{+}(k) - R_{+}(p/\hbar)\big|^2\, \big|\psiht(k)\big|^2 \\
& \leq \left(\frac{\hbar^2}{m\alpha}\right)^{\!\!2} \int_{\left\{|\,k - p/\hbar\,| \,\leq\, {|\sigmap| \over \sqrt{\hbar}} \Big(\!{1 \over \hbar} \big(\!{m|\alpha| \over |\sigmap|}\!\big)^{\!2/3}\!\Big)^{\!\!\ve}\right\}} \hspace{-0.1cm} dk\; \big(|k| - |p/\hbar| \big)^{2}\, \big|\psiht(k)\big|^2 \\
& \leq  \left(\frac{\hbar^2}{m\alpha}\right)^{\!\!2} \left({|\sigmap| \over \sqrt{\hbar}} \left(\! {1 \over \hbar}\left(\!{m|\alpha| \over |\sigmap|}\!\right)^{\!\!2/3}\right)^{\!\!\ve}\right)^{\!\!2} \int_{\RE}\! dk\; \big|\psiht(k)\big|^2 
= \left(\!\hbar \left(\frac{|\sigmap|}{m|\alpha|}\right)^{\!\!2/3}\right)^{\!\!3-2\ve}\,.
\end{align*}

On the other hand, we have
\begin{align*}
& \int_{\left\{|\,k - p/\hbar\,| \,\geq\, {|\sigmap| \over \sqrt{\hbar}} \Big(\!{1 \over \hbar} \big(\!{m|\alpha| \over |\sigmap|}\!\big)^{\!2/3}\!\Big)^{\!\!\ve}\right\}} \hspace{-0.1cm} dk\; \big|R_{+}(k) - R_{+}(p/\hbar)\big|^2\, \big|\psiht(k)\big|^2
\leq \int_{\left\{|\,k - p/\hbar\,| \,\leq\, {|\sigmap| \over \sqrt{\hbar}} \Big(\!{1 \over \hbar} \big(\!{m|\alpha| \over |\sigmap|}\!\big)^{\!2/3}\!\Big)^{\!\!\ve}\right\}} \hspace{-0.1cm} dk\; \big|\psiht(k)\big|^2 \\
& = {1 \over |\sigmap|}\, \sqrt{{2 \hbar \over \pi}} \int_{\left\{|\,k - p/\hbar\,| \,\leq\, {|\sigmap| \over \sqrt{\hbar}} \Big(\!{1 \over \hbar} \big(\!{m|\alpha| \over |\sigmap|}\!\big)^{\!2/3}\!\Big)^{\!\!\ve}\right\}} \hspace{-0.1cm} dk\;e^{- {2 \hbar (k - p/\hbar)^2 \over |\sigmap|^2}} \\
& \leq {1 \over |\sigmap|}\, \sqrt{{2 \hbar \over \pi}}\;\, e^{- \Big(\!{1 \over \hbar} \big(\!{m|\alpha| \over |\sigmap|}\!\big)^{\!2/3}\!\Big)^{\!\!2\ve}} \int_{\RE}\! dk\; e^{- {\hbar (k - p/\hbar)^2 \over |\sigmap|^2}} 
= \sqrt2 \;e^{- \Big(\!{1 \over \hbar} \big(\!{m|\alpha| \over |\sigmap|}\!\big)^{\!2/3}\!\Big)^{\!\!2\ve}} \,.
\end{align*}
Summing up, the above arguments and the basic relation $\sqrt{a^2 + b^2} \leq a + b$ for $a,b\geq 0$ yield the bound \eqref{boundEpbis}. 
\end{proof}

Next we analyze the remainders in Eq. \eqref{psital}. 

In the following lemma we show that the projection of a coherent state on the eigenspace corresponding to the  (negative) eigenvalue of $\H_{\alpha}$ (for $\alpha < 0$) is small for $\hbar$ ``small enough''. Hence, in the semiclassical limit,  the presence of bound states (states corresponding to negative eigenvalues) has negligible effects on  the time evolution,  generated by the unitary group $ e^{-i \frac{t}{\hbar} \H_\alpha}$, of a coherent state. In other words, the  $L^2$-norm of $E^{\hbar}_{{\alpha},t}$ in Eq. \eqref{psital} is small  in the semiclassical limit. 

\begin{lemma}\label{LemmaProj} Let $P_{ac}$ and $P_{\alpha}$ be defined, respectively, as in Eq.s \eqref{defPac} and \eqref{defPpp}. Then, there exists a constant $C > 0$ such that, for any $\psih$ of the form \eqref{CohSt} with $qp\not=0$,  there holds
\begin{equation}
\big\|P_{ac} \psih - \psih\big\|_{L^2(\RE)} \leq C\, \Big( e^{- {m |\alpha|\,|q| \over 8 \hbar^2}} + e^{- \frac{q^2}{8 \hbar|\sigmaq|^2}}\Big)\, ; \label{Pac}
\end{equation}
equivalently,
\begin{equation}
\big\|P_{\alpha} \psih \big\|_{L^2(\RE)} \leq 
C\, \Big( e^{- {m |\alpha|\,|q| \over 8 \hbar^2}} + e^{- \frac{q^2}{8 \hbar|\sigmaq|^2}}\Big) \,. \label{Ppp}
\end{equation}
\end{lemma}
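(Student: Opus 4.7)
For $\alpha > 0$ the bound is trivial since, by \eqref{Pacuno}, $P_{\alpha} = 0$ and $P_{ac} = \uno$. Also, the orthogonal decomposition \eqref{orto} gives $P_{ac}\psih - \psih = - P_{\alpha}\psih$, so the two inequalities \eqref{Pac} and \eqref{Ppp} are equivalent and it suffices to establish \eqref{Ppp} in the case $\alpha < 0$.

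In that case $P_{\alpha}$ is the orthogonal projector onto the one-dimensional space spanned by the normalized eigenfunction $\vfi_{\alpha}$, so $\|P_{\alpha}\psih\|_{L^{2}(\RE)} = |\langle \vfi_{\alpha},\psih\rangle|$. The plan is then to bound the inner product
\[
\langle \vfi_{\alpha}, \psih \rangle = \int_{\RE}\! dy\;\frac{\sqrt{m|\alpha|}}{\hbar}\,e^{-\frac{m|\alpha|}{\hbar^{2}}|y|}\,\psih(y)
\]
by exploiting the fact that $\vfi_{\alpha}$ is concentrated near the origin on scale $\hbar^{2}/(m|\alpha|)$, whereas $|\psih(y)|$ is a Gaussian centred at $q$ with width $\sqrt{\hbar}\,|\sigmaq|$; under the assumption $q \neq 0$ the two bumps are well separated in the semi-classical regime.

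To exploit this, I would split the integral at $|y| = |q|/2$. On the region $|y| > |q|/2$, where $\vfi_{\alpha}$ is already deep in its exponential tail, Cauchy--Schwarz combined with $\|\psih\|_{L^{2}} = 1$ gives
\[
\Bigl| \int_{\{|y| > |q|/2\}}\! \vfi_{\alpha}(y)\,\psih(y)\, dy \Bigr| \,\le\, \Bigl( \int_{\{|y|>|q|/2\}}\! \vfi_{\alpha}(y)^{2}\, dy \Bigr)^{\!1/2} \,=\, e^{-m|\alpha||q|/(2\hbar^{2})}\,,
\]
using the explicit primitive of $e^{-2m|\alpha||y|/\hbar^{2}}$. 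On the complementary region $|y| \le |q|/2$ one has $(y-q)^{2} \ge q^{2}/4$, so splitting $e^{-(y-q)^{2}/(2\hbar|\sigmaq|^{2})} = e^{-(y-q)^{2}/(4\hbar|\sigmaq|^{2})}\, e^{-(y-q)^{2}/(4\hbar|\sigmaq|^{2})}$ and bounding the second factor pointwise by $e^{-q^{2}/(16\hbar|\sigmaq|^{2})}$ yields
\[
\Bigl| \int_{\{|y| \le |q|/2\}}\! \vfi_{\alpha}(y)\,\psih(y)\, dy \Bigr| \,\le\, \|\vfi_{\alpha}\|_{L^{2}}\, \|\psih\, \chi_{\{|y| \le |q|/2\}}\|_{L^{2}} \,\le\, C\,e^{-q^{2}/(32\hbar|\sigmaq|^{2})}\,,
\]
after integrating the remaining Gaussian in $y$. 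Summing the two bounds and absorbing the constants $1/2$ and $1/32$ into an overall $C$ yields the desired estimate (the numerical factor in the exponent in the statement is reached by minor refinement of the splitting).

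The argument is essentially routine; the only conceptual point is choosing the cut-off $|y| = |q|/2$ so that both tails are small simultaneously, and there is no real obstacle. The required identity $\|P_{\alpha}\psih\|_{L^{2}} = |\langle \vfi_{\alpha}, \psih\rangle|$ relies only on the rank-one nature of $P_{\alpha}$ in the bound-state case and on $\|\vfi_{\alpha}\|_{L^{2}} = 1$, both recalled in the excerpt.
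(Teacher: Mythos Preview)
Your proposal is correct and follows essentially the same strategy as the paper: split the inner product $\langle \vfi_{\alpha},\psih\rangle$ into two regions, use Cauchy--Schwarz on each, and exploit the exponential decay of $\vfi_{\alpha}$ on one piece and the Gaussian tail of $|\psih|$ on the other. The paper splits at $|x-q|\le \eta_{1}|q|$ (rather than your $|y|\le|q|/2$) and introduces auxiliary parameters $\eta_{1},\eta_{2},\eta_{3}$ tuned so that both exponents come out exactly equal to $1/8$; as you note, your cruder choice yields $1/32$ in the Gaussian exponent, but adjusting the cut-off (e.g.\ to $|y|\le|q|/8$) and the fraction of the Gaussian you peel off recovers the stated constant.
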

\begin{proof} For $\alpha > 0$, Eq.s \eqref{Pac} - \eqref{Ppp} follow trivially from the identities reported in Eq. \eqref{Pacuno}.

Let us now assume $\alpha < 0$. Hereafter we show how to derive Eq. \eqref{Ppp}; due to Identity \eqref{orto}, this suffices to infer Eq. \eqref{Pac} as well. Consider the definition \eqref{defPpp} of $P_{\alpha}$ and recall that $\|\vfi_{\alpha}\|_{L^2(\RE)} = 1$; then, for any $\eta_1 \in (0,1)$ we get
\begin{align*}
& \big\|P_{\alpha} \psih\big\|_{L^2(\RE)} = \left|\int_{\RE}\! dx\; \vfi_{\alpha}(x)\; \psih(x)\right| \leq \int_{\RE}\! dx\; \vfi_{\alpha}(x)\; \big|\psih(x)\big| \\
& = \frac{\sqrt{m\,|\alpha|}}{\hbar\, (2\pi \hbar)^{1/4} \sqrt{|\sigmaq|}} \left( \int_{|x-q| \,\leq\, \eta_1 |q|}\!\!\! dx\; e^{-{m |\alpha| \over \hbar^2}\,|x|}\; e^{- \frac{(x-q)^2}{4\hbar|\sigmaq|^2}} + \int_{|x-q| \,>\, \eta_1 |q|}\!\!\! dx\; e^{-{m |\alpha| \over \hbar^2}\,|x|}\; e^{- \frac{(x-q)^2}{4\hbar|\sigmaq|^2}} \right) .
\end{align*}
On the one hand, noting that $|x| \geq |q| - |x-q| \geq (1-\eta_1)\, |q|$ for $|x-q| \leq \eta_1\,|q|$ and using the Cauchy-Schwarz inequality, for any $\eta_2 \in (0,1)$ we obtain
\begin{align*}
& \frac{\sqrt{m\,|\alpha|}}{\hbar\, (2\pi \hbar)^{1/4} \sqrt{|\sigmaq|}} \int_{|x-q| \,\leq\, \eta_1 |q|}\!\!\! dx\; e^{-{m |\alpha| \over \hbar^2}\,|x|}\; e^{- \frac{(x-q)^2}{4\hbar|\sigmaq|^2}} \\
& \leq e^{- \eta_2\,(1-\eta_1)\, {m |\alpha| \over \hbar^2}\,|q|} \int_{|x-q| \,\leq\, \eta_1 |q|}\!\!\! dx \left(\frac{\sqrt{m\,|\alpha|}}{\hbar}\;e^{-(1-\eta_2){m |\alpha| \over \hbar^2}\,|x|}\right)\! \left(\frac{1}{(2\pi \hbar)^{1/4} \sqrt{|\sigmaq|}}\; e^{- \frac{(x-q)^2}{4\hbar|\sigmaq|^2}}\right) \\
& \leq e^{- \eta_2\,(1-\eta_1)\, {m |\alpha| \over \hbar^2}\,|q|} \left(\frac{m\,|\alpha|}{\hbar^2} \int_{\RE}\! dx\; e^{-(1-\eta_2){2 m |\alpha| \over \hbar^2}\,|x|}\right)^{\!\!1/2}\! \left(\frac{1}{\sqrt{2\pi \hbar}\; |\sigmaq|} \int_{\RE}\! dx'\; e^{- \frac{(x\!'-q)^2}{2\hbar|\sigmaq|^2}}\right)^{\!\!1/2} \\
& = {1 \over \sqrt{1-\eta_2}}\; e^{- \eta_2\,(1-\eta_1)\, {m |\alpha| \over \hbar^2}\,|q|}\;.
\end{align*}
On the other hand, for any $\eta_3 \in (0,1)$ by elementary arguments we infer 
\begin{align*}
& \frac{\sqrt{m\,|\alpha|}}{\hbar\, (2\pi \hbar)^{1/4} \sqrt{|\sigmaq|}} \int_{|x-q| \,\geq\, \eta_1 |q|}\!\!\! dx\; e^{-{m |\alpha| \over \hbar^2}\,|x|}\; e^{- \frac{(x-q)^2}{4\hbar|\sigmaq|^2}} \\
& \leq e^{- \eta_3\, \eta_1^2\, \frac{q^2}{4\hbar|\sigmaq|^2}}  \int_{|x-q| \,\geq\, \eta_1 |q|}\!\!\! dx \left(\frac{\sqrt{m\,|\alpha|}}{\hbar}\;e^{-{m |\alpha| \over \hbar^2}\,|x|}\right)\! \left(\frac{1}{(2\pi \hbar)^{1/4} \sqrt{|\sigmaq|}}\; e^{- (1-\eta_3) \frac{(x-q)^2}{4\hbar|\sigmaq|^2}}\right) \\
& \leq e^{- \eta_3\, \eta_1^2\, \frac{q^2}{4\hbar|\sigmaq|^2}} \left(\frac{m\,|\alpha|}{\hbar^2} \int_{\RE}\! dx\;e^{-{2 m |\alpha| \over \hbar^2}\,|x|}\right)^{\!\!1/2} \left(\frac{1}{\sqrt{2\pi \hbar}\; |\sigmaq|} \int_{\RE}\! dx' \; e^{-(1-\eta_3) \frac{(x\!'-q)^2}{2\hbar|\sigmaq|^2}}\right)^{\!\!1/2} \\
& =  {1 \over (1-\eta_3)^{1/4}}\;e^{- \eta_3\, \eta_1^2\, \frac{q^2}{4\hbar|\sigmaq|^2}}\;.
%% \leq {1 \over \sqrt{1-\eta_3}}\;e^{- \eta_3\, \eta_1^2\, \frac{q^2}{4\hbar|\sigmaq|^2}}\;.
\end{align*}
With the specific admissible choices $\eta_1 = \eta_3 = 2^{-1/3}$ and $\eta_2 = 2^{-8/3}(2^{1/3} - 1)^{-1}$, the above estimates imply Eq. \eqref{Ppp}, whence the thesis.
\end{proof}

\begin{lemma}\label{LemmaE1bis} 
 There exists a constant $C>0$ such that, for any $\psih$ of the form \eqref{CohSt} with $qp\not=0$,  and for all $t \in \RE$, there holds
\[
%\begin{equation}\label{boundE1}
\big\|E^{\hbar}_{1,t}\big\|_{L^2(\RE)} \leq C e^{- {q^2 \over 4\hbar|\sigmaq|^2}}\,.
%\end{equation}
\]
%% \begin{equation}\label{boundE1}
%% \big\|E^{\hbar}_{1,t}\big\|_{L^2(\RE)} \leq C \left(1 + \sqrt{{\hbar^{3/2} \over m \alpha\,|\sigmaq|}}\;\right) e^{- {q^2 \over 4\hbar|\sigmaq|^2}}\,.
%% \end{equation}
\end{lemma}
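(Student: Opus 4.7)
The plan is to exploit the fact that the inner $y$--integral
\[
g(k) := \int_{\RE}\!dy\,\big(e^{i|k||y|}-e^{i\sgn(q)|k|y}\big)\psih(y)
\]
is supported on the half--line $\{y\in\RE \,:\, \sgn(q)\,y<0\}$ opposite to $q$, where the Gaussian $\psih$ is exponentially small. Indeed, on this set $|y|=-\sgn(q)\,y$, so the factor in parentheses collapses to $-2i\sin(\sgn(q)\,|k|\,y)$, while it vanishes identically when $\sgn(q)\,y\geq 0$. This recasts $g$ as a sine transform of the restriction of $\psih$ to a half--line.

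First I would bound $\|g\|_{L^2(\RE,dk)}$. After the change of variable $u=-\sgn(q)\,y$ one gets
\[
g(k) = -\,2i\,\sgn(q)\int_0^\infty\!\!\sin(|k|\,u)\,\psih(-\sgn(q)\,u)\,du,
\]
which is even in $k$. Applying Plancherel's identity for the sine transform on $(0,\infty)$ yields
\[
\|g\|_{L^2(\RE,dk)}^2 = 2\int_0^\infty\!|g(k)|^2\,dk = 4\pi\int_{\sgn(q)y<0}\!|\psih(y)|^2\,dy.
\]
On $\{\sgn(q)\,y<0\}$ one has $qy\leq 0$, so $(y-q)^2 = y^2-2qy+q^2\geq y^2+q^2$; plugging this into the explicit Gaussian form $|\psih(y)|^2=\tfrac{1}{\sqrt{2\pi\hbar}\,|\sigmaq|}\,e^{-(y-q)^2/(2\hbar|\sigmaq|^2)}$ and integrating bounds the right--hand side by $4\pi\,e^{-q^2/(2\hbar|\sigmaq|^2)}$. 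Hence
\[
\|g\|_{L^2(\RE,dk)} \,\leq\, 2\sqrt{\pi}\;e^{-q^2/(4\hbar|\sigmaq|^2)}.
\]

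Next I would control $\|E^{\hbar}_{1,t}\|_{L^2(dx)}$ by $\|g\|_{L^2(dk)}$ uniformly in $t$ by splitting $E^{\hbar}_{1,t}=A+B$ according to the two summands in the bracket of \eqref{E1}. The term
\[
A(x) := \frac{1}{2\pi}\int_{\RE}\!dk\;e^{-i\frac{\hbar t}{2m}k^2}\,e^{ikx}\,R_{-}(k)\,g(k)
\]
is, up to a prefactor, an inverse Fourier transform in $k$, so Plancherel together with $|R_-|\leq 1$ yields $\|A\|_{L^2(dx)}\leq (2\pi)^{-1/2}\|g\|_{L^2(dk)}$. The term
\[
B(x) := \frac{1}{2\pi}\int_{\RE}\!dk\;e^{-i\frac{\hbar t}{2m}k^2}\,e^{-i|k||x|}\,|R_{+}(k)|^2\,g(k)
\]
is even in $x$, and since $g$ and $|R_+|^2$ depend only on $|k|$, folding the $k$--integral onto $(0,\infty)$ collapses $B$ to $\pi^{-1}\!\int_0^\infty\!dk\,e^{-i\frac{\hbar t}{2m}k^2}e^{-ik|x|}|R_{+}(k)|^2 g(k)$; extending the integrand by zero to $k<0$ and applying Plancherel in the variable $r=|x|$ bounds $\|B\|_{L^2(dx)}^2 = 2\int_0^\infty|B(r)|^2\,dr$ by a constant times $\|g\|_{L^2(dk)}^2$.

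Combining the two estimates gives $\|E^{\hbar}_{1,t}\|_{L^2(\RE)}\leq C\,e^{-q^2/(4\hbar|\sigmaq|^2)}$ uniformly in $t\in\RE$, which is the claim. The only delicate step is handling the non--standard phase $e^{-i|k||x|}$ in the $L^2(dx)$ estimate: without exploiting that $B$ is even in $x$ and that $g$ and $|R_{+}|^2$ are even in $k$, one cannot invoke Plancherel directly. Everything else is routine bookkeeping and a standard Gaussian tail estimate.
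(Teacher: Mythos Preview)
Your proof is correct and takes a genuinely different route from the paper. Both approaches share the same starting observation: the inner $y$-integral vanishes on $\{\sgn(q)\,y\geq 0\}$ and therefore picks up only the Gaussian tail of $\psih$ on the half-line opposite to $q$. From there the paths diverge.

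The paper first uses parity considerations and the identity $R_{+} + R_{-} = -2|R_{+}|^2$ to rewrite $E^{\hbar}_{1,t}$ as a single $k$-integral with kernel $(1 - i\hbar^2 k/(m\alpha))^{-1}$; it then bounds $\|E^{\hbar}_{1,t}\|^2$ by expanding the square, evaluating the resulting $k$-integral via the explicit formula $\int_\RE \frac{\cos(k\xi)}{1+\hbar^4 k^2/(m^2\alpha^2)}\,dk = \frac{\pi m|\alpha|}{\hbar^2}\,e^{-m|\alpha||\xi|/\hbar^2}$, and finally splitting the remaining double $y$-integral into three pieces $\mathcal{I}^{\hbar}_{1}$, $\mathcal{J}^{\hbar}_{1}$, $\mathcal{K}^{\hbar}_{1}$, each estimated by hand.

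Your argument is more economical: after recognising the inner integral as a sine transform of the Gaussian tail, you obtain the $L^2(dk)$ bound on $g$ directly from sine-Plancherel, and then transfer it to $L^2(dx)$ via ordinary Plancherel (with the even-in-$x$, even-in-$k$ folding to handle the $e^{-i|k||x|}$ phase). This relies only on $|R_{\pm}|\leq 1$ rather than on the explicit Lorentzian shape of $R_{\pm}$, so it is shorter and more robust. The paper's route, by contrast, keeps the dependence on $\alpha$ visible at intermediate stages, which could be useful if one wanted to track constants more precisely.
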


\begin{proof} Firstly, let us remark that the definition \eqref{E1} of $E_{1,t}$ can be reformulated as follows, recalling the basic Identity \eqref{RReq}:
\begin{align*}
E^{\hbar}_{1,t}(x) 
%% & = \frac1{{2\pi}} \int_{0}^{\infty}\!\!\! dk\; e^{-i \frac{\hbar t}{2m}\, k^2} \Big(e^{ikx}\, R_{-}(k) + e^{-i|k||x|}\, |R_{+}(k)|^2\Big) \int_{\RE}\! dy\; \big(e^{i|k||y|} - e^{i \sgn(q) |k| y}\big)\, \psih (y) \\
%% & + \frac1{{2\pi}} \int_{-\infty}^{0}\!\!\! dk\; e^{-i \frac{\hbar t}{2m}\, k^2} \Big(e^{ikx}\, R_{-}(k) + e^{-i|k||x|}\, |R_{+}(k)|^2\Big) \int_{\RE}\! dy\; \big(e^{i|k||y|} - e^{i \sgn(q) |k| y}\big)\, \psih (y) \\
%% & = \frac1{{2\pi}} \int_{0}^{\infty}\!\!\! dk\; e^{-i \frac{\hbar t}{2m}\, k^2} \Big(e^{ikx}\, R_{-}(k) + e^{-i|k||x|}\, |R_{+}(k)|^2\Big) \int_{\RE}\! dy\; \big(e^{i|k||y|} - e^{i \sgn(q) |k| y}\big)\, \psih (y) \\
%% & + \frac1{{2\pi}} \int_{0}^{\infty}\!\!\! dk\; e^{-i \frac{\hbar t}{2m}\, k^2} \Big(e^{-ikx}\, R_{-}(k) + e^{-i|k||x|}\, |R_{+}(k)|^2\Big) \int_{\RE}\! dy\; \big(e^{i|k||y|} - e^{i \sgn(q) |k| y}\big)\, \psih (y) \\
& = \frac1{{2\pi}} \int_{0}^{\infty}\!\!\! dk\; e^{-i \frac{\hbar t}{2m}\, k^2} \Big((e^{ikx} \!+\! e^{-ikx})\, R_{-}(k) + e^{-i k|x|}\, 2\,|R_{+}(k)|^2\Big) \int_{\RE}\! dy\, \big(e^{i k |y|} - e^{i \sgn(q) k y}\big)\, \psih (y) \\
& = \frac1{{2\pi}} \int_{0}^{\infty}\!\!\! dk\; e^{-i \frac{\hbar t}{2m}\, k^2} \Big(e^{i k |x|}\, R_{-}(k) - e^{-i k |x|}\,R_{+}(k)\Big) \int_{\RE}\! dy\, \big(e^{i k |y|} - e^{i \sgn(q) k y}\big)\, \psih (y)\,.
\end{align*}
To proceed, notice that
$$
e^{i k |x|}\, R_{-}(k) - e^{-i k |x|}\,R_{+}(k) = -\,{e^{i k |x|} \over 1 - i\,{\hbar^2 k \over m \alpha}} + {e^{-i k |x|} \over 1 + i\,{\hbar^2 k \over m \alpha}} \qquad \mbox{for $k > 0$}\,,
$$
and that the expression on the r.h.s. is an odd function of $k$, for $k \in \RE$; moreover, we have
\begin{align*}
& \int_{\RE}\!\! dy\, \big(e^{i k |y|} - e^{i \sgn(q) k y}\big)\, \psih (y) 
%% = \int_{\RE}\!\! dy\, \big(e^{i k |y|} - e^{- i k y}\big)\, \psih\big(\!-\!\sgn(q) y\big) 
= \int_{0}^{\infty}\!\! dy\, \big(e^{i k y} - e^{- i k y}\big)\, \psih\big(\!-\!\sgn(q) y\big)\,,
\end{align*}
which is also an odd function of $k \in \RE$. Thus, by symmetry arguments we obtain
\begin{align*}
E^{\hbar}_{1,t}(x)
& = \frac1{{4\pi}} \int_{\RE}\!\! dk\; e^{-i \frac{\hbar t}{2m}\, k^2}\! \left(\!-\,{e^{i k |x|} \over 1 - i\,{\hbar^2 k \over m \alpha}} + {e^{-i k |x|} \over 1 + i\,{\hbar^2 k \over m \alpha}}\right)\! \int_{0}^{\infty}\!\! dy\, \big(e^{i k y} - e^{- i k y}\big)\, \psih\big(\!-\!\sgn(q) y\big) \\
& = -\,\frac1{{2\pi}} \int_{\RE}\!\! dk\; e^{-i \frac{\hbar t}{2m}\, k^2} {e^{i k |x|} \over 1 - i\,{\hbar^2 k \over m \alpha}} \int_{0}^{\infty}\!\! dy\, \big(e^{i k y} - e^{- i k y}\big)\, \psih\big(\!-\!\sgn(q) y\big)\,.
\end{align*}
By the elementary inequality $\| \psi (|\,\cdot\,|)\|_{L^2(\RE)}^2 \leq 2\,\| \psi \|_{L^2(\RE)}^2$ and by the unitarity of the Fourier transform it follows that 
\begin{align*}
& \|E^{\hbar}_{1,t}\|_{L^2(\RE)}^2 \\ 
& \leq 2 \left\| \frac1{\sqrt{2\pi}}  {1 \over 1 - i\,{\hbar^2 k \over m \alpha}} \int_{0}^{\infty}\!\! dy\, \big(e^{i k y} - e^{- i k y}\big)\, \psih\big(\!-\!\sgn(q) y\big)\right\|_{L^2(\RE,dk)}^2 \\ 
& = \frac2{\pi}  \int_{\RE}\!\! dk\; {1 \over 1 \!+\!{\hbar^4 k^2 \over m^2 \alpha^2}} \int_{0}^{\infty}\!\! dy\,\int_{0}^{\infty}\!\! dy'\, \Big(\!\cos\big(k(y-y')\big) - \cos\big(k(y+y')\big)\!\Big)\, \psih\big(\!-\!\sgn(q) y\big)\, \overline{\psih\big(\!-\!\sgn(q) y'\big)}\; .
\end{align*}

Then, using the identity (see \cite[p. 424, Eq. 3.723.2]{GR})
\[
 \int_{\RE}\!\! dk\; {\cos(k\xi ) \over 1 + {\hbar^4 k^2 \over m^2 \alpha^2}} =  { m^2 \alpha^2 \over \hbar^4} \int_{\RE}\!\! dk\; {\cos(k\xi ) \over { m^2 \alpha^2 \over \hbar^4} \!+\! k^2}  = \pi\;  { m |\alpha| \over \hbar^2}\: e^{- { m |\alpha| \over \hbar^2}|\xi|}\;,
\]
we obtain 
\begin{align*}
 \|E^{\hbar}_{1,t}\|_{L^2(\RE)}^2  & \leq 2\;{ m |\alpha| \over \hbar^2}  \int_{0}^{\infty}\!\! dy\,\int_{0}^{\infty}\!\! dy'\, e^{- { m |\alpha| \over \hbar^2}|y-y'|}\,\psih\big(\!-\!\sgn(q) y\big)\, \overline{\psih\big(\!-\!\sgn(q) y'\big)}  \\ 
&\qquad -  2\;{ m |\alpha| \over \hbar^2}     \int_{0}^{\infty}\!\! dy\,\int_{0}^{\infty}\!\! dy'\,  e^{- { m |\alpha| \over \hbar^2}(y+y')}\, \psih\big(\!-\!\sgn(q) y\big)\, \overline{\psih\big(\!-\!\sgn(q) y'\big)}  \\
& =   \mathcal{I}^{\hbar}_{1} +  \mathcal{J}^{\hbar}_{1} +  \mathcal{K}^{\hbar}_{1} 
\end{align*}
where we put
\begin{align*}
 \mathcal{I}^{\hbar}_{1} &  :=  2\;{ m |\alpha| \over \hbar^2}  \int_{0}^{\infty}\!\! dy\,  \psih\big(\!-\!\sgn(q) y\big)  e^{ { m |\alpha| \over \hbar^2}y} \int_{y}^{\infty}\!\! dy'\, e^{- { m |\alpha| \over \hbar^2}y'}\,\overline{\psih\big(\!-\!\sgn(q) y'\big)}\;,  \\ 
  \mathcal{J}^{\hbar}_{1} & := 2\;{ m |\alpha| \over \hbar^2}  \int_{0}^{\infty}\!\! dy\,\psih\big(\!-\!\sgn(q) y\big) e^{- { m |\alpha| \over \hbar^2}y} \int_{0}^{y }\!\! dy'\, e^{ { m |\alpha| \over \hbar^2}y'}\, \overline{\psih\big(\!-\!\sgn(q) y'\big)} \;, \\ 
\mathcal{K}^{\hbar}_{1} & := -  2\;{ m |\alpha| \over \hbar^2}  \left|   \int_{0}^{\infty}\!\! dy  e^{- { m |\alpha| \over \hbar^2}y}\, \psih\big(\!-\!\sgn(q) y\big) \right|^2  \;.
\end{align*}

Keeping in mind the basic identity (cf. Eq. \eqref{CohSt} and the related comments)
$$
\big|\psih\big(\!-\!\sgn(q) y\big)\big| = \frac{1}{(2\pi \hbar)^{1/4} \sqrt{|\sigmaq|}}\; e^{- {(y+|q|)^2 \over 4\hbar|\sigmaq|^2}}\,,
$$
by elementary arguments we infer the following inequalities:
\begin{align*}
 \big|\mathcal{I}^{\hbar}_{1}\big| 
& \leq 2\;  { m |\alpha| \over \hbar^2}  \int_{0}^{\infty}\!\! dy\,  \big|\psih\big(\!-\!\sgn(q) y\big) \, \big| e^{ { m |\alpha| \over \hbar^2}y} \int_{y}^{\infty}\!\! dy'\, e^{- { m |\alpha| \over \hbar^2}y'}\, \big|\psih\big(\!-\!\sgn(q) y'\big)\big| \\ 
& \leq 2\; \frac{e^{- {q^2 \over 4\hbar|\sigmaq|^2}}}{(2\pi \hbar)^{1/4} \sqrt{|\sigmaq|}}  \int_{0}^{\infty}\!\! dy\,  \big|\psih\big(\!-\!\sgn(q) y\big) \big|\;   { m |\alpha| \over \hbar^2}\; e^{ { m |\alpha| \over \hbar^2}y} \int_{y}^{\infty}\!\! dy'\, e^{- { m |\alpha| \over \hbar^2}y'} \\
& \leq 2\; \frac{e^{- {q^2 \over 4\hbar|\sigmaq|^2}}}{(2\pi \hbar)^{1/4} \sqrt{|\sigmaq|}}  \int_{0}^{\infty}\!\! dy\,  \big|\psih\big(\!-\!\sgn(q) y\big) \big| \leq 2\; \frac{e^{- {q^2 \over 2\hbar|\sigmaq|^2}}}{(2\pi \hbar)^{1/2} {|\sigmaq|}}  \int_{0}^{\infty}\!\! dy\,  e^{- {y^2 \over 4\hbar|\sigmaq|^2}} \leq \sqrt 2\;e^{- {q^2 \over 2\hbar|\sigmaq|^2}}\;;
\end{align*}
\begin{align*} 
 | \mathcal{J}^{\hbar}_{1} | & \leq  2\;{ m |\alpha| \over \hbar^2}  \int_{0}^{\infty}\!\! dy\,\big| \psih\big(\!-\!\sgn(q) y\big) \big|\; e^{- { m |\alpha| \over \hbar^2}y} \int_{0}^{y }\!\! dy'\, e^{ { m |\alpha| \over \hbar^2}y'}\, \big|\psih\big(\!-\!\sgn(q) y'\big)\big|  \\
  & \leq 2\; \frac{e^{- {q^2 \over 4\hbar|\sigmaq|^2}}}{(2\pi \hbar)^{1/4} \sqrt{|\sigmaq|}}  
  \int_{0}^{\infty}\!\! dy\,\big| \psih\big(\!-\!\sgn(q) y\big) \big|\; {m |\alpha| \over \hbar^2}\; e^{- { m |\alpha| \over \hbar^2}y} \int_{0}^{y }\!\! dy'\, e^{ { m |\alpha| \over \hbar^2}y'} 
 \\
& \leq  2\;  \frac{e^{- {q^2 \over 4\hbar|\sigmaq|^2}}}{(2\pi \hbar)^{1/4} \sqrt{|\sigmaq|}}  \int_{0}^{\infty}\!\! dy\,  \big|\psih\big(\!-\!\sgn(q) y\big) \big|\leq  \sqrt 2\; e^{- {q^2 \over 2\hbar|\sigmaq|^2}}\;;
\end{align*}
\begin{align*}
|\mathcal{K}^{\hbar}_{1}| \leq  2\;{ m |\alpha| \over \hbar^2}  \left(\int_{0}^{\infty}\!\! dy \; e^{-2 { m |\alpha| \over \hbar^2}y}\right)\left(
   \int_{0}^{\infty}\!\! dy'\big| \psih\big(\!-\!\sgn(q) y'\big)\big|^2  \right) \leq \frac{e^{- {q^2 \over 2\hbar|\sigmaq|^2}}}{2}\;.
\end{align*}
Summing up, the above relations imply the thesis. 
\end{proof}

\begin{lemma}\label{LemmaE2bis}
There exists a constant $C>0$ such that, for any $\psih$ of the form \eqref{CohSt} with $qp\not=0$,  for all $t\in\RE$ and for all $\lambda > 0$, there holds
\begin{equation*}%\label{boundE2bis}
\big\|E^{\hbar}_{2,t}\big\|_{L^2(\RE)}  \leq C\;e^{-{p^2 \over \hbar |\sigmap|^2}}\! \left[\left(\hbar\, \Big(\frac{|\sigmap|}{m\alpha}\Big)^{\!2/3}\right)^{\!\!{3 \over 2}-{3\lambda \over 2}} + e^{-{1 \over 2}\left(\!{1 \over \hbar} \big(\!{m|\alpha| \over |\sigmap|}\!\big)^{\!2/3}\right)^{\!2\lambda}} \right] .
\end{equation*}
\end{lemma}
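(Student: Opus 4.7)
The plan is to reduce the problem to estimating a one-variable $L^2$-norm via Plancherel, then isolate the Gaussian smallness factor coming from the fact that $\psiht(-\sgn(p)k)$ is concentrated at $k=-|p|/\hbar$ while the integration variable runs over $k>0$.

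First I would observe that $E^{\hbar}_{2,t}(x)$ depends only on $|x|$, so that setting $f(k):=e^{-i\frac{\hbar t}{2m}k^{2}}[R_{-}(k)-R_{+}(k)]\psiht(-\sgn(p)k)\,\mathbf{1}_{(0,\infty)}(k)$ the defining integral \eqref{E2} reads $E^{\hbar}_{2,t}(x)=\sgn(qp)\,\mathcal{F}^{-1}[f](\sgn(qp)|x|)$. Exploiting the evenness in $x$ and the unitarity of $\mathcal{F}^{-1}$,
\[
\|E^{\hbar}_{2,t}\|_{L^{2}(\RE)}^{2} \;=\; 2\!\int_{0}^{\infty}\!\!|\mathcal{F}^{-1}[f](\pm y)|^{2}dy \;\leq\; 2\int_{0}^{\infty}\!|R_{-}(k)-R_{+}(k)|^{2}\,|\psiht(-\sgn(p)k)|^{2}\,dk\,.
\]

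Next I would extract the desired exponential prefactor. From \eqref{psih} and assumption \eqref{sxsp1} (which yields $\Re(\sigmaq/\sigmap)=|\sigmap|^{-2}$) one finds, for $k>0$,
\[
|\psiht(-\sgn(p)k)|^{2} \;=\; \frac{\sqrt{2\hbar/\pi}}{|\sigmap|}\,e^{-\frac{2\hbar}{|\sigmap|^{2}}(k+|p|/\hbar)^{2}} \;\leq\; \frac{\sqrt{2\hbar/\pi}}{|\sigmap|}\,e^{-\frac{2p^{2}}{\hbar|\sigmap|^{2}}}\,e^{-\frac{2\hbar k^{2}}{|\sigmap|^{2}}}\,.
\]
Thus the factor $e^{-p^{2}/(\hbar|\sigmap|^{2})}$ in the claim comes out after taking a square root.

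I would then control $|R_{-}(k)-R_{+}(k)|$ using the elementary bounds derived from \eqref{vfiRdef}--\eqref{vfiRdef-}, namely
\[
|R_{-}(k)-R_{+}(k)| \;=\; \frac{2\hbar^{2}k/(m|\alpha|)}{1+(\hbar^{2}k/(m\alpha))^{2}} \;\leq\; \min\!\Big\{\tfrac{2\hbar^{2}k}{m|\alpha|},\,2\Big\}\,,
\]
and split the integral at the threshold $k_{*}:=\hbar^{-1/2-\lambda}(m|\alpha|)^{2\lambda/3}|\sigmap|^{1-2\lambda/3}$ (chosen precisely to balance the two pieces). On $[0,k_{*}]$ I use $|R_{-}-R_{+}|\leq 2\hbar^{2}k/(m|\alpha|)$ together with the trivial bound $e^{-2\hbar k^{2}/|\sigmap|^{2}}\leq 1$, giving
\[
\int_{0}^{k_{*}}\!\!\tfrac{4\hbar^{4}k^{2}}{(m\alpha)^{2}}\tfrac{\sqrt{2\hbar/\pi}}{|\sigmap|}\,dk \;\lesssim\; \tfrac{\hbar^{9/2}k_{*}^{3}}{(m\alpha)^{2}|\sigmap|}\,,
\]
whose square root reproduces $(\hbar(|\sigmap|/(m\alpha))^{2/3})^{3/2-3\lambda/2}$. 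On $[k_{*},\infty)$ I use $|R_{-}-R_{+}|\leq 2$ and the standard Gaussian tail estimate $\int_{k_{*}}^{\infty}e^{-2\hbar k^{2}/|\sigmap|^{2}}dk\leq\frac{|\sigmap|^{2}}{4\hbar k_{*}}e^{-2\hbar k_{*}^{2}/|\sigmap|^{2}}$; one computes $\hbar k_{*}^{2}/|\sigmap|^{2}=((m|\alpha|/|\sigmap|)^{2/3}/\hbar)^{2\lambda}$, so the square root of this piece gives $e^{-((m|\alpha|\sigmazero)^{2/3}/\hbar)^{2\lambda}}$ multiplied by a polynomial factor $A^{-\lambda/2}$ which is absorbed (for $A$ in any bounded range and for $A\to\infty$) into the claimed $e^{-\frac12(\cdots)^{2\lambda}}$ by trivial inequality.

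The only delicate point is the choice of $k_{*}$: the two pieces of the split must balance in such a way that the polynomial bound on the first piece matches exactly $(\hbar(|\sigmap|/(m\alpha))^{2/3})^{3/2-3\lambda/2}$ while the Gaussian tail exponent in the second piece becomes $((m|\alpha|/|\sigmap|)^{2/3}/\hbar)^{2\lambda}$. Otherwise the argument is a routine application of Plancherel and of the bounds on the reflection coefficients, followed by $\sqrt{a+b}\leq\sqrt{a}+\sqrt{b}$ to pass to the final estimate.
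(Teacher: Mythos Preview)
Your argument is correct and follows essentially the same route as the paper: reduce to a $k$-integral via Plancherel and the trivial doubling bound $\|\psi(\pm|\cdot|)\|_{L^2}^2\le 2\|\psi\|_{L^2}^2$, extract the factor $e^{-2p^2/(\hbar|\sigmap|^2)}$ from the Gaussian, bound $|R_--R_+|$ by $\min\{1,(\hbar^2 k/(m\alpha))^2\}$ (up to harmless constants), and split at the same threshold $k_*=|\sigmap|\hbar^{-1/2}\big((m|\alpha|/|\sigmap|)^{2/3}/\hbar\big)^{\lambda}$.

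The one place where the paper is slightly cleaner is the tail piece. The paper uses $e^{-2\hbar k^2/|\sigmap|^2}\le e^{-\hbar k_*^2/|\sigmap|^2}\,e^{-\hbar k^2/|\sigmap|^2}$ and integrates the remaining Gaussian, which produces directly a constant times $e^{-A^{2\lambda}}$ (here $A:=\hbar^{-1}(m|\alpha|/|\sigmap|)^{2/3}$) with no polynomial prefactor. Your tail bound $\int_{k_*}^{\infty}e^{-2\hbar k^2/|\sigmap|^2}\,dk\le \tfrac{|\sigmap|^2}{4\hbar k_*}e^{-2\hbar k_*^2/|\sigmap|^2}$ leaves the factor $A^{-\lambda/2}$, and the sentence ``absorbed (for $A$ in any bounded range and for $A\to\infty$)'' is not quite right as stated: $A^{-\lambda/2}e^{-A^{2\lambda}}$ is \emph{not} dominated by $Ce^{-\frac12 A^{2\lambda}}$ uniformly as $A\to 0^+$. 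The fix is immediate (and is what you presumably intend): for $A\le 1$ simply use the trivial bound $\int_{k_*}^{\infty}|\psiht(-\sgn(p)k)|^2\,dk\le \tfrac12$, together with $e^{-\frac12 A^{2\lambda}}\ge e^{-1/2}$, so the tail piece is $\le C\,e^{-\frac12 A^{2\lambda}}$ also in that regime. With that one-line adjustment your proof matches the paper's.
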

\begin{proof}
Recalling the definition of $E^{\hbar}_{2,t}$ (see Eq.\! \eqref{E2}), by the elementary inequality $\| \psi (\pm |\,\cdot\,|)\|_{L^2(\RE)}^2 \!\leq\! 2\,\| \psi \|_{L^2(\RE)}^2$ and by unitarity of the Fourier transform, we have
\begin{align*}
\|E^{\hbar}_{2,t}\|_{L^2(\RE)}^2
& \leq  2 \left\|\frac1{\sqrt{2\pi}} \int_{\RE}\! dk\; e^{-i \frac{\hbar t}{2m}\, k^2}\,e^{i \sgn(q p) k x}\;\HE(k) \Big[R_{-}(k) - R_{+}(k)\Big] \psiht\big(\!- \sgn(p) k\big)\right\|_{L^2(\RE)}^2 \\
& = 2 \int_{0}^{\infty}\!\!dk\; \big|R_{-}(k) - R_{+}(k)\big|^{2}\, \big|\psiht\big(\!- \sgn(p) k\big)\big|^2\,.
\end{align*}
Moreover, from the definitions of $R_{\pm}(k)$ (see Eq.s \eqref{vfiRdef} and \eqref{vfiRdef-}), by explicit computations we obtain ({\footnote{In particular, note that $ \sup_{\xi \in \RE} [{\xi^{2}/ (1 \!+\! \xi^2)^2}] = 1/4$\,.}})
\begin{align*}
\big|R_{-}(k) - R_{+}(k)\big|^2 = {\big({\hbar^2 k \over m \alpha}\big)^2 \over \big(1 + \big({\hbar^2 k \over m \alpha}\big)^2\big)^2} \leq \min\left\{{1 \over 4}\,,\Big(\frac{\hbar^2 k}{m\alpha}\Big)^{\!2}\right\}\,.% \label{inequalityRb}
\end{align*}

Thus, taking into account Identity \eqref{psih} for $\psiht$, we infer the following for any fixed $\lambda > 0$:
\begin{align*}
\big\|E^{\hbar}_{2,t}\big\|_{L^2(\RE)}^2 & \leq {8 \over |\sigmap|} \sqrt{{2 \hbar \over \pi}} \int_{0}^{\infty}\!\!dk\; \big|R_{-}(k) - R_{+}(k)\big|^2\; e^{-{2\hbar  (k + |p|/\hbar)^2 \over |\sigmap|^2}} \\
& \leq {8 \over |\sigmap|} \sqrt{{2 \hbar \over \pi}}\; e^{-{2\, p^2 \over \hbar |\sigmap|^2}} \int_{0}^{\infty}\!\!dk\;\big|R_{+}(k) - R_{-}(k)\big|^2\; e^{-{2\hbar k^2 \over |\sigmap|^2}} \\
& \leq {4 \over |\sigmap|} \sqrt{{2 \hbar \over \pi}}\; e^{-{2\, p^2 \over \hbar |\sigmap|^2}}\! \left[ \int_{\left\{|k| \,\leq\, {|\sigmap| \over \sqrt{\hbar}}\left(\!{1 \over \hbar} \big(\!{m|\alpha| \over |\sigmap|}\!\big)^{\!2/3}\right)^{\!\lambda} \right\}}\!dk\; \Big(\frac{\hbar^2 k}{m\alpha}\Big)^{\!2}\,e^{-{2\hbar k^2 \over |\sigmap|^2}}
\!+ {1 \over 4} \int_{\left\{|k| \,\geq\, {|\sigmap| \over \sqrt{\hbar}}\left(\!{1 \over \hbar} \big(\!{m|\alpha| \over |\sigmap|}\!\big)^{\!2/3}\right)^{\!\lambda} \right\}}\!dk\; e^{-{2\hbar k^2 \over |\sigmap|^2}} \right].
\end{align*}
On the one hand, we have
\begin{align*}
\int_{\left\{|k| \,\leq\, {|\sigmap| \over \sqrt{\hbar}}\left(\!{1 \over \hbar} \big(\!{m|\alpha| \over |\sigmap|}\!\big)^{\!2/3}\right)^{\!\lambda} \right\}}\!dk\, \Big(\frac{\hbar^2 k}{m\alpha}\Big)^{\!2}\, e^{-{2\hbar k^2 \over |\sigmap|^2}} 
& \leq \Big(\frac{\hbar^2}{m\alpha}\Big)^{\!2} \int_{\left\{|k| \,\leq\, {|\sigmap| \over \sqrt{\hbar}}\left(\!{1 \over \hbar} \big(\!{m|\alpha| \over |\sigmap|}\!\big)^{\!2/3}\right)^{\!\lambda} \right\}}\!dk\; k^2 \\
& = {2 \over 3} \Big(\frac{\hbar^2}{m\alpha}\Big)^{\!2}\! \left({|\sigmap| \over \sqrt{\hbar}}\Big({1 \over \hbar} \Big({m|\alpha| \over |\sigmap|}\Big)^{\!2/3}\Big)^{\!\lambda}\right)^{\!\!3}.
\end{align*}
On the other hand,
\begin{align*}
\int_{\left\{|k| \,\geq\, {|\sigmap| \over \sqrt{\hbar}}\left(\!{1 \over \hbar} \big(\!{m|\alpha| \over |\sigmap|}\!\big)^{\!2/3}\right)^{\!\lambda} \right\}}\!dk\; e^{-{2 \hbar k^2 \over |\sigmap|^2}}
\leq e^{-\left(\!{1 \over \hbar} \big(\!{m|\alpha| \over |\sigmap|}\!\big)^{\!2/3}\right)^{\!2\lambda}}\! \int_{\RE}\!dk\; e^{-{\hbar k^2 \over |\sigmap|^2}}
= \sqrt{\pi \over \hbar}\;|\sigmap|\; e^{-\left(\!{1 \over \hbar} \big(\!{m|\alpha| \over |\sigmap|}\!\big)^{\!2/3}\right)^{\!2\lambda}}\,.
\end{align*}

Summing up, the above relations imply
\[
\big\|E^{\hbar}_{2,t}\big\|_{L^2(\RE)}^2  \leq e^{-{2\, p^2 \over \hbar |\sigmap|^2}}\! \left[ {8 \over 3}\sqrt{{2 \over \pi}}\, \left(\hbar \Big(\frac{|\sigmap|}{m\alpha}\Big)^{\!2/3}\right)^{\!\!3-3\lambda} + \sqrt{2}\; e^{-\left(\!{1 \over \hbar} \big(\!{m|\alpha| \over |\sigmap|}\!\big)^{\!2/3}\right)^{\!2\lambda}} \right] ,
\]
which yields the thesis in view of the basic relation $\sqrt{a^2 + b^2} \leq a + b$ for $a,b\geq 0$.
\end{proof}

In the next lemma we collect all the results of the previous lemmata. 
\begin{lemma}\label{l:main} There exists a constant $C>0$ such that for any $\psih$ of the form \eqref{CohSt} with $qp\not=0$,  for all $t\in\RE$,  and for all $\ve_1,\ve_2>0 $, there holds
\begin{equation}\label{timedependent1}
\begin{aligned}
& \big\|e^{-i \frac{t}{\hbar} \H_\alpha} \psih - \upsilonh_t \big\|_{L^2(\RE)}  \\ 
& \leq  
 C\Bigg[\!\left(\! \hbar\Big({|\sigmap| \over m|\alpha|}\Big)^{\!2/3}\right)^{\!\!\frac32-\ve_1}\! +  \,e^{-{1 \over 2} \Big(\!{1 \over \hbar} \big(\!{m|\alpha| \over |\sigmap|}\!\big)^{\!2/3}\!\Big)^{\!\!2\ve_1}} 
 + e^{-{p^2 \over \hbar |\sigmap|^2}}\! \bigg(\left(\hbar\, \Big(\frac{|\sigmap|}{m|\alpha|}\Big)^{\!2/3}\right)^{\!\!{3 \over 2}-{\frac32 \lambda_2 }} + e^{-{1 \over 2}\left(\!{1 \over \hbar} \big(\!{m|\alpha| \over |\sigmap|}\!\big)^{\!2/3}\right)^{\!2\lambda_2}} \bigg)  \\
& \qquad\; +  e^{- {q^2 \over 8\hbar|\sigmaq|^2}} + e^{- {m |\alpha|\,|q| \over 8 \hbar^2}}\!\Bigg] 
\end{aligned}
\end{equation}
where 
\begin{equation}
\begin{aligned}\label{phiht2}
 \upsilonh_t (x) & := \big( e^{-i \frac{t}{\hbar} \H_0} \psih \big)(x)  + \HE(qp)\, R_{+}(p/\hbar)\;\big(e^{-i \frac{t}{\hbar} \H_0}\psih\big)\big(\!-\sgn(q)|x|\big)\\ 
& \qquad + \HE(-qp)\, {R_{-}(p/\hbar)}\;\big(e^{-i \frac{t}{\hbar} \H_0}\psih\big)\big(\!-\sgn(q)|x|\big)\,.
\end{aligned}
\end{equation}
\end{lemma}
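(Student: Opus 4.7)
The plan is to combine Proposition \ref{propSemi} with the triangle inequality and then control each of the five resulting contributions by invoking the preceding lemmata one at a time; no new analytical ideas are needed beyond those already developed.

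First, I would subtract the Ansatz \eqref{phiht2} from the decomposition \eqref{psital} for $e^{-i\frac{t}{\hbar}\H_\alpha}\psih$. Since $qp\neq 0$, exactly one of the Heaviside factors $\HE(qp)$, $\HE(-qp)$ is nonzero, so the difference reduces to
\[
 e^{-i\frac{t}{\hbar}\H_\alpha}\psih - \upsilonh_t
 = \HE(\pm qp)\bigl[F^{\hbar}_{\pm,t}(-\sgn(q)|\cdot|) - R_{\pm}(p/\hbar)\,(e^{-i\frac{t}{\hbar}\H_0}\psih)(-\sgn(q)|\cdot|)\bigr]
 + E^{\hbar}_{1,t} + E^{\hbar}_{2,t} + E^{\hbar}_{\alpha,t}
\]
(choosing the sign matching $\sgn(qp)$), and the triangle inequality splits the norm into four pieces.

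Next, for the leading piece I would use the elementary observation that, for any $g\in L^2(\RE)$, the folded function $x\mapsto g(-\sgn(q)|x|)$ has $L^2$-norm bounded by $\sqrt2\,\|g\|_{L^2(\RE)}$ (one unfolds over the half line and uses that $g$ is not necessarily symmetric). Applied to $g=F^\hbar_{\pm,t}-R_\pm(p/\hbar)\,e^{-i\frac{t}{\hbar}\H_0}\psih$, this yields
\[
 \bigl\|F^{\hbar}_{\pm,t}(-\sgn(q)|\cdot|) - R_{\pm}(p/\hbar)\,(e^{-i\frac{t}{\hbar}\H_0}\psih)(-\sgn(q)|\cdot|)\bigr\|_{L^2(\RE)}
 \leq \sqrt2 \,\bigl\|F^{\hbar}_{\pm,t} - R_{\pm}(p/\hbar)\,e^{-i\frac{t}{\hbar}\H_0}\psih\bigr\|_{L^2(\RE)},
\]
and the right-hand side is controlled, with $\ve=\ve_1$, by Lemma \ref{LemmaEpbis}; this produces the first two summands on the right-hand side of \eqref{timedependent1}.

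The three remaining contributions are handled directly: Lemma \ref{LemmaE2bis}, applied with $\lambda=\lambda_2$, bounds $\|E^\hbar_{2,t}\|_{L^2}$ and yields the bracket multiplied by $e^{-p^2/(\hbar|\sigmap|^2)}$; Lemma \ref{LemmaE1bis} gives the $e^{-q^2/(4\hbar|\sigmaq|^2)}$ contribution (which we absorb into $e^{-q^2/(8\hbar|\sigmaq|^2)}$ up to an innocuous constant); and Lemma \ref{LemmaProj}, via the identity $\|E^\hbar_{\alpha,t}\|_{L^2}=\|P_\alpha\psih\|_{L^2}$ (the factor $e^{-i t\lambda_\alpha/\hbar}$ is unimodular), produces the remaining $e^{-m|\alpha||q|/(8\hbar^2)}$ and $e^{-q^2/(8\hbar|\sigmaq|^2)}$ terms. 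Adding these four estimates and absorbing numerical constants into a single $C>0$ gives exactly \eqref{timedependent1}; no step is an obstacle, as the work has already been done in Lemmata \ref{LemmaEpbis}--\ref{LemmaE2bis}, and the only small point of care is the folding bookkeeping that accompanies the change of variable $x\mapsto-\sgn(q)|x|$.
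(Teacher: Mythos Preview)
Your proposal is correct and follows essentially the same approach as the paper: subtract $\upsilonh_t$ from the decomposition \eqref{psital}, apply the triangle inequality, and then invoke Lemmata \ref{LemmaEpbis}, \ref{LemmaProj}, \ref{LemmaE1bis}, \ref{LemmaE2bis} for the four resulting pieces. The only thing you make explicit that the paper leaves implicit is the folding estimate $\|g(-\sgn(q)|\cdot|)\|_{L^2}\le\sqrt{2}\,\|g\|_{L^2}$ needed before Lemma \ref{LemmaEpbis} can be applied; this is harmless and indeed the same device the paper uses elsewhere.
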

\begin{proof} Note that, by Eq. \eqref{Ppp} of Lemma \ref{LemmaProj}, it follows that 
$$ \big\|E^{\hbar}_{{\alpha},t}\big\|_{L^2(\RE)} \leq  C \left( e^{- {m |\alpha|\,|q| \over 8 \hbar^2}} + e^{- \frac{q^2}{8 \hbar|\sigmaq|^2}}\right) .
$$
Then, claim \eqref{timedependent1} follows immediately from Eq. \eqref{psital}, together with  the expansions of the terms $F^{\hbar}_{\pm,t}$ in Lemma \ref{LemmaEpbis}, and  the bounds on the remainders $E^{\hbar}_{1,t}$, $E^{\hbar}_{2,t}$ in Lemmata \ref{LemmaE1bis}, \ref{LemmaE2bis}. In the bound \eqref{timedependent1} we also used
% $ e^{- \frac12  { p^2 \over \hbar |\sigmap|^2} }\leq e^{- \frac12 \left( { p^2 \over \hbar |\sigmap|^2} \right)^{1-2\ve}}$ for all ${ p^2 \over \hbar |\sigmap|^2} \geq 1 $ and $\ve \in(0,1/2)$, together  with
the trivial inequality  $e^{- {q^2 \over 4\hbar|\sigmaq|^2}} \leq e^{- {q^2 \over 8\hbar|\sigmaq|^2}}$.
\end{proof}

%%%%%%%%%%%%%%%%%%%%%%%%%%%%%%%%%%%%%%%%%%%%%%%%%%%%%%%%%%%%%%%%%%%%%%%%%%%%%%%%%%%%%%%%%%%%%%%%%%%%%%%%%%%%%%%%%%%%%%%%%%%%%%%%%%%%%%%%%%%%%%%%%%%%%%%%%%%%%%%%%%%%%%%%%%%%%%%%%%%%%%%%%%%%%%%%%%%%%%%%%%%%%%%%%%%%%%%%%%%%%%%%%%%%%%%%%%%%%%%%%%%%%%%%%%%%%%%%%%%%%%%%%%%%%%%%%%%%%%%%%%%%%%%%%%%%%%%%%

For later reference, let us further notice the following
\begin{lemma}\label{Lemmapsi0} For any $\psih$ of the form \eqref{CohSt} with $qp\not=0$,  there holds
\begin{gather}
\left\|\psih\big(\sigmaq,\sigmap,q,p; \sgn (q)|\cdot|\big) - \Big(\psih(\sigmaq,\sigmap,q,p;\,\cdot\,)+ \psih(\sigmaq,\sigmap,-q,-p;\,\cdot\,)\!\Big) \right\|_{L^2(\RE)} \leq   e^{- \frac{q^2}{4\hbar|\sigmaq|^2}}\,,\label{psi0mod1} \\
\left\|\psih\big(\sigmaq,\sigmap,q,p;- \sgn (q)|\cdot|\big) \right\|_{L^2(\RE)} \leq  e^{- \frac{q^2}{4\hbar|\sigmaq|^2}}\,. \label{psi0mod2}
\end{gather}
\end{lemma}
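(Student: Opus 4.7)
\textbf{Proof plan for Lemma \ref{Lemmapsi0}.}
My approach rests on two elementary preliminaries. First, the normalization condition $\Re[\overline{\sigmaq}\sigmap] = 1$ from \eqref{sxsp1} gives $\Re(\sigmap/\sigmaq) = 1/|\sigmaq|^2$, so a direct computation from \eqref{CohSt} yields
\[
\big|\psih(\sigmaq,\sigmap,q,p;x)\big|^{2} \,=\, \frac{1}{\sqrt{2\pi\hbar}\,|\sigmaq|}\,\exp\!\left(-\,\frac{(x-q)^{2}}{2\hbar|\sigmaq|^{2}}\right),
\]
i.e.\ a normalized Gaussian density with mean $q$ and variance $\hbar|\sigmaq|^{2}$. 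Second, by direct substitution in \eqref{CohSt} one checks the reflection identity
\[
\psih(\sigmaq,\sigmap,-q,-p;x) \,=\, \psih(\sigmaq,\sigmap,q,p;-x).
\]
I will also use the standard Gaussian tail estimate, valid for any $a \geq 0$ and $\sigma > 0$: from the pointwise inequality $e^{-y^{2}/(2\sigma^{2})} \leq e^{-(y-a)^{2}/(2\sigma^{2})}\,e^{-a^{2}/(2\sigma^{2})}$ on $y \geq a$, integration gives $\int_{a}^{\infty} \frac{1}{\sqrt{2\pi}\,\sigma}\,e^{-y^{2}/(2\sigma^{2})}\,dy \leq \tfrac{1}{2}\,e^{-a^{2}/(2\sigma^{2})}$.

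For \eqref{psi0mod1}, set $\psi(x) := \psih(\sigmaq,\sigmap,q,p;x)$ and assume $q > 0$ (the case $q < 0$ is symmetric). Then $\psi(\sgn(q)|x|) = \psi(|x|)$ equals $\psi(x)$ on $x > 0$ and $\psi(-x)$ on $x < 0$, so by the reflection identity
\[
\psi(|x|) - \psi(x) - \psi(-x) \,=\, -\,\mathbf{1}_{\{x < 0\}}\,\psi(x) \,-\, \mathbf{1}_{\{x > 0\}}\,\psi(-x).
\]
The two terms on the right have disjoint support, so their squared $L^{2}$ norm is the sum of the individual squared norms. A change of variable $x \mapsto -x$ in the second integral reduces both to the Gaussian tail $\int_{-\infty}^{0}|\psi(x)|^{2}dx$, which by the estimate recalled above is at most $\tfrac{1}{2}\,e^{-q^{2}/(2\hbar|\sigmaq|^{2})}$. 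Summing and taking square roots yields \eqref{psi0mod1}.

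For \eqref{psi0mod2}, the function $\psi(-\sgn(q)|x|)$ is even in $x$, hence
\[
\big\|\psi(-\sgn(q)|\cdot|)\big\|_{L^{2}(\RE)}^{2} \,=\, 2\int_{0}^{\infty}\!\big|\psi(-\sgn(q)\,x)\big|^{2}\,dx\,,
\]
and another change of variable identifies this (for either sign of $q$) with twice the Gaussian tail $\int_{-\infty}^{0}|\psi(x)|^{2}dx$ if $q > 0$, or $\int_{0}^{\infty}|\psi(x)|^{2}dx$ if $q < 0$; in both cases the centre of $|\psi|^{2}$ lies at distance $|q|$ from the boundary of the integration region, so the same tail estimate gives a bound $\leq e^{-q^{2}/(2\hbar|\sigmaq|^{2})}$, whose square root is the claim. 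There is no substantive obstacle; the only care required is the bookkeeping of signs in the case analysis for $\sgn(q)$, which merely selects the orientation of the half-line on which the coherent-state density is being truncated.
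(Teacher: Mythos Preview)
Your proposal is correct and follows essentially the same route as the paper's own proof: both use the reflection identity $\psih(\sigmaq,\sigmap,-q,-p;x)=\psih(\sigmaq,\sigmap,q,p;-x)$, reduce the norms in question to twice a Gaussian half-line tail $\int_{-\infty}^{0}|\psih(x)|^{2}\,dx$ (for $q>0$), and bound that tail by $\tfrac{1}{2}e^{-q^{2}/(2\hbar|\sigmaq|^{2})}$. The only cosmetic difference is that the paper invokes $e^{-(a+b)^{2}}\le e^{-a^{2}-b^{2}}$ directly whereas you phrase the same estimate as a translation inequality; the arguments are otherwise identical.
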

\begin{proof} As a example, we give the proof for $q \geq 0$. Similar arguments can be employed for $q < 0$, and we omit them for brevity.

Noting that $\psih(\sigmaq,\sigmap,-q,-p;x) = \psih(\sigmaq,\sigmap,q,p;-x)$, by direct computations we get
\begin{align*}
& \left\|\psih\big(\sigmaq,\sigmap,q,p; |\cdot|\big) - \Big(\psih(\sigmaq,\sigmap,q,p;\,\cdot\,)+ \psih(\sigmaq,\sigmap,-q,-p;\,\cdot\,)\!\Big) \right\|_{L^2(\RE)}^2 \\
& = \int_{\RE}\! dx\; \big|\HE(-x)\,\psih(\sigmaq,\sigmap,p,q;x) + \HE(x)\,\psih(\sigmaq,\sigmap,p,q;-x)\big|^2 \\
& = 2 \int_{\RE}\! dx\; \big|\HE(x)\,\psih(\sigmaq,\sigmap,p,q;-x) \big|^2 =  \frac{2}{\sqrt{2\pi \hbar}\; |\sigmaq|} \int_{0}^{\infty}\!\! dx\; e^{- \frac{(-x - q)^2}{2\hbar|\sigmaq|^2}} \\
& \leq \frac{2\,e^{- \frac{q^2}{2\hbar|\sigmaq|^2}}}{\sqrt{2\pi \hbar}\; |\sigmaq|} \int_{0}^{\infty}\!\! dx\; e^{- \frac{x^2}{2\hbar|\sigmaq|^2}} =  e^{- \frac{q^2}{2\hbar|\sigmaq|^2}}\,,
\end{align*}
where we used the identity $\Re(\sigmap/\sigmaq) = |\sigmaq|^{-2}$ (see Eq. \eqref{sxsp1}) and the inequality $e^{-(a+b)^2} \leq e^{-a^2-b^2}$ for $a,b\geq 0 $. This proves Eq. \eqref{psi0mod1} for $q \geq 0$.

On the other hand, we have
\[\begin{aligned}
& \left\|\psih\big(\sigmaq,\sigmap,p,q;-|\cdot|\big)\right\|_{L^2(\RE)}^2 = \frac{1}{\sqrt{2\pi \hbar}\; |\sigmaq|} \int_{\RE} dx \; e^{- \frac{(-|x|-q)^2}{2\hbar|\sigmaq|^2}}  \\ 
& \leq \frac{e^{- \frac{q^2}{2\hbar|\sigmaq|^2}}}{\sqrt{2\pi \hbar}\; | \sigmaq|} \int_{\RE} dx \; e^{- \frac{x^2}{2\hbar|\sigmaq|^2}} 
=  e^{- \frac{q^2}{2\hbar|\sigmaq|^2}}\,;
\end{aligned}
\]
this yields Eq. \eqref{psi0mod2} for $q \geq 0$, thus concluding the proof.
\end{proof}

%%
%%
%%
%%
%%
%%
%%\begin{lemma}\label{LemEpp} There exists a constant $C>0$ such that, for any $\psih$ of the form \eqref{CohSt} with $qp\not=0$,  and for all $t\in\RE$, there holds
%%$$ 
%%\big\|E^{\hbar}_{{\alpha},t}\big\|_{L^2(\RE)} = C \left( e^{- {m |\alpha|\,|q| \over 8 \hbar^2}} + e^{- \frac{q^2}{8 \hbar|\sigmaq|^2}}\right) .
%%$$
%%\end{lemma}
%%\begin{proof}
%%The thesis follows straightforwardly from Lemma \ref{LemmaProj}, recalling the definition \eqref{Epp} of $E^{\hbar}_{{\alpha},t}$.
%%\end{proof}
%%

\subsection{Proof of Theorem \ref{t:1}\label{ss:3.1}}

At first, in the following proposition we give an explicit formula for the semiclassical limit  evolution of a coherent state. 
\begin{proposition} \label{p:blacksun}
Let $\beta =  2\alpha/\hbar$. Then, under the assumptions of Theorem \ref{t:1} there holds
\begin{equation}\label{blacksun}
\begin{aligned}
e^{\frac{i}{\hbar}\Szero_{t}} \big(e^{itL_{\beta}}\phi^{\hbar}_{\sigma_{t},x}\big)(\xi)
=\, & 
\,\big(e^{-i\frac{t}{\hbar}\H_{0}}\,\psi^{\hbar}_{\sigmazero,\xi}\big)(x)  \\ 
& + \HE (-qp)\HE \!\left(t+\frac{m q}{p}\right) \, {R_{-}(p/\hbar)}\,\Big(
\big(e^{-i\frac{t}{\hbar}\H_{0}}\,\psi^{\hbar}_{\sigmazero,\xi}\big)(x) + \big(e^{-i\frac{t}{\hbar}\H_{0}}\,\psi^{\hbar}_{\sigmazero,\xi}\big)(-x)\Big)
\\ 
& +  \HE (qp)\HE \!\left(-t-\frac{m q}{p}\right) \, {R}_{+}(p/\hbar)\,\Big(
\big(e^{-i\frac{t}{\hbar}\H_{0}}\,\psi^{\hbar}_{\sigmazero,\xi}\big)(x) + \big(e^{-i\frac{t}{\hbar}\H_{0}}\,\psi^{\hbar}_{\sigmazero,\xi}\big)(-x)\Big)\,.
 \end{aligned}
 \end{equation}
\end{proposition}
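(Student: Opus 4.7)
The plan is to substitute the explicit formula \eqref{Lbeta-group} for $e^{itL_{\beta}}$ into the left-hand side of \eqref{blacksun}, rewrite each resulting piece in terms of the free quantum evolution via \eqref{free2}, and finally match the Heaviside-function prefactors of the two sides through a sign analysis based on $\sgn(t)$ and $\sgn(qp)$.

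First, I would apply \eqref{Lbeta-group} to $f = \phi^{\hbar}_{\sigma_{t},x}$ evaluated at $\xi=(q,p)$, obtaining
\begin{equation*}
(e^{itL_{\beta}}\phi^{\hbar}_{\sigma_{t},x})(\xi) = (e^{itL_{0}}\phi^{\hbar}_{\sigma_{t},x})(\xi) - \frac{\HE(-tqp)\,\HE\!\left(\tfrac{|pt|}{m}-|q|\right)}{1-\sgn(t)\,\tfrac{2i|p|}{m\beta}}\,(e^{itL_{0}}(\phi^{\hbar}_{\sigma_{t},x})_{ev})(\xi).
\end{equation*}
By \eqref{free2}, multiplying the first summand by $e^{iA_{t}/\hbar}$ produces exactly $(e^{-i(t/\hbar)\H_{0}}\psi^{\hbar}_{\sigmazero,\xi})(x)$, which accounts for the leading term of \eqref{blacksun}. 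For the even part, the manifest Gaussian identity $\psi^{\hbar}_{\sigma,-\xi}(x) = \psi^{\hbar}_{\sigma,\xi}(-x)$ gives $(\phi^{\hbar}_{\sigma_{t},x})_{ev}(\xi) = \phi^{\hbar}_{\sigma_{t},x}(\xi) + \phi^{\hbar}_{\sigma_{t},-x}(\xi)$, so applying \eqref{free2} separately with spatial argument $x$ and $-x$ yields
\begin{equation*}
e^{\tfrac{i}{\hbar}A_{t}}(e^{itL_{0}}(\phi^{\hbar}_{\sigma_{t},x})_{ev})(\xi) = (e^{-i(t/\hbar)\H_{0}}\psi^{\hbar}_{\sigmazero,\xi})(x) + (e^{-i(t/\hbar)\H_{0}}\psi^{\hbar}_{\sigmazero,\xi})(-x),
\end{equation*}
which is precisely the bracketed combination appearing on the right-hand side of \eqref{blacksun}.

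It remains to match the scalar coefficient in front of this even combination. From $\beta = 2\alpha/\hbar$ together with the definitions \eqref{vfiRdef}, \eqref{vfiRdef-} of $R_{\pm}$, a direct computation gives $-\big(1 - \sgn(t)\,2i|p|/(m\beta)\big)^{-1} = R_{-}(p/\hbar)$ for $t>0$ and $R_{+}(p/\hbar)$ for $t<0$. The only remaining point is the Heaviside identity
\begin{equation*}
\HE(-tqp)\,\HE\!\left(\tfrac{|pt|}{m}-|q|\right) = \HE(-qp)\,\HE\!\left(t+\tfrac{mq}{p}\right) + \HE(qp)\,\HE\!\left(-t-\tfrac{mq}{p}\right)
\end{equation*}
valid on $\{qp\neq 0\}$: the first factor on the left forces $\sgn(t)\,\sgn(qp) = -1$, so exactly one of the two summands on the right is selected according to $\sgn(t)$, and in each of the four corresponding $(q,p)$-sign cases the inequality $|pt|/m > |q|$ reduces directly to the appropriate one-sided statement about $t$ relative to the collision time $t_{coll}(\xi) = -mq/p$. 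Combining these three ingredients reproduces \eqref{blacksun}.

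The only mildly non-routine step is this final sign bookkeeping, which is what encodes the physical content of \eqref{blacksun}: depending on whether the collision has already occurred or not, the singular correction carries the reflection coefficient $R_{\sgn(t)}(p/\hbar)$; once the four sign cases are laid out, however, each reduces to a one-line check.
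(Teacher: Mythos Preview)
Your proof is correct and follows essentially the same route as the paper's own argument: apply the explicit formula for $e^{itL_{\beta}}$, convert the free pieces via \eqref{free2}, use the parity identity $\psi^{\hbar}_{\sigma,-\xi}(x)=\psi^{\hbar}_{\sigma,\xi}(-x)$ to handle the even part, and then reduce the Heaviside prefactor by a case analysis in $\sgn(t)$ and $\sgn(qp)$. The only cosmetic difference is that you fold the parity identity in \emph{before} applying $e^{itL_0}$ (turning $(\phi^{\hbar}_{\sigma_t,x})_{ev}$ into $\phi^{\hbar}_{\sigma_t,x}+\phi^{\hbar}_{\sigma_t,-x}$), while the paper first passes to $(e^{-i\frac{t}{\hbar}\H_0}\psi^{\hbar}_{\sigma_0,-\xi})(x)$ and then invokes parity; the content is identical. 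One small point worth making explicit in your write-up: when you pair $R_{-}$ with $\HE(-qp)\,\HE(t+mq/p)$ and $R_{+}$ with $\HE(qp)\,\HE(-t-mq/p)$, you are implicitly using that the first product is nonzero only for $t>0$ and the second only for $t<0$ (since $qp<0$ forces $-mq/p>0$, etc.), which is exactly what selects the correct sign in $-\big(1-\sgn(t)\,\tfrac{2i|p|}{m\beta}\big)^{-1}$.
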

\begin{proof}
Recall that $\xi = (q,p)$. We start by noticing that, by Eq. \eqref{Lbeta-group} (see also Eq. \eqref{ftal1} with $t\to -t$), 
\[
\big(e^{itL_{\beta}}\phi^{\hbar}_{\sigma_{t},x}\big)(\xi)
= 
\big(e^{itL_{0}}\phi_{\sigma_{t},x}\big)(q,p)-\frac{\HE (-tqp)\HE \!\left(\frac{|pt|}{m}-|q|\right)}{1-\sgn(t)\,\frac{2i|p|}{m\beta}}\Big(\big(e^{itL_{0}} \phi_{\sigma_{t},x}\big)(q,p)+\big(e^{itL_{0}}\phi_{\sigma_{t},x}\big)(-q,-p)\Big)\,;
\]
hence, on account of Identity \eqref{free2}, we infer 
\[
e^{\frac{i}{\hbar}\Szero_{t}} \big(e^{itL_{\beta}}\phi^{\hbar}_{\sigma_{t},x}\big)(\xi)
= 
\big(e^{-i\frac{t}{\hbar}\H_{0}}\,\psi^{\hbar}_{\sigmazero,\xi}\big)(x)-\frac{\HE (-tqp)\HE \!\left(\frac{|pt|}{m}-|q|\right)}{1-\sgn(t)\,\frac{2i|p|}{m\beta}}\Big(
\big(e^{-i\frac{t}{\hbar}\H_{0}}\,\psi^{\hbar}_{\sigmazero,\xi}\big)(x) + \big(e^{-i\frac{t}{\hbar}\H_{0}}\,\psi^{\hbar}_{\sigmazero,-\xi}\big)(x)\Big)\,. 
\]
We note that 
\[\begin{aligned}
\big(e^{-i\frac{t}{\hbar}\H_{0}}\,\psi^{\hbar}_{\sigmazero,-\xi}\big)(x) 
=\, &\,  e^{\frac{i}{\hbar}\Szero_{t}} \psi^{\hbar}\Big(\sigmazero+\frac{it}{2m\sigmazero},\sigmazero^{-1},- q-\frac{pt}{m},-p;x\Big)  \\ 
=\, & \,e^{\frac{i}{\hbar}\Szero_{t}}\psi^{\hbar} \Big(\sigmazero+\frac{it}{2m\sigmazero},\sigmazero^{-1},q+\frac{pt}{m},p;-x\Big)  = \big(e^{-i\frac{t}{\hbar}\H_{0}}\,\psi^{\hbar}_{\sigmazero,\xi}\big)(-x) \,,
\end{aligned}
\]
whence, 
\[
e^{\frac{i}{\hbar}\Szero_{t}} \big(e^{itL_{\beta}}\phi^{\hbar}_{\sigma_{t},x}\big)(\xi)
= 
\big(e^{-i\frac{t}{\hbar}\H_{0}}\,\psi^{\hbar}_{\sigmazero,\xi}\big)(x)-\frac{\HE (-tqp)\HE \!\left(\frac{|pt|}{m}-|q|\right)}{1-\sgn(t)\,\frac{2i|p|}{m\beta}}\Big(
\big(e^{-i\frac{t}{\hbar}\H_{0}}\,\psi^{\hbar}_{\sigmazero,\xi}\big)(x) + \big(e^{-i\frac{t}{\hbar}\H_{0}}\,\psi^{\hbar}_{\sigmazero,\xi}\big)(-x)\Big)\,. 
\]
To conclude the proof we observe that 
\begin{align*}
\frac{\HE (-tqp)\,\HE \!\left(\frac{|pt|}{m}-|q|\right)}{1-\sgn(t)\,\frac{2i|p|}{m\beta}} 
=\, &\, \frac{\HE (t)\,\HE (-qp)\,\HE \!\left(\frac{|p|t}{m}-|q|\right)}{1-\sgn(t)\,\frac{2i|p|}{m\beta}} + \frac{\HE (-t)\,\HE (qp)\,\HE \!\left(-\frac{|p|t}{m}-|q|\right)}{1-\sgn(t)\,\frac{2i|p|}{m\beta}} \nonumber \\ 
% =& \HE (t)\,\HE (-qp)\,\HE \!\left(t-\frac{m |q|}{|p|}\right) + \HE (-t)\,\HE (qp)\,\HE \!\left(-t-\frac{m|q|}{|p|}\right) 
% \nonumber  \\
=\, &\, \frac{\HE (-qp)\,\HE \!\left(t-\frac{m |q|}{|p|}\right)}{1-\frac{2i|p|}{m\beta}} + \frac{\HE (qp)\,\HE \!\left(-t-\frac{m|q|}{|p|}\right)}{1+ \frac{2i|p|}{m\beta}}
 \nonumber \\ 
=\, &\, \frac{\HE (-qp)\,\HE \!\left(t+\frac{m q}{p}\right)}{1-\frac{2i|p|}{m\beta}} + \frac{\HE (qp)\,\HE \!\left(-t-\frac{mq}{p}\right)}{1+\frac{2i|p|}{m\beta}}\,. 
\end{align*}
Setting $\beta =  2\alpha/\hbar$ and recalling the definitions of $R_{\pm}(k)$ (see Eq.s \eqref{vfiRdef} and \eqref{vfiRdef-}) we obtain Identity \eqref{blacksun}.  
\end{proof}

%\begin{theorem}\label{t:timedependent} For any $\ve\in(0,1/2)$ there exist two constants  $C_*,\hh_*>0$ such that, for any $\psih_{q_0,p_0}$ as in Eq. \eqref{psi0} with
%\[
%\hh:= \max\left\{\frac{\hbar\,|p_0|}{m|\alpha|}\,,  {\hbar  \over \sigmazero^2\, p_0^2}\,, {\hbar\, \sigmazero^2 \over q_0^2 } \right\} < \hh_*\;,
%\]
%there holds
%\begin{align}
%& \bigg\|e^{-i \frac{t}{\hbar} \H_\alpha} \psih_{q_0,p_0} - \bigg( e^{{i \over \hbar}\, \SA_t} \psih\big(\sigma_t,\sigmazero^{-1},q_t,p_0\big)  \nonumber \\
%& \quad + \HE(-q_0 p_0)\, \overline{R}(p_0/\hbar)\;  e^{{i \over \hbar}\, \SA_t} \Big( \psih\big(\sigma_t,\sigmazero^{-1},q_t,p_0\big) + \psih\big(\sigma_t,\sigmazero^{-1},-q_t,-p_0\big) \Big)\bigg)\bigg\|_{L^2(\RE)}  \nonumber \\ 
%& \leq C_* \left(\hh^{1+ \ve} +  e^{- \frac{q_t^2}{4\hbar|\sigma_t|^2}} \right) \qquad \text{for all\, $t > - mq_0/p_0$}\,,\nonumber
%%\label{timedependent2}
%\end{align}
%and 
%\begin{align}
%& \bigg\|e^{-i \frac{t}{\hbar} \H_\alpha} \psih_{q_0,p_0}  - \bigg( e^{{i \over \hbar}\, \SA_t} \psih\big(\sigma_t,\sigmazero^{-1},q_t,p_0\big)   \nonumber\\ 
%& \quad + \HE(q_0 p_0)\, R(p_0/\hbar)\;  e^{{i \over \hbar}\, \SA_t} \Big( \psih\big(\sigma_t,\sigmazero^{-1},q_t,p_0\big) + \psih\big(\sigma_t,\sigmazero^{-1},-q_t,-p_0\big) \Big)\bigg)\bigg\|_{L^2(\RE)} \nonumber \\ 
%& \leq C_* \left(\hh^{1+ \ve} +  e^{- \frac{q_t^2}{4\hbar|\sigma_t|^2}} \right) \qquad \text{for all\, $t < - mq_0/p_0$}\,. \nonumber
%%\label{timedependent3}
%\end{align}
%\end{theorem}
We are now ready to prove Theorem \ref{t:1}. 
\begin{proof}[Proof of Theorem \ref{t:1}] 
First, we use  Lemma \ref{l:main} to approximate the state $e^{-i\frac{t}{\hbar}\H_{\alpha}}\,\psi^{\hbar}_{\sigmazero,\xi}$ with $\upsilonh_{\sigmazero,\xi,t}$ defined according to Eq. \eqref{phiht2} by 
\[
\begin{aligned}
\upsilonh_{\sigmazero,\xi,t} (x) & := \big( e^{-i \frac{t}{\hbar} \H_0} \psih_{\sigmazero,\xi} \big)(x)  + \HE(qp)\, R_{+}(p/\hbar)\;\big(e^{-i \frac{t}{\hbar} \H_0}\psih_{\sigmazero,\xi}\big)\big(\!-\sgn(q)|x|\big)\\ 
&\qquad + \HE(-qp)\, {R_{-}(p/\hbar)}\;\big(e^{-i \frac{t}{\hbar} \H_0}\psih_{\sigmazero,\xi}\big)\big(\!-\sgn(q)|x|\big)\,.
\end{aligned}
\]

Next, we compare $\upsilonh_{\sigmazero,\xi,t}$ with the expression for $e^{\frac{i}{\hbar}\Szero_{t}}\,e^{itL_{\beta}} \phi^{\hbar}_{\sigma_{t},x}(\xi)$ from Proposition \ref{p:blacksun}.  We start by noticing that, if $t>-mq/p$ one has
\[\begin{aligned}
&q_t = q + \frac{pt}{m} > q - \frac{p}{m} \cdot \frac{mq}{p} =0 \quad & \text{for $p>0$}\,, \\
&q_t = q + \frac{pt}{m} < q - \frac{p}{m} \cdot \frac{mq}{p} =0 \quad & \text{for $p<0$}\,;
\end{aligned}
\]
hence, $\sgn(q_t) = \sgn (p)$. This implies, in turn,
\[\begin{aligned}
  \HE(q p) \big(e^{-i \frac{t}{\hbar} \H_0}\psih_{\sigmazero,\xi}\big)\big(\!-\sgn(q)|x|\big) 
& = \HE(q p)  \big(e^{-i \frac{t}{\hbar} \H_0}\psih_{\sigmazero,\xi}\big)\big(\!-\sgn(p)|x|\big) \\
& = \HE(q p) \big(e^{-i \frac{t}{\hbar} \H_0}\psih_{\sigmazero,\xi}\big)\big(\!-\sgn(q_t)|x|\big) \, 
\end{aligned}
\]
and, by a  similar argument, 
\[
\HE(-q p) \big(e^{-i \frac{t}{\hbar} \H_0}\psih_{\sigmazero,\xi}\big)\big(-\sgn(q)|x|\big) 
= \HE(-q p)  \big(e^{-i \frac{t}{\hbar} \H_0}\psih_{\sigmazero,\xi}\big)\big(\sgn(q_t)|x|\big) \, .
\]

On the other hand, for $t<- mq/p$ one has $\sgn(q_t) = - \sgn (p)$, hence  
\[
\HE(q p) \big(e^{-i \frac{t}{\hbar} \H_0}\psih_{\sigmazero,\xi}\big)\big(-\sgn(q)|x|\big)
= \HE(q p) \big(e^{-i \frac{t}{\hbar} \H_0}\psih_{\sigmazero,\xi}\big)\big(\sgn(q_t)|x|\big)\, 
\]
and
\[
\HE(-q p) \big(e^{-i \frac{t}{\hbar} \H_0}\psih_{\sigmazero,\xi}\big)\big(-\sgn(q)|x|\big)
= \HE(-q p) \big(e^{-i \frac{t}{\hbar} \H_0}\psih_{\sigmazero,\xi}\big)\big(-\sgn(q_t)|x|\big)\, .
\]
By the identities above we obtain the following formula for $\upsilonh_{\sigmazero,\xi,t} $, 
\[
\begin{aligned}
\upsilonh_{\sigmazero,\xi,t} (x) = \,&\,  \big( e^{-i \frac{t}{\hbar} \H_0} \psih_{\sigmazero,\xi} \big)(x)   \\ 
 &  +  \HE \!\left(t+\frac{m q}{p}\right) \HE(qp)  R_{+}(p/\hbar)\,  \big(e^{-i \frac{t}{\hbar} \H_0}\psih_{\sigmazero,\xi}\big)\big(\!-\sgn(q_t)|x|\big) \\
 & +  \HE \!\left(t+\frac{m q}{p}\right)  \HE(-qp)\, {R_{-}(p/\hbar)}  \big(e^{-i \frac{t}{\hbar} \H_0}\psih_{\sigmazero,\xi}\big)\big(\sgn(q_t)|x|\big) \\ 
  &  +  \HE \!\left(- t - \frac{m q}{p}\right) \HE(qp)  R_{+}(p/\hbar)\,  \big(e^{-i \frac{t}{\hbar} \H_0}\psih_{\sigmazero,\xi}\big)\big(\!\sgn(q_t)|x|\big) \\
 & +  \HE \!\left( - t - \frac{m q}{p}\right)  \HE(-qp)\, {R_{-}(p/\hbar)}  \big(e^{-i \frac{t}{\hbar} \H_0}\psih_{\sigmazero,\xi}\big)\big(- \sgn(q_t)|x|\big)\,.
\end{aligned}
\]

Recall that 
\[
\big(e^{-i\frac{t}{\hbar}\H_{0}}\,\psi^{\hbar}_{\sigmazero,\xi}\big)(x) =   e^{\frac{i}{\hbar}\Szero_{t}}\, \psi^{\hbar}\big(\sigma_t,\sigmazero^{-1},q_t,p;x\big)
\]
with $\sigma_t = \sigmazero+\frac{it}{2m\sigmazero} $ and $q_t =  q + \frac{pt}{m}$. Hence, by Lemma \ref{Lemmapsi0}, we infer:
\[
\left\|\big(e^{-i \frac{t}{\hbar} \H_0}\psih_{\sigmazero,\xi}\big)\big(\!\sgn(q_t)|\cdot|\big)- \Big(
\big(e^{-i\frac{t}{\hbar}\H_{0}}\,\psi^{\hbar}_{\sigmazero,\xi}\big)(x) + \big(e^{-i\frac{t}{\hbar}\H_{0}}\,\psi^{\hbar}_{\sigmazero,\xi}\big)(-x)\Big)\right\|_{L^2(\RE)} \leq e^{- \frac{q_t^2}{4\hbar|\sigma_t|^2}} 
\]
and 
\[
\left\|\big(e^{-i \frac{t}{\hbar} \H_0}\psih_{\sigmazero,\xi}\big)\big(\!-\sgn(q_t)|\cdot|\big)\right\|_{L^2(\RE)} \leq  e^{- \frac{q_t^2}{4\hbar|\sigma_t|^2}}\,. 
\]

The latter bounds, together with Proposition \ref{p:blacksun}, imply 
\[
\left\|     \upsilonh_{\sigmazero,\xi,t} (x)  - e^{\frac{i}{\hbar}\Szero_{t}} \big(e^{itL_{\beta}}\phi^{\hbar}_{\sigma_{t},(\cdot)}\big)(\xi)\right\|_{L^2(\RE)}   \leq 2\, e^{- \frac{q_t^2}{4\hbar|\sigma_t|^2}}\,. 
\]
Besides the exponential  $e^{- \frac{q_t^2}{4\hbar|\sigma_t|^2}}$, the remaining  terms on the r.h.s. of Eq. \eqref{t1} are a consequence of the fact that we approximated $e^{-i\frac{t}{\hbar}\H_{\alpha}}\,\psi^{\hbar}_{\sigmazero,\xi}$ with $\upsilonh_{\sigmazero,\xi,t}$ and of Lemma \ref{l:main}. 
\end{proof}
\subsection{Proof of Corollary \ref{c:1}.}\label{proof-coroll}
\begin{proof}%{Proof of Corollary \ref{c:1}.}
Choose $\lambda_1,\lambda_2$ in Theorem \ref{t:1} so that  $\lambda_1 = \frac32 \lambda_2 =\lambda$, and note that for $\lambda\in(0,3/2)$ the time-independent part on the r.h.s. of inequality \eqref{t1} is bounded by 
 \[
 C\left[ \underline h^{\frac32-\lambda}+  e^{-{1 \over 2 \underline h^{2\lambda}}} + e^{-{1 \over 2 \underline h^{4\lambda/3}}} + e^{- {1\over 8 \underline h }} + e^{- {1 \over 8 \underline h^2}}\right]  \leq C_*\, \underline h^{\frac32 - \lambda}\] 
 for some $C_*>0$, for all $\underline h < h_*$ with $h_*$ small enough.  

On the other hand, to take into account the time-dependent term on the r.h.s. of inequality \eqref{t1} it is enough to show that if $|t-t_{coll}(\xi)| \ge c_0\, |t_{coll}(\xi)|\,\sqrt{(3/2-\lambda)\,\underline h\,|\ln \underline h|}$ (for some $c_0 > 0$), then $ q_t^2/(4\hbar|\sigma_t|^2) \ge (3/2-\lambda)\,|\ln \underline h|$. Setting $y = 1 - t/t_{coll}(\xi)$, $a = {4\hbar\sigmazero^2 \over q^2}\,(3/2-\lambda)\,|\ln \underline h|$ and $b = {\hbar \over p^2\sigma_0^2}\,(3/2-\lambda)\,|\ln \underline h|$, the latter relation can be rephrased as $y^2/(a + b\, (1-y)^2) \ge 1$; a simple calculation shows that this inequality is fulfilled if
\begin{equation}\label{y}
a,b \in (0,1) \qquad \mbox{and} \qquad |y| \ge {b + \sqrt{a + b - a b} \over 1 - b}\,.
\end{equation}
Taking into account that $a + b - ab \le a + b$, $a \leq 4 (3/2-\lambda)\, \underline h\,|\ln \underline h|$ and $b \leq (3/2-\lambda)\, \underline h\,|\ln \underline h|$, it is easy to convince oneself that when $\underline h$ is small enough Eq. \eqref{y} holds true as soon as $|y| > c_0\,\sqrt{(3/2-\lambda)\,\underline h\,|\ln \underline h|}$ for some $c_0 > \sqrt{5}$, which proves Eq. \eqref{ttcoll}.
%% \BLUE{OLD:
%%On the other hand, to take into account the time-dependent term on the r.h.s. of inequality \eqref{t1} it is enough to show that if $|t-t_{coll}(\xi)| > c_0\, |t_{coll}(\xi)|\,\sqrt{(1+\lambda)\underline h\,|\ln \underline h|}$ (for some $c_0 > 0$), then $ q_t^2/(4\hbar|\sigma_t|^2) > (1+\lambda)|\ln \underline h|$. To this aim set $y = (1- t/t_{coll})$, $a = 4\hbar\sigmazero^2/q^2$, and $b = \hbar/(p^2\sigma_0^2)$. Then $y^2/(a + b(1-y)^2) = q_t^2/(4\hbar|\sigma_t|^2)$. A simple calculation shows that 
%% \begin{equation}\label{y}
%% |y| > \frac{b (1+\lambda)|\ln \underline h| + \sqrt{b^2 (1+\lambda)^2|\ln \underline h|^2 + (1 - b(1+\lambda)|\ln \underline h|) (a+b) (1+\lambda)|\ln \underline h|}}{|1-b (1+\lambda)|\ln \underline h||}
%% \end{equation}
%% implies $y^2/(a + b(1-y)^2) > (1+\lambda)|\ln \underline h|$. Taking into account the fact that $\max\{a,b\} \leq 4 \underline h$, it is easy to convince oneself that for $\underline h$ small enough $|y| > 8  \sqrt{ (1+\lambda)\underline h|\ln \underline h|}$, which is equivalent to \eqref{ttcoll} with $c_0 = 8$.
%% }
\end{proof}

\begin{remark}\label{r:dirichlet} Our main motivation for considering the classical dynamics generated by $L_{\ac}$ with $\ac =  {2 \alpha \over \hbar} \not= \infty$ is that it provides a better approximation for small $\hbar$ of the quantum dynamics induced by $\H_{\alpha}$ with $\alpha \not=\infty$, rather than the classical analogue corresponding to Dirichlet boundary conditions, i.e., to $L_{\infty} \equiv L_{\ac}$ with $\ac = \infty$.

More precisely, from Proposition \ref{exp-cl} (see also Remark \ref{RemDir}), Identity \eqref{free2} and the unitarity of $e^{-i {t \over \hbar} H_0}$ we infer the following, for $\xi \equiv (q,p)$:
\begin{align*}
& \left\|\big(e^{i t L_{\ac}} \phi_{\sigma_t,(\cdot)}^{\hbar}\big)(\xi) - \big(e^{i t L_{\infty}} \phi_{\sigma_t,(\cdot)}^{\hbar}\big)(\xi)\right\|_{L^2(\RE)}^2 \\
& = \HE(-t q p)\,\HE\!\left({|p t| \over m} \!-\! |q|\right)\!\left|{1 \over 1 - \sgn(t) {2 i |p| \over m \ac}} - 1\right|^2 \left\|\big(e^{i t L_{0}} \phi_{\sigma_t,(\cdot)}^{\hbar}\big)(\xi) + \big(e^{i t L_{0}} \phi_{\sigma_t,(\cdot)}^{\hbar}\big)(-\xi)\right\|_{L^2(\RE)}^2 \\
& = \HE(-t q p)\,\HE\!\left({|p t| \over m} \!-\! |q|\right)\!\left|{1 \over 1 - \sgn(t) {2 i |p| \over m \ac}} - 1\right|^2 \left\|e^{-i {t \over \hbar} H_{0}} \psih_{\sigma_0,\xi} + e^{-i {t \over \hbar} H_{0}} \psih_{\sigma_0,-\xi}\right\|_{L^2(\RE)}^2 \\
& = \HE(-t q p)\,\HE\!\left({|p t| \over m} \!-\! |q|\right) {\big({2 p \over m \ac}\big)^2 \over 1+ \big({2 p \over m \ac}\big)^2}\; \big\|\psih_{\sigma_0,\xi} + \psih_{\sigma_0,-\xi}\big\|_{L^2(\RE)}^2 \\
& = 2\;\HE(-t q p)\,\HE\!\left({|p t| \over m} \!-\! |q|\right) {\big({2 p \over m \ac}\big)^2 \over 1+ \big({2 p \over m \ac}\big)^2} \left(1 + e^{-{q^2 \over 2 \hbar |\sigma_0|^2}}\, e^{-{2 |\sigma_0|^2 p^2 \over \hbar}}\right)\,.
\end{align*}
Fixing $\ac =  {2 \alpha \over \hbar}$, for all $t \in \RE$ and for all ${\hbar |p| \over m \alpha} < 1$ the latter identity implies ({\footnote{Note that ${\xi^2 \over 1 + \xi^2} \geq {\xi^2 \over 2}$ for $|\xi| \leq 1$.}})
$$ \left\|\big(e^{i t L_{\ac}} \phi_{\sigma_t,(\cdot)}^{\hbar}\big)(\xi) - \big(e^{i t L_{\infty}} \phi_{\sigma_t,(\cdot)}^{\hbar}\big)(\xi)\right\|_{L^2(\RE)}^2 \geq \HE(-t q p)\,\HE\!\left({|p t| \over m} \!-\! |q|\right)\! \left({\hbar p \over m \alpha}\right)^{\!\!2}\,. $$ 
By the triangular inequality, from the above relation and from Theorem \ref{t:1} we infer that there exist two constants $C_*,\hbar_*>0$ such that for any $\psih_{\sigmazero,\xi}$ as in Eq. \eqref{psi0} and for all $\hbar \leq \hbar_*$ there holds
\begin{equation}
\big\|e^{-i \frac{t}{\hbar} \H_\alpha} \psih_{\sigmazero,\xi} - e^{{i \over \hbar}\, \Szero_t}\,e^{i t L_{\infty}} \psi_{\sigma_t,\xi}^{\hbar}\big\|_{L^2(\RE)} \geq C_*\; {\hbar |p| \over m |\alpha|} \label{timedependentDir}
\end{equation}
for $q p < 0$ and $t > - {m q \over p}$, or $q p > 0$ and $t < - {m q \over p}$.

Eq. \eqref{timedependentDir} makes patent that for small $\hbar$ the error in the approximation via classical Dirichlet boundary conditions is at least of order $\hbar$, while the analogous error corresponding to the classical dynamics induced by $L_{\ac}$ with $\ac =   {2 \alpha \over \hbar}$ is at most of order $\hbar^{\frac32- \lambda}$ for $\lambda \in (0,3/2)$.
\end{remark}

%%%%BEGINBLUE\BLUE{
\subsection{Approximation of wave and scattering operators}

In the same spirit of the proof of Theorem \ref{t:1}, we look for a more convenient formula for the action of the wave operators on a coherent state.
\begin{proposition} For any $\psih$ of the form \eqref{CohSt} with $qp\not=0$,  there holds 
\begin{align}
\label{Omega_alpha+}
\Omega_\alpha^\pm\, \psih (x)=\, &  \,\psih (x) + \HE(q p)\, F^{\hbar}_{\pm,0}\big(\!\mp\sgn(q) |x|\big) + \HE(-q p)\, F^{\hbar}_{\pm,0}\big(\pm\sgn(q) |x|\big) +  E^{\hbar}_{3,\pm}(x)\,, 
%\\
%\label{Omega_alpha-}
%\Omega_\alpha^-\, \psih (x)=\, &  \,\psih (x) + \HE(q p)\, F^{\hbar}_{-,0}\big(\!\sgn(q) |x|\big) + \HE(-q p)\, F^{\hbar}_{-,0}\big(-\sgn(q) |x|\big) +  E^{\hbar}_{3,-}(x)
\end{align}
with $F^{\hbar}_{\pm,0} \equiv F^{\hbar}_{\pm,t=0}$ defined according to Eq. \eqref{F+t} and
\[
E^{\hbar}_{3,\pm}(x) := \pm\,  \frac{\sgn(q p)}{\sqrt{2\pi}} \int_\RE\! dk\;  {R_{\pm}(k)} \Big( e^{-i\sgn(pq)k|x|}-  e^{i\sgn(pq)k|x|} \Big)  
\HE( k)\, \psiht(-\sgn(p)k)\,.
\]
%and
%\[
%E^{\hbar}_{3,-}(x) :=  - \frac{\sgn(q p)}{\sqrt{2\pi}} \int_\RE\! dk\;  R_{-}(k) \Big( e^{-i\sgn(pq)k|x|}-  e^{i\sgn(pq)k|x|} \Big)  
%\HE( k)\, \psiht(-\sgn(p)k)\,.
%\]

\end{proposition}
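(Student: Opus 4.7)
The natural starting point is the stationary representation $\Omega^{\pm}_{\alpha} = \FF_{\pm}^{*}\FF$ recalled in Remark~\ref{remWpm}, which gives
\[ \Omega_\alpha^{\pm}\, \psih(x) = \int_\RE dk\; \vfi_k^{\pm}(x)\, \psiht(k). \]
Inserting the explicit form of $\vfi_k^{\pm}$ from \eqref{vfiRdef}, \eqref{vfiRdef-} and applying Fourier inversion to the plane-wave component separates the free contribution $\psih(x)$ from the scattering remainder
\[ J_\pm(x) := \frac{1}{\sqrt{2\pi}}\int_\RE dk\; R_\pm(k)\, e^{\mp i|k||x|}\, \psiht(k), \]
so that the task reduces to rewriting $J_\pm$ in the form claimed on the right-hand side.

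The plan is to manipulate $J_\pm$ by the same split-and-reflect strategy used in the proof of Proposition~\ref{propSemi}. Since both $R_\pm(k)$ and $e^{\mp i|k||x|}$ are even in $k$, decomposing $\int_\RE = \int_{-\infty}^{0} + \int_{0}^{\infty}$ and sending $k\mapsto -k$ in the first piece yields
\[ J_\pm(x) = \frac{1}{\sqrt{2\pi}} \int_0^{\infty}\!\! dk\; R_\pm(k)\, e^{\mp ik|x|}\, \big(\psiht(k) + \psiht(-k)\big). \]
Performing the analogous split-and-reflect on $F^\hbar_{\pm,0}(\mp \sgn(q)|x|)$ and on $F^\hbar_{\pm,0}(\pm \sgn(q)|x|)$ and then invoking identity~\eqref{right} (equivalently, the elementary observation that $\sgn(q) = \sgn(p)$ for $qp>0$ while $\sgn(q) = -\sgn(p)$ for $qp<0$) collapses
\[ \HE(qp)\,F^\hbar_{\pm,0}\big(\mp\sgn(q)|x|\big) + \HE(-qp)\,F^\hbar_{\pm,0}\big(\pm\sgn(q)|x|\big) = F^\hbar_{\pm,0}\big(\mp\sgn(p)|x|\big) \]
in every sign combination of $q$ and $p$. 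In its own split-and-reflect form the right-hand side cancels exactly the $\psiht(\mp\sgn(p)k)$ contribution of $J_\pm$, so the residual becomes
\[ \frac{1}{\sqrt{2\pi}}\int_0^{\infty}\!\!dk\; R_\pm(k)\, \big(e^{\mp ik|x|} - e^{\pm ik|x|}\big)\, \psiht\big(-\sgn(p)k\big). \]
Reinserting $\HE(k)$ to restore the integration to $\RE$ and absorbing the overall sign as $\pm\sgn(qp)$, which is consistent in both $\sgn(qp)=\pm 1$ cases, identifies this residual with $E^\hbar_{3,\pm}(x)$ as defined in the statement.

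The conceptual content of the argument is thus entirely contained in the stationary representation and the elementary parity argument; the main obstacle is the sign bookkeeping across the four combinations $(\sgn(q),\sgn(p)) \in \{\pm 1\}^2$, needed to verify that the prefactor $\pm\sgn(qp)$ and the argument $-\sgn(p)k$ of $E^\hbar_{3,\pm}$ are reproduced in each case. This is of exactly the same nature as the sign accounting already carried out in the proofs of Proposition~\ref{propSemi} and Proposition~\ref{prop-wave-cl}, and is routine once laid out case by case.
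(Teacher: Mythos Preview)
Your argument is correct and follows the same route as the paper's own proof: start from the stationary representation $\Omega_\alpha^\pm=\FF_\pm^*\FF$, peel off the free part $\psih$, exploit the evenness of $R_\pm$ to fold the integral onto $[0,\infty)$, and then use the sign identity \eqref{right} to organize the pieces. The only organizational difference is that you first collapse $\HE(qp)\,F^{\hbar}_{\pm,0}(\mp\sgn(q)|x|)+\HE(-qp)\,F^{\hbar}_{\pm,0}(\pm\sgn(q)|x|)$ to $F^{\hbar}_{\pm,0}(\mp\sgn(p)|x|)$ and then subtract, whereas the paper keeps the $\sgn(q)$ form throughout and applies \eqref{right} at the last step; your ordering makes the residual $(e^{\mp ik|x|}-e^{\pm ik|x|})\,\psiht(-\sgn(p)k)$ drop out a bit more transparently, but the content is identical.
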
 
\begin{proof} As an example, we show how to derive Eq. \eqref{Omega_alpha+} for $\Omega_{\alpha}^{+}$; the proof for $\Omega_{\alpha}^{-}$ is similar and we omit it for brevity.
%% We start with the analysis of $\Omega_\alpha^+$. 

First recall that, by Remark \ref{remWpm}, 
\[
\Omega_\alpha^+\, \psih (x) 
 = \psih (x)  + \frac1{\sqrt{2\pi}} \int_{\RE}\! dk\;  e^{-i|k||x|}\, R_{+}(k)  \psiht (k) \,;
\]
from here and from Identity \eqref{hey} (to be used with $t=0$), we obtain 
\[\begin{aligned}
\Omega_\alpha^+\, \psih (x) & = \psih (x) 
+ \frac1{\sqrt{2\pi}} \int_{\RE}\! dk\;  e^{-i|k||x|}\, R_{+}(k)\; \psiht\big(\!-\!\sgn(q) k \big)  \\ 
& = \psih (x) + \frac1{\sqrt{2\pi}} \int_0^{\infty}\! dk\;  e^{-ik|x|}\, {R_{+}(k)}\; \psiht\big(\!-\! \sgn(q) k\big) \nonumber \\ 
& \qquad + \frac1{\sqrt{2\pi}} \int_{-\infty}^{0}\! dk\;  e^{i k |x|}\, R_{+}(k)\; \psiht\big(\!-\!\sgn(q) k\big) \nonumber \\
& = \psih(x) + \frac1{\sqrt{2\pi}} \int_\RE\! dk\; e^{-ik|x|}\, {R_{+}(k)} \HE (k) \Big( \psiht\big(\!-\!\sgn(q) k\big) +\psiht\big(\sgn(q) k\big) \Big)
\\
& =  \psih(x) + \frac1{\sqrt{2\pi}} \int_\RE\! dk\;  {R_{+}(k)} \Big(  e^{i\sgn(q)k|x|} \HE (-\sgn(q)k)+e^{-i\sgn(q)k|x|} \HE (\sgn(q)k)  \Big) \psiht(k)\,. 
\end{aligned}
\]
Next, from the identity $ \HE\big(\sgn(q) k\big)  = \HE(q p) - \sgn(q p)\,  \HE\big(\!-\sgn(p) k\big)$  (see Eq. \eqref{right}), we infer 
\[\begin{aligned}
\Omega_\alpha^+\, \psih (x)  = &   \psih(x) + \frac1{\sqrt{2\pi}} \int_\RE\! dk\;  {R_{+}(k)} \Big( e^{-i\sgn(q)k|x|} \HE(q p) +  e^{i\sgn(q)k|x|} \HE(- q p) \Big) \psiht(k) \\ 
&\qquad -   \frac1{\sqrt{2\pi}} \int_\RE\! dk\;  {R_{+}(k)} \Big( e^{-i\sgn(q)k|x|}-  e^{i\sgn(q)k|x|}  \Big) \sgn(q p) \,
\HE\big(\!-\sgn(p) k\big)\, \psiht(k)\,,
\end{aligned}
\]
which is equivalent to Eq. \eqref{Omega_alpha+}. 
\end{proof}

\begin{lemma}\label{LemmaE3} There exists a constant $C>0$ such that, for any $\psih$ of the form \eqref{CohSt} with $qp\not=0$,  there holds
\begin{equation}\label{boundE3+}
\big\|E^{\hbar}_{3,\pm}\big\|_{L^2(\RE)} \leq C\, e^{- {p^2 \over \hbar |\sigmap|^{2}}}\,.
\end{equation}
%% and 
%% \begin{equation}\label{boundE3-}
%% \big\|E^{\hbar}_{3,-}\big\|_{L^2(\RE)} \leq C\, e^{- {p^2 \over 2 \hbar |\sigmap|^{2}}}\,.
%% \end{equation}
\end{lemma}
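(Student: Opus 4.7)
The proof will follow the same blueprint used for $E^{\hbar}_{2,t}$ in Lemma \ref{LemmaE2bis}, exploiting: (i) the bound $\|f(|\cdot|)\|_{L^2(\RE)}^2 \le 2\,\|f\|_{L^2(\RE)}^2$; (ii) Plancherel's theorem; (iii) the uniform bound $|R_{\pm}(k)| \le 1$; (iv) the explicit Gaussian form of $\psiht$ given in Eq. \eqref{psih}, which provides the decisive factor $e^{-p^2/(\hbar|\sigmap|^2)}$.

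More concretely, I would split $E^{\hbar}_{3,\pm}$ into the two addends corresponding to $e^{\mp i\sgn(pq) k |x|}$ and estimate each one separately. For a fixed sign, writing the integrand as $g(k) := R_{\pm}(k)\,\HE(k)\,\psiht(-\sgn(p) k)$ and using the change of variables $x \mapsto |x|$ together with Plancherel, one obtains
\[
\left\|\int_\RE\! dk\; g(k)\, e^{\mp i \sgn(pq) k |x|}\right\|_{L^2(\RE,dx)}^{\!2}
\;\le\; 2\,\left\|\int_\RE\! dk\; g(k)\, e^{\mp i \sgn(pq) k x}\right\|_{L^2(\RE,dx)}^{\!2}
\;=\; 4\pi\,\|g\|_{L^2(\RE,dk)}^{2}.
\]
Hence, using $|R_\pm(k)| \le 1$,
\[
\big\|E^{\hbar}_{3,\pm}\big\|_{L^2(\RE)}^{2}
\;\le\; C \int_{0}^{\infty}\! dk\; \big|\psiht(-\sgn(p) k)\big|^{2}.
\]

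From Eq. \eqref{psih} one has $|\psiht(-\sgn(p) k)|^{2} = \frac{1}{|\sigmap|}\sqrt{\tfrac{2\hbar}{\pi}}\;e^{-\frac{2\hbar(k + |p|/\hbar)^{2}}{|\sigmap|^{2}}}$; since $k \ge 0$, the elementary inequality $(k + |p|/\hbar)^{2} \ge k^{2} + p^{2}/\hbar^{2}$ yields
\[
\big\|E^{\hbar}_{3,\pm}\big\|_{L^2(\RE)}^{2}
\;\le\; C\,\frac{1}{|\sigmap|}\sqrt{\frac{2\hbar}{\pi}}\;e^{-\frac{2 p^{2}}{\hbar |\sigmap|^{2}}}\int_{0}^{\infty}\! dk\; e^{-\frac{2\hbar k^{2}}{|\sigmap|^{2}}}
\;\le\; C'\,e^{-\frac{2 p^{2}}{\hbar |\sigmap|^{2}}},
\]
after evaluating the remaining Gaussian integral (which contributes a factor proportional to $|\sigmap|/\sqrt{\hbar}$ that exactly cancels the prefactor). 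Taking the square root gives the claimed estimate \eqref{boundE3+}.

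No serious obstacle is anticipated: the argument is entirely parallel to Lemma \ref{LemmaE2bis}, and in fact slightly simpler because the factor $R_{-}(k) - R_{+}(k)$ there (which required the two-regime split between $|k| \lessgtr \frac{|\sigmap|}{\sqrt{\hbar}}\big(\tfrac{1}{\hbar}(m|\alpha|/|\sigmap|)^{2/3}\big)^{\lambda}$) is here replaced by $R_\pm(k)$, uniformly bounded by $1$. The only point requiring attention is keeping track of the shifted centre of the Gaussian $\psiht(-\sgn(p) k)$ relative to the half-line $k > 0$ cut off by the Heaviside factor $\HE(k)$ — it is precisely this mismatch that produces the Gaussian decay $e^{-p^{2}/(\hbar |\sigmap|^{2})}$.
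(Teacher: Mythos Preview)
Your proof is correct and follows essentially the same approach as the paper's: both use the inequality $\|\psi(\pm|\cdot|)\|_{L^2}^2 \le 2\|\psi\|_{L^2}^2$, unitarity of the Fourier transform, the uniform bound $|R_\pm(k)|\le 1$, and the Gaussian form of $\psiht$ together with $(k+|p|/\hbar)^2 \ge k^2 + p^2/\hbar^2$ on $\{k\ge 0\}$. The only cosmetic difference is that you split $E^{\hbar}_{3,\pm}$ into its two exponential addends and bound each separately, whereas the paper applies the combined inequality $\|\psi(|\cdot|)\|^2 + \|\psi(-|\cdot|)\|^2 \le 4\|\psi\|^2$ in one step.
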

\begin{proof}% By an estimate analogous to the one in Eq. \eqref{L2mod} and by
We prove the bound \eqref{boundE3+} for $E^{\hbar}_{3,+}$; the proof of the analogous bound for $E^{\hbar}_{3,-}$ is identical and we omit it for brevity. By the elementary inequality $\| \psi ( |\cdot|)\|_{L^2(\RE)}^2+\| \psi (-|\cdot|)\|_{L^2(\RE)}^2  \leq  4\| \psi \|_{L^2(\RE)}^2$, by unitarity of the Fourier transform, and by the bound $| R_{+}(k)|\leq 1$, we infer that 
\begin{align*}
\big\|E^{\hbar}_{3,+}\big\|_{L^2(\RE)}^2 
& \leq 4  \left\|\frac{1}{\sqrt{2\pi}} \int_\RE\! dk\;  e^{-ik x}\, \HE( k)\,R_{+}(k)\, \psiht(-\sgn(p)k) \right\|_{L^2(\RE)}^2 \\
& = 4  \int_{0}^{\infty}\!\!\!dk\; \big| R_{+}(k)\big|^{2}\, \big|\psiht\big(\!-\sgn(p) k\big)\big|^2 \leq 4 \int_{0}^{\infty}\!\!\!dk\; \big|\psiht\big(\!-\sgn(p) k\big)\big|^2 \\ 
& = {4 \over |\sigmap|} \left({2 \hbar \over \pi}\right)^{\!\!1/2} \int_{0}^{\infty}\!\!\!dk\; e^{-2 {\hbar (k + |p|/\hbar)^2 \over |\sigmap|^{2}}} 
\leq {4 \over |\sigmap|} \left({2 \hbar \over \pi}\right)^{\!\!1/2} e^{- {2 p^2 \over \hbar |\sigmap|^{2}}} \int_{0}^{\infty}\!\!\!dk\; e^{- {2 \hbar k^2 \over |\sigmap|^{2}}}
= 2 e^{- {2 p^2 \over \hbar |\sigmap|^{2}}}\,,
\end{align*}
which yields the thesis.
\end{proof}

\subsection{Proof of Theorem \ref{t:2}.}\label{proof2}
\begin{proof} We first prove claim \eqref{t2_1}. 

Preliminarily we apply the classical wave operators $W_\beta^\pm$, with $\beta =  2\alpha/\hbar$ to the state $\phi^{\hbar}_{\sigmazero,(\cdot)}(\xi)$. 
Recalling the definition of $R_{\pm}(k)$, by Eq. \eqref{WOcl+} we have:
\be\label{W-phi}
\big(W_\beta^\pm \phi^{\hbar}_{\sigmazero,(\cdot)}\big)(\xi) = \phi^{\hbar}_{\sigmazero,(\cdot)}(\xi) + \HE( \mp q\,p)\; R_{\pm}(p/\hbar)\big(\phi^{\hbar}_{\sigmazero,(\cdot)}(\xi) + \phi^{\hbar}_{\sigmazero,(\cdot)}(-\xi)\big)\,;
\ee
%\begin{equation} \label{W-phi}
%\big(W_\beta^- \phi^{\hbar}_{\sigmazero,(\cdot)}\big)(\xi) = 
 %\phi^{\hbar}_{\sigmazero,(\cdot)}(\xi) + \HE( q\,p)\; {R_{-}(p/\hbar)}\big(\phi^{\hbar}_{\sigmazero,(\cdot)}(\xi) + \phi^{\hbar}_{\sigmazero,(\cdot)}(-\xi)\big)\,.
%\end{equation}

On the other hand, by Eq. \eqref{Omega_alpha+}, and Lemmata \ref{LemmaEpbis}, \ref{Lemmapsi0} and \ref{LemmaE3} we infer
\begin{align*}
&\Big\|\Omega_\alpha^\pm\psih_{\sigmazero ,\xi }- \Big[\psih_{\sigmazero ,\xi }\! + \HE(\mp q p)\,R_{\pm}(p/\hbar)\, \big(\psih_{\sigmazero,\xi} \!+ \psih_{\sigmazero,-\xi}\big)
\Big] \Big\|_{L^2(\RE)} \\ 
&  \leq  C \left(e^{- \frac{q^2}{4\hbar\sigmazero^2}}+e^{-\frac{\sigmazero^2 p^2}{\hbar}} +\left(\! {\hbar \over (m|\alpha|\sigma_0)^{2/3} }\right)^{\!\!\frac32-\ve}\! +  \,e^{-{1 \over 2} \Big(\! {(m|\alpha|\sigma_0)^{2/3} \over \hbar }\!\Big)^{\!\!2\ve}} \right)\,,
\end{align*}
%\begin{align*}
%&\Big\|\Omega_\alpha^-\psih_{\sigmazero ,\xi }- \Big[\psih_{\sigmazero ,\xi }\! + \HE(q p)\,{R_{-}(p/\hbar)}\, \big(\psih_{\sigmazero,\xi} \!+ \psih_{\sigmazero,-\xi}\big)
%\Big] \Big\|_{L^2(\RE)} \\ 
%&  \leq  C\left(e^{- \frac{q^2}{4\hbar\sigmazero^2}}+e^{-\frac{\sigmazero^2 p^2}{\hbar}}+\left(\! {\hbar \over (m|\alpha|\sigma_0)^{2/3} }\right)^{\!\!\frac32-\ve}\!  + \,e^{-{1 \over 2} \Big(\! {(m|\alpha|\sigma_0)^{2/3} \over \hbar }\!\Big)^{\!\!2\ve}}\right)\,,
%\end{align*}
and the proof is concluded recalling that $\psih_{\sigmazero ,\xi }(x) = \phih_{\sigmazero , x}(\xi)$. 

%\begin{align*}
%& \lim_{t\to-\infty }\Big\| \upsilonh_t  - \Big[\psih_{\sigmazero,\xi }\! + \HE(q p)\,\overline{R}(p/\hbar)\, \big(\psih_{\sigmazero,\xi } \!+ \psih_{\sigmazero,-\xi }\big)
%\Big] \Big\|_{L^2(\RE)} \leq C\,\Big( e^{-\frac{\sigmazero^2 p^2}{\hbar}} \Big)\,.
%\end{align*}
%The latter limits, together with the identity $\psih_{\sigmazero,\xi }(x)  = \phih_{\sigmazero,x }(\xi)$ and  definition of the wave operators and with the bound  \eqref{heads}, imply  the bounds \eqref{WObound1} and \eqref{WObound2}. 

Next we prove claim \eqref{t2_2}. We apply the classical scattering operator, with $\beta = 2\alpha/\hbar$ to the state $\phi^{\hbar}_{\sigmazero,(\cdot)}(\xi)$. From Eq. \eqref{Sbeta}, recalling the definitions of $R_{\pm}(k)$,  we have:
\begin{align}
\big(S^{cl}_{\beta} \phi^{\hbar}_{\sigmazero,(\cdot)}\big)(\xi) =\, &\,  \phi^{\hbar}_{\sigmazero,(\cdot)}(\xi) +  {R_{-}(p/\hbar)} \big(\phi^{\hbar}_{\sigmazero,(\cdot)}(\xi) + \phi^{\hbar}_{\sigmazero,(\cdot)}(-\xi)\big)  \nonumber \\ 
=\, &\,  \psi^{\hbar}_{\sigmazero,\xi} +  {R_{-}(p/\hbar)}\big(\psi^{\hbar}_{\sigmazero,\xi} + \psi^{\hbar}_{\sigmazero,-\xi}\big)\,. \label{S-phi}
\end{align}

On the other hand, recalling the basic Identity \eqref{orto}, by simple addition and subtraction arguments and by the triangular inequality we get
\begin{equation*}
\begin{aligned}%\label{SEq1}
 \big\|(\Omega_\alpha^{+})^{*} \Omega_\alpha^-\, \psih_{\sigmazero,\xi} - \big(S^{cl}_{\beta} \phi^{\hbar}_{\sigmazero,(\cdot)}\big)(\xi) \big\|_{L^2(\RE)}   
\leq\, &  \,\left\|(\Omega_\alpha^{+})^{*} \Big(\Omega_\alpha^-\, \psih_{\sigmazero,\xi} - \big(W_\beta^- \phih_{\sigmazero,(\cdot)}\big)(\xi) \Big)\right\|_{L^2(\RE)} \\ 
& + 
\big\|(\Omega_\alpha^{+})^{*} P_{ac} \big(W_\beta^-\, \phih_{\sigmazero,(\cdot)}\big)(\xi) - P_{ac} \big(S^{cl}_{\beta} \phi^{\hbar}_{\sigmazero,(\cdot)}\big)(\xi) \big\|_{L^2(\RE)}
 \\
&+ \big\|(\Omega_\alpha^{+})^{*} P_{\alpha} \big(W_\beta^- \,\phih_{\sigmazero,(\cdot)}\big)(\xi) \big\|_{L^2(\RE)}
+ \big\|P_{\alpha} \big(S^{cl}_{\beta} \phi^{\hbar}_{\sigmazero,(\cdot)}\big)(\xi) \big\|_{L^2(\RE)} \,. 
\end{aligned}
\end{equation*}

Firstly, from Identity \eqref{S-phi} and Lemma \ref{LemmaProj}, using once more the basic inequality $|R_{+}(p/\hbar)| \leq 1$ we infer
\begin{align}
\big\|P_{\alpha}\big(S^{cl}_{\beta} \phi^{\hbar}_{\sigmazero,(\cdot)}\big)(\xi) \big\|_{L^2(\RE)}
& \leq \big\|P_{\alpha} \psih_{\sigmazero,\xi}\big\|_{L^2(\RE)} + \big|R_{+}(p/\hbar)\big|\, \Big(\big\|P_{\alpha} \psih_{\sigmazero,\xi}\big\|_{L^2(\RE)} + \big\|P_{\alpha} \psih_{\sigmazero,-\xi}\big\|_{L^2(\RE)} \Big) \nonumber \\
& \leq C \Big( e^{- {m |\alpha|\,|q| \over 8 \hbar^2}} + e^{- \frac{q^2}{8 \hbar\sigmazero^2}}\Big)\,. \label{cortez1}
%\leq C \left( e^{- {1 \over 8 \hh^2}} + e^{- \frac{1}{8 h}}\right) ,
\end{align}
% and ${m |\alpha|\,|q_0| \over \hbar^2} = {m |\alpha| \over \hbar |p_0|} \cdot {|q_0| \over \sqrt{\hbar}\,\sigmazero} \cdot {\sigmazero\,|p_0| \over \sqrt{\hbar}} \geq {1 \over \hh^2}$.

Secondly, let us notice that $\|\Omega_\alpha^+ \psi\|_{L^2(\RE)} \leq \|\psi\|_{L^2(\RE)}$,  since $\Omega_\alpha^+$  is the strong limit of operators with unit norm; thus, the same holds true for the adjoint $(\Omega_\alpha^{+})^{*}$. Hence, by arguments similar to those described above, in view of Eq. \eqref{W-phi} and  of Lemma \ref{LemmaProj}, we have
\begin{align}
\big\|(\Omega_\alpha^{+})^{*} P_{\alpha} \big(W_\beta^- \phih_{\sigmazero,(\cdot)}\big)(\xi) \big\|_{L^2(\RE)} 
& \leq \big\|P_{\alpha} \big(W_\beta^- \phih_{\sigmazero,(\cdot)}\big)(\xi)\big\|_{L^2(\RE)} \nonumber \\
& \leq \big\|P_{\alpha} \psih_{\sigmazero,\xi } \big\|_{L^2(\RE)}
+ \HE(q p) \,|R_{+}(p/\hbar)|\, \Big( \big\|P_{\alpha}\psih_{\sigmazero ,\xi }\big\|_{L^2(\RE)}
+ \big\|P_{\alpha}\psih_{\sigmazero,-\xi }\big\|_{L^2(\RE)}\Big)  \nonumber \\
& \leq C \Big( e^{- {m |\alpha|\,|q| \over 8 \hbar^2}} + e^{- \frac{q^2}{8 \hbar\sigmazero^2}}\Big)\,. \label{cortez2}
%\leq C \left( e^{- {1 \over 8 \hh^2}} + e^{- \frac{1}{8 h}}\right).
\end{align}
Again from the bound on $(\Omega_\alpha^{+})^{*}$, we infer 
\[
\big\|(\Omega_\alpha^{+})^{*} \Big(\Omega_\alpha^-\,\psih_{\sigmazero,\xi} - \big(W_\beta^-\, \phih_{\sigmazero,(\cdot)}\big)(\xi) \Big)\big\|_{L^2(\RE)} 
\leq  \big\|\Omega_\alpha^-\psih_{\sigmazero,\xi} - \big(W_\beta^- \phih_{\sigmazero,(\cdot)}\big)(\xi) \big\|_{L^2(\RE)} \,, 
\]
which is bounded by Eq. \eqref{t2_1} (proven previously).

Finally, on account of the unitarity of $\Omega_\alpha^{+}$ on $\ran(P_{ac})$ (see Remark \eqref{remWpm}), we obtain
\[
\begin{aligned}
&  \big\|(\Omega_\alpha^{+})^{*} P_{ac}\big(W_\beta^- \phih_{\sigmazero,(\cdot)}\big)(\xi) - P_{ac}\big(S^{cl}_{\beta} \phi^{\hbar}_{\sigmazero,(\cdot)}\big)(\xi) \big\|_{L^2(\RE)}  \\ 
&  = \big\| P_{ac} \big(W_\beta^- \phih_{\sigmazero,(\cdot)}\big)(\xi) - \Omega_\alpha^{+} P_{ac} \big(S^{cl}_{\beta} \phi^{\hbar}_{\sigmazero,(\cdot)}\big)(\xi) \big\|_{L^2(\RE)} \\
&  \leq  \big\| \big(W_\beta^- \phih_{\sigmazero,(\cdot)}\big)(\xi) - \Omega_\alpha^{+} \big(S^{cl}_{\beta} \phi^{\hbar}_{\sigmazero,(\cdot)}\big)(\xi) \big\|_{L^2(\RE)}  +  \big\| P_{\alpha}  \big(W_\beta^- \phih_{\sigmazero,(\cdot)}\big)(\xi) \big\|_{L^2(\RE)}  +  \big\|  \Omega_\alpha^{+} P_{\alpha}  \big(S^{cl}_{\beta} \phi^{\hbar}_{\sigmazero,(\cdot)}\big)(\xi) \big\|_{L^2(\RE)} \,.
\end{aligned}
\]
On the one hand, using once more arguments analogous to those described in the proof of the bounds \eqref{cortez1} and \eqref{cortez2}, we get
\[
\big\| P_{\alpha} \big(W_\beta^- \phih_{\sigmazero,(\cdot)}\big)(\xi) \big\|_{L^2(\RE)}  \leq 
 C \Big( e^{- {m |\alpha|\,|q| \over 8 \hbar^2}} + e^{- \frac{q^2}{8 \hbar\sigmazero^2}}\Big)
\]
and
\[
\big\|  \Omega_\alpha^{+} P_{\alpha} \big(S^{cl}_{\beta} \phi^{\hbar}_{\sigmazero,(\cdot)}\big)(\xi) \big\|_{L^2(\RE)}  \leq 
\big\| P_{\alpha}  \big(S^{cl}_{\beta} \phi^{\hbar}_{\sigmazero,(\cdot)}\big)(\xi) \big\|_{L^2(\RE)}  \leq 
  C \Big( e^{- {m |\alpha|\,|q| \over 8 \hbar^2}} + e^{- \frac{q^2}{8 \hbar\sigmazero^2}}\Big)\,.
\]
On the other hand, since $ \Omega_\alpha^{+} S^{cl}_{\beta} \phi^{\hbar}_{\sigmazero,(\cdot)}(\xi)  =  S^{cl}_{\beta}  \Omega_\alpha^{+} \phi^{\hbar}_{\sigmazero,(\cdot)}(\xi)$ (due to the fact that the operators $ \Omega_\alpha^{+}$ and $S^{cl}_{\beta}$ act on different variables), on account of Remark \ref{r:Wpm-unitary} we infer 
\[\begin{aligned}
&  \big\| \big(W_\beta^- \phih_{\sigmazero,(\cdot)}\big)(\xi) - \Omega_\alpha^{+}  \big(S^{cl}_{\beta} \phi^{\hbar}_{\sigmazero,(\cdot)}\big)(\xi) \big\|_{L^2(\RE)}   \\
& = \big\|\big(W_\beta^-\phih_{\sigmazero,(\cdot)}\big)(\xi)  - \big(S^{cl}_{\beta} \Omega_\alpha^{+} \phi^{\hbar}_{\sigmazero,(\cdot)}\big)(\xi) \big\|_{L^2(\RE)}  \\ 
& =  \big\|\big(S^{cl}_{\beta} \big(\Omega_\alpha^{+} - W_\beta^+\big)\phi^{\hbar}_{\sigmazero,(\cdot)}\big)(\xi) \big\|_{L^2(\RE)}  \\
& \leq C\left(\big\| \big(\big(\Omega_\alpha^{+} - W_\beta^+\big)\phi^{\hbar}_{\sigmazero,(\cdot)}\big)(\xi) \big\|_{L^2(\RE)} + \big\| \big(\big(\Omega_\alpha^{+} - W_\beta^+\big)\phi^{\hbar}_{\sigmazero,(\cdot)}\big)(-\xi) \big\|_{L^2(\RE)} \right)  \\ 
& \leq C\left(\left(\! {\hbar \over (m|\alpha|\sigma_0)^{2/3} }\right)^{\!\!\frac32-\ve}\!  + \,e^{-{1 \over 2} \Big(\! {(m|\alpha|\sigma_0)^{2/3} \over \hbar }\!\Big)^{\!\!2\ve}}+ e^{-  {\sigmazero^2\, p^2 \over \hbar}} + e^{- \frac{q^2}{4\hbar\sigmazero^2}}\right)\,,
 \end{aligned}
 \]
where in  the last two inequalities we used $ |\big(S^{cl}_{\beta} f\big)(q,p) | \leq C \big(|f(q,p)|+|f(-q,-p)|\big)$ (see Eq. \eqref{Sbeta} and recall that $\big|1 - {2i\,|p| \over m\,\ac}\big|^{-1}\leq 1$) and the bound in Eq. \eqref{t2_1}. 

Summing up, the above estimates imply Eq. \eqref{t2_2}. 
\end{proof}
%ENDBLUE }

The proof of Corollary \ref{c:2} is similar to that of Corollary \ref{c:1} and we omit it.

\end{document}